\patchcmd{\algocf@makecaption@ruled}{\hsize}{\textwidth}{}{} %
\patchcmd{\@algocf@start}{-1.5em}{0em}{}{} %
\newtheoremstyle{own}%
{6pt}%
{3pt}%
{\itshape}%
{}%
{\bfseries}%
{.}%
{.5em}%
{}%
\theoremstyle{own}
\newcommand{\lIfElse}[3]{\lIf{#1}{#2 \textbf{else}~#3}}
\newcommand{\ps}[1]{\todo{\footnotesize\textbf{PS:} #1}}
\newcommand{\pr}{\ensuremath{\mathbb P}}
\newcommand{\ex}{\ensuremath{\mathbb E}}
\newcommand{\bigO}{\smash{\ensuremath{O}}}
\newcommand{\tilO}{\smash{\ensuremath{\widetilde O}}}
\newcommand{\p}{\ensuremath{\!+\!}}
\newcommand{\m}{\ensuremath{\!-\!}}
\newcommand{\poly}[1]{\text{poly}(#1)}
\newtheorem{theorem}{Theorem}
\newtheorem{lemma}{Lemma}[section]
\newtheorem{corollary}[lemma]{Corollary}
\newtheorem{definition}[lemma]{Definition}
\newtheorem{remark}[lemma]{Remark}
\begin{document}
\title{Byzantine Fault Tolerant Protocols with Near-Constant Work per Node without Signatures\thanks{Philipp Schneider has been supported by a grant from Avalanche, Inc.\ to the University of Bern. We thank Christian Cachin for valuable discussions.}}

\author[1,2]{Philipp Schneider}
\affil[1]{University of Bern} 
\affil[2]{\href{philipp.schneider2@unibe.ch}{philipp.schneider2@unibe.ch}}
 
\date{}
\maketitle              %
\begin{abstract}
	Numerous distributed tasks have to be handled in a setting where a fraction of nodes behaves Byzantine, that is, deviates arbitrarily from the intended protocol. Resilient, deterministic protocols rely on the detection of majorities to avoid inconsistencies if there is a Byzantine minority, which requires individual nodes to handle a communication workload that is proportional to the size of the network --- an intolerable disadvantage in large networks.	
		
	Randomized protocols circumvent this by probing only small parts of the network, thus allowing for consistent decisions quickly and with a high level of confidence with communication that is near-constant in the network size. However, such protocols usually come with the drawback of limiting the fault tolerance of the protocol, for instance, by severely restricting the number or type of failures that the protocol can tolerate.
	
	We present randomized protocols to reliably aggregate and broadcast information, form consensus and compute common coins that tolerate a constant fraction of Byzantine failures, do not require cryptographic signatures and have a near-constant time and message complexity per node.	
	Our main technique is to compute a system of witness committees as a pre-computation step almost optimally. This pre-computation step allows to solve the aforementioned distributed tasks repeatedly and efficiently, but may have far reaching further applications, e.g., for sharding of distributed data structures.
\end{abstract}

\ps{summary of parameters probably as some table in abstract}

\newpage

\tableofcontents

\newpage

\section{Introduction}

Distributed algorithms such as reliable broadcast and consensus are routinely used in applications such as maintaining replicated state machines, in particular for distributed databases, cloud computing and blockchains. When designing such algorithms, resilience to node failures is essential to ensure safety against technical malfunctions and adversarial attacks to maintain consistent system operation and trust.

Algorithms for consensus and reliable broadcast are widely used but inherently challenging, since fundamental impossibility results under various synchrony assumptions of the network \cite{DBLP:journals/jacm/FischerLP85, DBLP:journals/jacm/PeaseSL80} have to be taken into account. Most deterministic algorithms ensure consistency via quorums that intersect in non-failing nodes but are usually linear in size, thus limiting scalability. Scalable randomized approaches either sacrifice resiliency \cite{DBLP:conf/sirocco/AmoresSesarCS24, DBLP:conf/spaa/RobinsonSS18} or rely on computationally expensive cryptography \cite{DBLP:conf/sosp/GiladHMVZ17, DBLP:conf/opodis/BezerraAKS024}.

The main motivation of this paper is to address distributed problems such as consensus, reliable broadcast and computing common coins with following three main properties. 
\begin{itemize}
	\item The algorithms achieve near-constant work per node, meaning that the communication load per node grows negligibly with network size.
	\item Resilience against a constant fraction of Byzantine nodes that can deviate arbitrarily from the protocol, in a comparatively strong adversarial model.
	\item Avoiding cryptographic primitives such as public-key cryptography, signatures, threshold schemes or verifiable random functions. 
\end{itemize}
This has the following advantages. First, it allows scalability to massive networks where individual nodes have relatively low bandwidth. For example, for data aggregation in vast sensor networks or industrial internet of things \cite{DBLP:journals/cii/BoyesHCW18}, or for coordinating micro-grids and energy storage systems in decentralized energy grids, where thousands or even millions of nodes with limited computational power and bandwidth must collaborate while ensuring resilience towards node failures and minimizing overhead from cryptographic operations.

Second, we admit a constant fraction of Byzantine nodes and at the same time respect impossibility results by restricting the adversary in reasonable ways. Third, our approach eliminates computational overhead from the typically complex cryptographic operations the necessity to maintain a secret (which can be stolen) and reliance on certain assumptions, such as a computationally bounded adversary and the existence of trapdoor one-way functions required for public-key cryptography (certainly untrue if $P = NP$, but not necessarily true if $P \neq NP$). It also makes the solution future-proof in the event that quantum computers become practical, which will invalidate certain cryptographic protocols.

Typical distributed systems that run in large-scale networks require solving fundamental distributed problems such as consensus and reliable broadcast repeatedly throughout their lifetime to ensure their consistency. This raises the natural question whether the amortized cost can be decreased by dividing algorithms into a \textit{pre-computation} step and more efficient \textit{execution step} to solve each individual problem instance. 
We answer this question affirmatively by computing a generic structure called a \textit{system of witness committees} in the pre-computation step that is resilient to a constant fraction of adversarial nodes. We then leverage this structure in the execution step to solve aforementioned problems with near-constant time and workload per node.\footnote{While witness committees have been used in the context of reliable broadcast and consensus before to minimize total message complexity \cite{DBLP:conf/sosp/GiladHMVZ17, DBLP:conf/opodis/BezerraAKS024}, these approaches typically use only a single witness committee, have up to linear workload per node and rely heavily on cryptographic methods to obtain them. By contrast, we aim to compute a large number of such committees almost optimally without such cryptographic methods.} 

The pre-computation step establishes a large number of small node sets, the witness committees, each containing a majority of non-failing nodes, and makes them known network-wide. 
We show that this approach allows for a degree of parallelization that significantly improves efficiency in terms of round complexity and per-node workload in the execution step. Systems of witness committees have potential applications beyond the problems discussed here, perhaps most prominently for sharding of blockchains.

The separation into a pre-computation and execution step also introduces a degree of modularity. Notably, any assumptions that we make about the adversary are confined to the pre-computation phase to obtain a system of witness committees, as the execution phase relies only on the existence of such witness committees. This also admits the system to be permissionless to a certain degree after the pre-computation phase, since nodes can be added without jeopardizing the honest majorities within the committees. Furthermore, we believe that this generic approach allows for easier adaptation to varying performance requirements or modeling assumptions in both the pre-computation and execution phases, facilitating further improvements in future research.

\subsection{Contributions}

We start with some basic definitions sufficient to state our contributions.
We consider a network of $n$ nodes which we denote with the set $V$ (standing for vertices, or validators). The set of nodes $V = V_h + V_b$ is partitioned into a set of \textit{honest} nodes $V_h$ that adhere to the protocol and \textit{Byzantine} nodes $V_b$ that may deviate arbitrarily from the protocol. Besides an upper bound $t \geq |V_b|$, this partition is unknown to honest nodes. We consider a worst case adversary controlling the Byzantine nodes $V_b$, which can be seen as the nodes $V_b$ colluding in order to undermine the protocol.

The network is assumed to have reliable, private point-to-point links between any pair of nodes, meaning that we assume messages are not modified when sent over the network and a receiver can always identify the sender. In this paper, the (communication) workload per node refers to the number of bits a node has to send during an algorithm.
Specifically, we assume that nodes have a limited bandwidth for sending information formalized with $\sigma$ denoting the number of bits each node may send per round. Nodes are capable to receive any number of bits per round, but our algorithms ensure that without traffic from Byzantine nodes the number of bits received per round is $\bigO(\sigma)$. 

This paper focuses on distributed algorithms that optimize the communication among nodes, therefore we assume that computations based on local data are completed instantaneously. We do not exploit this assumption as our local computations are relatively light-weight, in particular our algorithms do not require computation heavy cryptographic signatures. Furthermore, we focus on providing algorithms that are \textit{almost optimal}, i.e., optimal up to factors that are polynomial in $\log n$. For this reason we use the $\tilO(\cdot)$ notation that suppresses any $\log n$ terms to simplify this part of the introduction (references to the detailed results are provided). Specifically, the expression \textit{near-constant} can be equated to $\tilO(1)$.

\paragraph{Part 1: System of Witness Committees} 
In the first part of this article, we present our main technical contribution: an efficient method for computing a \textit{system of witness committees}. The concept of witness committees is crucial because it provides quorums of very small size, minimizing communication overhead in the Byzantine setting \cite{DBLP:conf/sosp/GiladHMVZ17, DBLP:conf/opodis/BezerraAKS024}. Here, we compute a linear number of such committees with specific properties as a pre-computation step. As we discuss in the second part, this approach enables efficient solutions to various distributed problems with low per-node workload, and it has further applications beyond those examples.

Informally, a witness committee for a node $u \in V$ is a near-constant-size subset of nodes associated with $u$. A complete system provides such committees for a constant fraction of nodes, ensuring that no node belongs to too many committees and that each committee has a majority of honest (non-Byzantine) members. Moreover, all honest nodes must roughly agree on these committees: for every valid committee, their local views share the same common core of honest nodes, differing only by a small number of Byzantine ones. Any node $u$ without a committee is considered \emph{invalid} and is recognized as such by all nodes. A formal definition appears in Definition~\ref{def:witness_committees}.

Unfortunately, we cannot hope to compute such a system of witness committees deterministically, even if the adversary’s capabilities are limited or the number of Byzantine nodes is small. Any such deterministic procedure would require foreknowledge of how Byzantine nodes are distributed across the network, which contradicts the Byzantine model. This is why randomization is crucial. A common idea, used in prior work (e.g., \cite{DBLP:journals/corr/abs-1906-08936, DBLP:conf/spaa/DoerrGMSS11, DBLP:conf/sosp/GiladHMVZ17, DBLP:conf/opodis/BezerraAKS024}), is that a relatively small random sample tends to have a Byzantine-to-honest ratio mirroring that of the entire network.

However, a simplistic approach---where a single node creates a random sample and shares it with everyone---conflicts with our goal of establishing a complete \textit{system} of witness committees that (A) ensures agreement among all honest nodes, (B) achieves asymptotically optimal time and message complexity, and (C) avoids heavy cryptographic machinery. Naive methods are easily undermined by an adversary controlling a constant fraction of the nodes, leading to disagreement about the composition of committees or committees with a Byzantine majority---an outcome that is typically mitigated by costly cryptographic signatures or all-to-all broadcasts.

Our solution constructs witness committees through multiple phases, each producing a more refined system that meets increasingly strict requirements. The reason for splitting the process into phases is that we cannot afford expensive all-to-all communication, where every node contacts a linear-sized quorum, just to construct a single committee. Furthermore, we cannot rely on information relayed by individual nodes in the absence of cryptographic signatures. Instead, each phase leverages the properties established in the previous one to operate more efficiently. However, we do require that the adversary controls at most a constant fraction of the nodes \textit{and} the total communication bandwidth.

This bandwidth restriction reflects a realistic limit on the adversary’s resources. Without it, an adversary controlling a large fraction (or majority) of total bandwidth could easily drown out any agreement, akin to a denial-of-service attack. In our algorithm, this limitation prevents the adversary from subverting too many committees in the first phase. Concretely, the adversary can only force a committee to become ``invalid'' or allow it to have an honest majority. Further, the adversary must ``expend'' substantial bandwidth to invalidate a single committee, consequently a constant fraction of committees remains valid.

The main technical effort takes place in two subsequent phases, which refine the preliminary
witness committees from the first phase. In the second phase, we establish agreement on which committees are valid or invalid by having nodes determine which committees are ``supported'' by 
sufficiently many others, while communicating with only a small subset of nodes. In the third phase, we propagate knowledge of committees deemed valid by a majority to those that do not yet know them. Overall, we prove the following theorem (a simplified version of Theorem~\ref{thm:set_up_committee}).

\begin{theorem}
	\label{thm:witness_committees}
	Given a Byzantine adversary that controls a constant fraction of nodes and overall bandwidth. There is a Monte Carlo algorithm to compute a system of witness committees (formally given in Definition \ref{def:witness_committees}) with $\Theta(n)$ witness committees of near-constant size in $\tilO(n)$ rounds and $\tilO(1)$ communication workload per node per round without using signatures.
\end{theorem}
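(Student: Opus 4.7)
The plan is to prove the theorem by designing a Monte Carlo algorithm in three successive phases, each operating within $\tilO(n)$ rounds at $\tilO(1)$ bits per node per round, and each strengthening the structural guarantees on the emerging system. The invariants I aim to establish progressively are: there are $\Theta(n)$ near-constant-sized candidate committees; each surviving committee has an honest supermajority; all honest nodes agree on which nodes are valid and on each valid committee up to a small Byzantine-controlled deviation; and no node sits in too many committees. These together match Definition \ref{def:witness_committees}.

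The first phase produces preliminary candidates by random sampling. Every node $u$ is associated with a sample $C_u$ of size $\Theta(\log n)$ drawn (essentially) uniformly from $V$, generated via some public or jointly-produced randomness, and $C_u$ is partially disseminated to a local neighborhood. A Chernoff bound yields that each $C_u$ inherits the global Byzantine fraction with probability $1 - n^{-\Omega(1)}$, so by a union bound essentially all $C_u$ with honest $u$ have an honest supermajority. The adversary can choose $C_u$ arbitrarily for Byzantine $u$ and interfere with dissemination, but its total bandwidth is only a constant fraction of $n\sigma$ per round, so it can invalidate only a constant fraction of the candidates. The output is $\Theta(n)$ preliminary committees, a constant fraction of which are \emph{good}, i.e., have an honest supermajority and are locally known consistently to a constant fraction of honest nodes, though without any global agreement yet.

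The main obstacle is the second phase: reaching agreement on validity and membership without cryptographic signatures. The approach is to have each honest node $v$ repeatedly probe small random subsets of $V$, asking whether they accept a given $C_u$; since committees are of near-constant size, each response fits in $\tilO(1)$ bits. Node $v$ declares $C_u$ valid once it has accumulated consistent support from a sufficiently large fraction of samples, with the threshold tuned so that (i) every committee already accepted by a constant fraction of honest nodes eventually becomes accepted by all honest nodes, and (ii) no Byzantine-majority committee can accrue the required honest endorsements, since the adversary's bandwidth cannot sustain that many fake support claims. Correctness in both directions reduces to concentration bounds over the random queries; the complexity requires careful scheduling so that the large number of total support queries distributes evenly across $\tilO(n)$ rounds and $n$ nodes, bounding the per-round load at $\tilO(1)$.

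The third phase is a pipelined broadcast that forwards each committee now deemed valid by a majority of honest nodes to the remaining honest nodes that have not yet learned it. A forwarded claim can be verified by polling the claimed committee and taking a majority vote, which is safe because every valid committee has an honest supermajority; this replaces the role that signatures would otherwise play. Transmitting $\Theta(n)$ committees of near-constant size to all $n$ nodes amounts to $\tilO(n)$ bits of total load per node over $\tilO(n)$ rounds, consistent with the $\tilO(1)$ per-round budget. Finally, to enforce the load-balancing requirement that no node sits in too many committees, a sampled $C_u$ that would violate the load cap is discarded and $u$ is declared invalid; since each node appears in each committee with probability $\Theta(\log n / n)$, a standard balls-in-bins argument shows that only a vanishing fraction of committees are discarded this way. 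Combining the three phases yields the claimed Monte Carlo algorithm, establishing Theorem \ref{thm:witness_committees}.
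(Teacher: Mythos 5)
Your three-phase outline matches the paper's overall shape (random sampling, bandwidth limits on how many committees can be invalidated, agreement on validity, propagation to all nodes), but two of the key technical ideas that make the paper's construction work are absent, and the substitutes you propose would fail.

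First, your Phase 1 assumes the committee $C_u$ is drawn ``via some public or jointly-produced randomness.'' No such shared randomness exists in this model --- producing a common coin is one of the downstream applications of the committees, so relying on it here is circular, and having $u$ itself sample and disseminate $C_u$ lets a Byzantine $u$ choose a rotten committee or equivocate. The paper sidesteps this entirely by inverting who samples: each node $v$ locally samples a set $M_v$ of committees it \emph{wants to join} and announces that directly to every node. Because announcements travel over authenticated point-to-point links, each node $w$ can trust $v$'s own membership claim without relay; the honest members of any committee are therefore a simple random subset, and Byzantine nodes can only pad committees, which the bandwidth cap limits. This ``membership by self-selection'' is the central idea of Phase A and your proposal does not contain it.

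Second, your Phase 2 tries to reach agreement on committee validity by per-node random probing with a tuned threshold. That cannot yield exact agreement: the adversary controls the precise amount of Byzantine ``support'' and can park it at the decision boundary, after which sampling noise makes some honest nodes accept and others reject --- exactly the consensus impossibility you are trying to avoid. The paper handles this critical range explicitly: nodes in the (already small, honest-supermajority) common core $C_u$ first compute sets $B'_{uv}$ that each contain $C_u$ with at most a one-third Byzantine fraction (Lemma~\ref{lem:set_B}), and then run a deterministic Byzantine consensus protocol (Garay--Moses, Lemma~\ref{lem:garay}) inside $C_u$ via the simulation argument of Lemma~\ref{lem:simulation_on_core} to settle the validity bit. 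Without some such exact consensus subroutine in the ambiguous range, the Agreement property of Definition~\ref{def:witness_committees} is not met. (Your Phase~3 also has a soft spot --- verifying a forwarded committee ``by polling the claimed committee'' fails when the claim itself names an all-Byzantine set, which will happily confirm itself; the paper instead has each node sample responders independently from the whole network and aggregate, never trusting a single suggested set --- but the Phase~1 and Phase~2 gaps are the ones that break the proof.)
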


Note that such a system of $\Omega(n)$ random witness committees has an overall Shannon entropy of at least $\Omega(n)$. 
Therefore, if we restrict communication per node to just $\tilO(1)$ this strictly requires $\widetilde \Omega(n)$ rounds and exchanged bits such that any given node learns each witness committee (even just approximately). Consequentially, Theorem \ref{thm:witness_committees} is almost at the theoretical limit (i.e., up to $\log n$ factors) of the required time and message complexity such that every node learns such a randomized system of witness committees.

In this work, we restrict ourselves to demonstrating that a system of witness committees with the desired characteristics can be computed almost optimally with a \textit{constant} fraction of adversarial nodes and controlled bandwidth. Our current assumption of the ratio of Byzantine control over the nodes and the total bandwidth is $1/24$ (i.e., $n/24$ and $\sigma n/24$, respectively). Note that this fraction is \textit{not} fully optimized, since we opt for a streamlined presentation over optimizing the Byzantine ratio, wherever such a choice had to be made. 

Essentially, whenever some ``slack'' is required to account for effects of randomization or Byzantine influence we choose the next ``nice'' fraction instead of a full optimization via an additional parameter describing this ``slack''. Since such situations occur multiple times throughout our analysis (with complex inter-dependencies) and because the effects are compounding in a multiplicative way, the resulting Byzantine ratio is significantly lower than what it could be if carefully optimized. We leave that open for future work.

\paragraph{Part 2: Applications for Systems of Witness Committees} In the second part of this work, we assume that a system of witness committees is given, and exploit it to solve a variety of distributed problems in the Byzantine setting with near-constant time and work per node without using signatures. First we consider the reliable broadcast problem which is defined as follows.

\begin{definition}[Reliable Broadcast]
	\label{def:reliable_broadcast}
	Let $s \in V$ be  a dedicated sender node that has some message. The reliable broadcast problem is solved if the following holds.
	\begin{itemize}
		\item \textbf{Integrity}: Every honest node $v \in V_h$ delivers at most one message.
		\item \textbf{Validity}: If $s$ is honest and has a message $M$, then every $v \in V_h$ eventually delivers $M$.
		\item \textbf{Agreement}: If any $v \in V_h$ delivers $M$, then all $v \in V_h$ deliver $M$.
	\end{itemize}	
\end{definition}

A system of witness committees as described above can be used for reliable communication without signatures as follows. First we implement a reliable broadcast between single nodes and committees and then between two committees. Then we construct a tree structure among the valid committees exploiting the common knowledge of all valid committees.
Relying on these subroutines, we can efficiently broadcast a message through the tree-structure among committees with near-constant work per node. 

The following theorem summarizes the more general statement of Theorem~\ref{thm:reliable_broadcast}. Crucially, all randomization and any assumptions about adversaries or synchrony are required only during the pre-computation phase for the system of witness committees. As a result, the broadcast algorithm itself is deterministic and designed for an asynchronous network.

\begin{theorem}
	\label{thm:reliable_broadcast_simplified}
	Given a system of witness committees there is a deterministic, reliable broadcast problem from Definition \ref{def:reliable_broadcast} with $\tilO(\delta)$ communication workload per node\footnote{Assuming near-constant message size. Else, workload per node is proportional to  message size.} that works even in the asynchronous setting and does not require signatures. In the partially synchronous model it takes $\bigO(\log_\delta n)$ rounds after global stabilization.
\end{theorem}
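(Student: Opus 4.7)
The plan is to prove the theorem in layers: first build primitives for reliable communication between a single node and a committee, and between two committees; then arrange the $\Theta(n)$ valid committees into a deterministic $\delta$-ary tree using the common knowledge established by the system of witness committees; and finally broadcast the message level by level through this tree, with a last hop that disperses the message from the leaf committees to any honest nodes not covered by a committee.

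For the node-to-committee primitive, a single node $u$ sends its message $M$ to every node in a target committee $C$ (according to $u$'s local view of $C$), and each honest $v \in C$ simply adopts the value it receives from $u$. For the committee-to-committee primitive $C_1 \to C_2$, each member of $C_1$ sends its current value to each member of $C_2$; an honest $v \in C_2$ waits until it has received strictly more than $|C_1|/2$ matching messages from nodes in its view of $C_1$, and then adopts that majority value. Because $C_1$ has an honest majority and, by induction on tree depth, all honest members of $C_1$ hold the same value, any threshold any honest $v \in C_2$ ever crosses must be on the correct value; the Byzantine minority in $C_1$ cannot form a conflicting majority alone. This is the crucial point that lets us avoid signatures entirely: the honest-majority invariant inside every committee, combined with the fact that all honest nodes agree on the honest core of each committee (Definition~\ref{def:witness_committees}), serves as a non-cryptographic authenticator.

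Using the common identifier set of valid committees that is shared by all honest nodes, I fix a canonical $\delta$-ary tree $T$ whose internal nodes are the $\Theta(n)$ valid committees (e.g., ordered by identifier). The sender $s$ first uses the node-to-committee primitive toward the root of $T$; each committee then, upon deciding on $M$, applies the committee-to-committee primitive toward each of its at most $\delta$ children in $T$. Finally, since witness committees only cover a constant fraction of $V$, I assign each honest node $v$ deterministically to one valid committee (e.g., by a canonical hash of identifiers), and have each committee disseminate $M$ to its assigned nodes in a last hop analogous to the committee-to-node direction of the first primitive. Integrity follows because each honest node only delivers the first value it decides on; Agreement follows because the committee-level majority threshold admits only one value; Validity follows because an honest $s$ seeds the root with a well-defined $M$ that then propagates unaltered.

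The main obstacle is achieving Agreement in the asynchronous case without signatures, where naive relaying would be vulnerable to equivocation. The committee-level majority threshold resolves this: it is reachable (the honest majority alone suffices even if all Byzantine members stay silent) and unforgeable (the Byzantine minority alone cannot reach it), so any two honest nodes that deliver must deliver the same value, and one honest delivery forces all honest members of the same committee---and, propagating along $T$ and to the assigned receivers, all honest nodes---to eventually deliver as well. For the complexity accounting, every node sits in $\tilO(1)$ committees, and each committee in $T$ forwards to at most $\delta$ children of size $\tilO(1)$, so each node sends $\tilO(\delta)$ bits over the whole protocol. In the partially synchronous model, each tree level costs $O(1)$ rounds once messages arrive in bounded time, yielding the stated $\bigO(\log_\delta n)$ round bound after global stabilization.
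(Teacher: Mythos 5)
Your overall architecture mirrors the paper's: build node-to-committee and committee-to-committee primitives, arrange the valid committees into a canonical $\delta$-ary tree using the shared knowledge of which $u \in V$ have valid committees, and flood the message along the tree. However, both of your base primitives are missing a crucial intra-committee consensus step, and without it Integrity and Agreement break in the asynchronous setting.

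Consider your node-to-committee primitive, where ``each honest $v \in C$ simply adopts the value it receives from $u$.'' A Byzantine sender $u$ can send $M_1$ to half the honest members of $C$ and $M_2$ to the other half, destroying the invariant ``all honest members of the committee hold the same value'' at the very base of your induction. The paper handles exactly this by running a Bracha-style echo/vote/vote protocol (\emph{lazy consensus}, Algorithm~\ref{alg:lc}) inside the committee after the initial transmission: Lemma~\ref{lem:lazy_consensus} guarantees that either all honest members of $C_u$ decide the same value, or none decides. Your committee-to-committee primitive has the same gap. Once the upstream invariant fails and the honest part of $C_1$ is split, a Byzantine minority plus one honest half can push an honest $v \in C_2$ over your $|C_1|/2$ threshold for $M_1$ while another honest $w \in C_2$ crosses it for $M_2$ --- recall that the adversary can populate $C_{1v}$ and $C_{1w}$ with different Byzantine identities --- or, more mildly, one honest receiver crosses the threshold while another never does, which already violates Agreement for Definition~\ref{def:reliable_broadcast}. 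The paper restores the invariant by following the committee-to-node transmission with a lazy consensus inside $C_r$ (Algorithm~\ref{alg:bc_committees}). Your observation that the threshold is ``reachable and unforgeable'' is correct only \emph{given} the inductive hypothesis; the lazy-consensus rounds are exactly what establish and preserve that hypothesis level by level, and they are what your proposal omits. Your direct-to-root routing (rather than the paper's leaf-to-root, then root-to-leaves relay) and your leaf assignment are harmless variations, and the complexity accounting is essentially right, but the argument is not sound as stated until the intra-committee Bracha step is reinstated.
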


This result is parameterized by the degree $\delta$ of the broadcast tree among witness committees and reflects a fundamental trade-off between per-node workload (proportional to $\delta$) and round complexity (the broadcast tree's height). Choosing constant \smash{$\delta > 1$}, solves reliable broadcast with $\tilO(1)$ workload per node in $\bigO(\log n)$ rounds satisfying the characteristics advertised in the title. Other important inflection points are \smash{$\delta = n^{1/\log \log n}$}, which yields a sublinear workload per node and a round complexity of $\bigO(\log \log n)$  and $\delta = n^\varepsilon$ for arbitrarily small constant $\varepsilon > 0$, which even ensures a constant round complexity of \smash{$\bigO\bigl(\tfrac{1}{\varepsilon}\bigr)$}.

We can also apply our subroutines for reliable broadcast among single nodes and committees through the broadcast tree to solve the \emph{reliable aggregation} problem. Here, some subset of nodes contributes input values to a multivariate function whose partial computations can be performed in any order (e.g., minimum, sum, XOR; see Definition~\ref{def:reliable_aggregation}). Each node first commits its value to a dedicated witness committee via reliable broadcast, which can aggregate values correctly due to their honest majorities. This process is repeated by aggregating up the broadcast tree. The final result is broadcast from the root, down the tree, to all nodes. We require a synchronous network so that any node (including Byzantine ones) wishing to participate must commit their value within a specified time bound. We obtain the following theorem (a simplification of Theorem~\ref{thm:reliable_aggregation}).

\begin{theorem}
	\label{thm:reliable_aggregation_simplified}
	Given a system of witness committees as in Definition \ref{def:witness_committees}. We can solve \ref{alg:bc} the reliable aggregation problem  (see Definition \ref{def:reliable_aggregation}) in the synchronous setting without using signatures with $\tilO(\delta)$ total communication workload per node in $\bigO(\log_\delta n)$ rounds.
\end{theorem}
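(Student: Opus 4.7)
The plan is to build on the reliable broadcast primitive of Theorem \ref{thm:reliable_broadcast_simplified} and, crucially, on the two subroutines it relies on: a reliable broadcast from a single node to a committee, and one between two committees. Given these tools plus the tree structure over valid committees with branching factor $\delta$, I would solve reliable aggregation in three phases: a commit phase, a bottom-up aggregation along the committee tree, and the top-down broadcast of the final value.

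In the commit phase, each node that wishes to contribute an input reliably broadcasts its value to a designated witness committee (e.g., the committee indexed by a deterministic function of the node's identifier, so that all honest nodes agree on the destination). Because the source-to-committee broadcast is a $\tilO(1)$-work, $\tilO(1)$-round subroutine, the per-node load in this phase stays within budget. The synchronous assumption is used here in an essential way: we fix a global deadline after which further commits are discarded by the receiving committees. This guarantees that every node --- honest or Byzantine --- contributes at most one value, securing the \emph{integrity}-style property of the aggregation problem.

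In the aggregation phase, I reuse the same $\delta$-ary tree of witness committees from the broadcast construction. Starting at the leaves, each committee first combines the inputs that were committed to it by individual nodes and then, level by level, receives the partial aggregates from its (up to $\delta$) child committees via committee-to-committee reliable broadcast, combining them with its own partial aggregate. Since the aggregation function admits partial computations in any order (Definition \ref{def:reliable_aggregation}), every honest member of a parent committee independently computes the same partial aggregate, so the committee can speak with one voice upwards. The tree has height $\bigO(\log_\delta n)$ and each level requires a constant number of rounds of committee-to-committee broadcast carrying a single aggregate per child, which amounts to $\tilO(\delta)$ work per node summed over the whole phase. Once the root holds the final value, a single top-down invocation of Theorem \ref{thm:reliable_broadcast_simplified} delivers it to all honest nodes in $\bigO(\log_\delta n)$ rounds with $\tilO(\delta)$ per-node work, matching the claimed bounds.

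The main obstacle is ruling out two kinds of adversarial interference: equivocation by Byzantine contributors during the commit phase and manipulation of partial aggregates by Byzantine members inside a committee. The first is defused by the synchronous deadline together with the agreement property of single-node-to-committee reliable broadcast, which forces every accepted commit to be seen identically by essentially all honest members of the target committee. The second is absorbed by the honest majority inside each committee: honest members locally compute the same partial aggregate, and the committee-to-committee subroutine transmits only the majority value, so a constant Byzantine fraction per committee cannot corrupt the upward flow. Correctness and agreement on the final aggregate then follow inductively along the tree from the guarantees of Theorem \ref{thm:reliable_broadcast_simplified} and its underlying committee-level broadcast primitives.
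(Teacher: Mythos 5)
Your proposal follows essentially the same route as the paper: commit values to the parent committee in the broadcast tree via node-to-committee reliable broadcast under a synchronous deadline, aggregate bottom-up along the $\delta$-ary tree using committee-to-committee reliable broadcast, and broadcast the root's result back down, with correctness resting on the honest majority in each common core and the agreement/validity guarantees of the underlying subroutines. The only cosmetic difference is that you describe the commit target as "a committee indexed by a deterministic function of the identifier," whereas the paper uses the parent in the already-agreed broadcast tree, but these amount to the same deterministic assignment.
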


Our reliable aggregation routine can be used for computing a common coin beacon, which produces a shared coin in $\tilO(1)$ rounds and $\tilO(1)$ communication per node by taking the XOR of all random local coins that each node commits. As soon as one honest node's coin is included, the final coin is completely random. Besides synchrony in the first round, so all honest nodes can commit their random bits to the witness committee correctly, we also assume a 1-late adversary (Definition~\ref{def:late_adversary}), meaning it does not know honest nodes' random bits before they are committed. Under these conditions, we obtain the following theorem (a simplified version of Theorem~\ref{thm:common_coin}).

\begin{theorem}
	\label{thm:common_coin_simplified}
	Given a system of witness committees as in Definition \ref{def:witness_committees}. We can compute a common random coin in the synchronous setting with a 1-late adversary without using signatures with $\tilO(\delta)$ total communication workload per node in $\bigO(\log_\delta n)$ rounds.
\end{theorem}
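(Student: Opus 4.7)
The plan is to instantiate the reliable aggregation routine of Theorem~\ref{thm:reliable_aggregation_simplified} with bitwise XOR as the commutative and associative aggregation function. In the first, synchronous round every honest node $v \in V_h$ draws an independent uniformly random bit $r_v$ (or a short random string, if a longer coin is desired) and commits it as its input to its dedicated witness committee; Byzantine nodes may commit arbitrary bits or none at all. The common coin is defined as the XOR aggregate produced by the routine and then broadcast back down the committee tree to all nodes.

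First I would invoke Theorem~\ref{thm:reliable_aggregation_simplified} to obtain, in $\bigO(\log_\delta n)$ rounds and $\tilO(\delta)$ workload per node, a value $R = \bigoplus_{v \in V^\ast} b_v$ on which every honest node agrees, where $V^\ast \subseteq V$ denotes the set of nodes whose contributions were committed. Because the first round is synchronous, every honest node meets the commitment deadline, so $V_h \subseteq V^\ast$ and $b_v = r_v$ for all $v \in V_h$. Agreement on $R$ across honest nodes, as well as the correctness of the aggregation within each committee, follows from the honest majority in each witness committee and the agreement guarantee of reliable aggregation.

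The main obstacle, and the step that needs the extra modeling assumption, is showing that $R$ is uniformly distributed and cannot be biased by the adversary. Here the 1-late adversary assumption of Definition~\ref{def:late_adversary} is essential: Byzantine nodes must fix their committed bits $\{b_v : v \in V_b \cap V^\ast\}$ without knowledge of any $r_v$ with $v \in V_h$. Hence the family $\{r_v\}_{v \in V_h}$ remains jointly independent of $\{b_v\}_{v \in V_b \cap V^\ast}$, and since $|V_h| \geq 1$ the XOR of an arbitrary bit string with at least one uniform independent bit is itself uniform; thus $R$ is uniformly random. Honest nodes that did not personally contribute still learn $R$ through the downward broadcast along the committee tree, which is part of the reliable aggregation primitive.

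Finally, the round and communication bounds are inherited verbatim from Theorem~\ref{thm:reliable_aggregation_simplified}, and no cryptographic signatures appear at any stage since the underlying aggregation routine is signature-free. I expect the proof to be short in length; the only delicate point is carefully separating the \emph{unbiasedness} argument (which needs the 1-late adversary and first-round synchrony) from the \emph{agreement} argument (which needs only the witness-committee system and the deterministic asynchronous aggregation tree).
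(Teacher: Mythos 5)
Your proposal matches the paper's proof essentially verbatim: both instantiate the reliable aggregation routine of Theorem~\ref{thm:reliable_aggregation_simplified} with bitwise XOR, use first-round synchrony to guarantee that every honest node's coin is committed, and invoke the 1-late adversary assumption to argue that the Byzantine contributions are independent of the honest ones, so that a single honest uniform bit suffices to make the XOR uniform. The performance and signature-freeness claims are likewise inherited from the aggregation theorem in both arguments, so your proof is correct and takes the same route.
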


Another application of our reliable aggregation protocol using a system of witness committees is a solution for the multi-value consensus problem with the desired characteristics. In the consensus problem each node proposes one of $k$ values and eventually all must decide one and the same value. If all honest nodes propose the same value, then this value must be decided upon (see Definition \ref{def:consensus}). What makes this problem hard is that each node is forced decide a value eventually. By comparison, in the reliable broadcast problem we only require a value to be delivered if the sender is honest (else nothing may ever happen).

The consensus problem is famously known to be intractable in the asynchronous, deterministic setting \cite{DBLP:journals/jacm/FischerLP85} and even has a $\widetilde{\Omega}\big(\!\sqrt{n}\big)$ round lower bound in the synchronous, randomized setting where a constant fraction of nodes is Byzantine \cite{DBLP:conf/podc/Bar-JosephB98}. However, given that we \emph{already have} a system of witness committees from a pre-computation step, we can actually bypass these lower bounds and solve the consensus problem with $<n/2$ Byzantine nodes deterministically in $\tilO(1)$ rounds with $\tilO(1)$ work per node (as many times as we desire). This means that we can shift the hard requirements of solving consensus regarding synchrony, randomization and adversary threshold onto the pre-computation of witness committees.

Roughly, we use the reliable aggregation as subroutine to count the number of different proposals in the network. Knowing that each node obtains the same count of proposals from the reliable aggregation is immensely helpful, since if we see an absolute majority we can immediately decide it since it must have been proposed by at least one honest node. Else we can be sure that not all honest nodes proposed the same value, and decide a ``default'' value. The following theorem is a simplified version of Theorem \ref{thm:consensus}.

\begin{theorem}
	\label{thm:consensus_simplified}
	Given a system of witness committees as in Definition \ref{def:witness_committees} and $t<\tfrac{n}{2}$ Byzantine nodes, we can solve the consensus problem in the synchronous setting without using signatures with $\tilO(\delta)$ total communication workload per node in $\bigO(\log_\delta n)$ rounds.
\end{theorem}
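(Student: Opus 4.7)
The plan is to reduce the consensus problem to reliable aggregation (Theorem~\ref{thm:reliable_aggregation_simplified}) by aggregating the vote distribution. Let $k$ denote the number of possible proposal values. Each honest node $v$ with proposal $x_v \in \{1,\ldots,k\}$ contributes the indicator vector $e_{x_v} \in \{0,1\}^k$, and I take the aggregation function to be componentwise addition, which is commutative and associative and therefore compatible with tree-based aggregation through the witness committees. Since the base step of reliable aggregation invokes reliable broadcast from each node to its dedicated witness committee, and this broadcast satisfies integrity (Definition~\ref{def:reliable_broadcast}), each node---honest or Byzantine---is counted at most once in the aggregate, which prevents vote inflation without requiring signatures.

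After the aggregation, every honest node holds the identical count vector $c = (c_1, \ldots, c_k)$, so I apply a purely local, deterministic decision rule: if some $c_i > n/2$, decide $i$; otherwise decide a fixed default value $\bot$. Agreement is immediate, because all honest nodes see the same $c$ and apply the same rule, and $\sum_i c_i \leq n$ ensures that at most one index can exceed $n/2$, so the decision is well-defined. For validity, if every honest node proposes the same value $v^{*}$, then $c_{v^{*}} \geq n - t > n/2$ because $t < n/2$, so the majority branch triggers and all honest nodes decide $v^{*}$. Termination is inherited from termination of the reliable aggregation subroutine. The claimed $\bigO(\log_\delta n)$ rounds and $\tilO(\delta)$ workload per node follow directly from Theorem~\ref{thm:reliable_aggregation_simplified}, with the $k$-dimensional count vector encoded in $\bigO(k \log n)$ bits (absorbed into the $\tilO$ when $k = \poly{\log n}$).

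The main obstacle I anticipate is not the algorithmic design but rather ensuring that the reduction is tight under Byzantine behavior: specifically, that Byzantine nodes cannot cause different honest nodes to obtain different count vectors, and that individual votes cannot be double-counted or silently dropped as they pass through committees with Byzantine minorities. Both guarantees are inherited directly from reliable aggregation---the first from its agreement property through the broadcast tree, the second from the integrity of the underlying reliable broadcast between nodes and witness committees---so once aggregation is assumed, the consensus algorithm is essentially immediate. The deeper conceptual point, worth emphasizing in the write-up, is that classical lower bounds for consensus (such as the \smash{$\widetilde{\Omega}(\sqrt{n})$} round bound) are bypassed here precisely because all the ``hard'' requirements of randomization, synchrony, and adversary resilience have been absorbed into the pre-computation phase that produced the system of witness committees.
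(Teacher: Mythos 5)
Your proposal is correct and follows essentially the same route as the paper: use reliable aggregation to compute a network-wide count of the proposals, then apply a local deterministic rule that decides the majority value if one count exceeds the honest-majority threshold and otherwise falls back to a default. The only (cosmetic) deviation is that the paper's Definition~\ref{def:consensus} requires the decided value to come from the proposal set $\{x_1,\dots,x_k\}$, so the default should be a deterministic function of the count vector (e.g., the plurality value with lexicographic tie-breaking, as the paper does) rather than a fresh symbol $\bot$.
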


\subsection{Preliminaries}

\textbf{Synchrony assumptions.} In a \emph{synchronous network} we assume that there are known and fixed bounds on message delivery times, meaning all events occur within predictable time limits. \emph{Partial synchrony} relaxes these assumptions by acknowledging that such bounds may exist but are unknown or only hold after some unknown global stabilization time. \emph{Asynchrony}, in contrast, imposes no assumptions on timing, allowing for arbitrary and unbounded delays in message delivery, which is the most challenging setting as it is impossible to distinguish between slow and failed components \cite{DBLP:journals/jacm/FischerLP85}.

\textbf{Adversary Limitations.} We consider a Byzantine adversary that controls a fixed set $V_b \subseteq V$ of up to $t$ nodes, for some $t \leq n$. The actual number of Byzantine nodes $|V_b| \leq t$ is unknown to honest nodes. The general goal of this work is to provide resilience for a linear number of adversarial nodes, i.e., $t \in \Theta(n)$. 
In the following, we discuss different limitations of an adversary that controls Byzantine nodes.
It is important to note that we will not require all of these limitations simultaneously for any of the algorithms we consider. Further, whenever we set (reasonable) limitations on the adversary, these are typically strictly required to enforce an algorithmic idea or due to known impossibility results.

\begin{definition}[Adversary Share of Bandwidth]
	\label{def:msg_restricted}
	Assume a synchronous network and let $\sigma$ be the (minimum) number of bits that any honest node is capable to send per round. We define the honest bandwidth as $B = n \cdot \sigma$, which describes the number of bits that honest nodes $V_h$ can send each round assuming all are honest. For $b \in [0,1)$ we call an adversary $b$-bandwidth restricted if Byzantine nodes have at most a $b$-fraction of the bandwidth $B$, that is, all nodes $V_b$ send at most $b \cdot B$ bits per round \emph{in total}.
\end{definition}

\begin{definition}[Late Adversary]
	\label{def:late_adversary}
	Let $r$ be the current round. We call an adversary $\ell$-late if it knows the local states of all honest nodes (including local random coins) that were computed in round $r-\ell$ and earlier (smaller round numbers), while being oblivious to the local state of honest nodes computed in round $r-\ell+1$ and later (i.e., larger round numbers). For our purpose it suffices to assume an adversary that is 1-late. The 1-late adversary is motivated by the fact that communicating any local information from an honest node to this adversary takes at least one communication round.
\end{definition}

\textbf{Probabilistic Definitions and Notation.} 
Randomized Byzantine fault-tolerant protocols, particularly for consensus, often aim for Las-Vegas style algorithms that succeed with probability 1 as time goes to infinity. These algorithms cannot provide fixed running time bounds, as this would essentially imply that they are in fact deterministic. Typically, only an expected duration can be provided. However this does not guarantee termination with high probability within a reasonable time, in particular if distribution of the running time has a long tail. Therefore, we focus on Monte-Carlo algorithms, which succeed within a \textit{fixed} time bound with probability strictly less but close to 1.

The success probability $1 - \tfrac{1}{n^c}$ for some constant $c>0$ is known as \emph{with high probability} (w.h.p., see Definition \ref{def:whp}) and ensures high success probability for large networks. It is commonly used in the literature for its convenience in asymptotic analysis. First, it allows applying union bounds over $\poly n$ ``bad'' events, showing that none will occur w.h.p. Second, it ties performance and success probability to the single parameter $n$, whereas $c$ is typically dropped in the O-notation. However, this also obscures any dependency on $c$! Specifically, ``nested'' applications of union bounds cause algorithm performance to scale very badly in $c$ requiring $c$ to be rather small for practical use. Therefore, a minimum success probability can only be guaranteed if $n$ is relatively large, which is problematic for small networks.

To address this, some randomized consensus protocols \cite{DBLP:journals/corr/abs-1906-08936}\ps{check out if there are more, maybe algorand?} use a separate \emph{security parameter} $\lambda$, to provide a baseline success probability independently of $n$, in particular for the case that $n$ is small. The goal is to ensure that the failure probability is \textit{negligible} as function of $\lambda$ (e.g., $e^{-\lambda}$, see Definitions \ref{def:negligible_function} and \ref{def:all_but_neglible_probability}) but performance stays acceptable in terms of $\lambda$, which is only possible by making the required base-line success probability explicit via $\lambda$.
However, this concept is flawed as well. In particular, since union bounding over $\poly n$ bad events is usually \textit{necessary}, one has to implicitly assume a polynomial dependence of $\lambda$ on $n$, which obscures the impact of $n$ on the correlation between performance and success probability!

The bottom line is that the performance of Monte Carlo algorithms fundamentally depends on \emph{both} the guaranteed base line success probability provided by $\lambda$ \emph{and} the network size $n$, and we aim to study the effect of both parameters. We simply combine the two concepts by requiring our algorithms to succeed both w.h.p.\ and with all but negligible probability, by using the maximum of both probabilities. Throughout this paper we will use the term \textit{with high confidence (w.h.c.)} for this concept which we formalize as follows.

\begin{definition}[With High Confidence]
	\label{def:whc}
	We say an event $E$ occurs with high confidence (w.h.c.) with respect to security parameter $\lambda$ and network size $n$ if it occurs both\textit{ with high probability} (according to Definition \ref{def:whp}) and \textit{all but negligible probability} (Definition  \ref{def:all_but_neglible_probability}) with respect to $\lambda$. More specifically, $\pr(E) \geq 1-\min\big(\frac{1}{n^c}, \text{negl}(\lambda)\big)$ for $c > 0$ and a function $\text{negl}(\lambda)$ that is negligible in $\lambda$ (see Definition \ref{def:negligible_function}).
\end{definition}

On a mathematical level the corresponding analysis is not much more complicated besides an additional parameter $\lambda$ showing up allowing us to optimize performance for small networks as well (the simplified theorems in the contributions section assume $\lambda \in \tilO(1)$). The following lemma can be seen as an interface between the three concepts, as it gives us a threshold probability such that an event holds  \textit{w.h.p.} \textit{and} \textit{with all but negligible probability}, and therefore \textit{w.h.c.} Besides this lemma and a specialized union bound (Corollary \ref{cor:unionbound_whc}) no additional overhead is incurred from using w.h.c.\ instead of  the more traditional w.h.p.

\begin{lemma}
	\label{lem:whc}
	Let $E$ be an event that occurs with probability $\pr(E) \geq 1-e^{-\mu}$, for $\mu \geq c \ln n + \lambda$ and $c > 0$. Then $E$ occurs w.h.c.
\end{lemma}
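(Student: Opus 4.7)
The plan is to unpack the condition $\mu \geq c \ln n + \lambda$ into two independent bounds on $e^{-\mu}$ and then match them against the two probability guarantees folded into the definition of w.h.c.

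First I would write
\[
e^{-\mu} \;\leq\; e^{-(c \ln n + \lambda)} \;=\; n^{-c} \cdot e^{-\lambda}.
\]
Since $e^{-\lambda} \leq 1$ for $\lambda \geq 0$, this immediately gives $e^{-\mu} \leq n^{-c}$, so $\pr(E) \geq 1 - 1/n^c$, which is exactly the w.h.p.\ guarantee (Definition \ref{def:whp}) with the same constant $c$ that appears in the hypothesis.

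Next, since $n^{-c} \leq 1$ for $n \geq 1$, the same product bound also yields $e^{-\mu} \leq e^{-\lambda}$. The function $\lambda \mapsto e^{-\lambda}$ is a textbook example of a negligible function (it decays faster than any inverse polynomial in $\lambda$), hence fits Definition \ref{def:negligible_function}, so $\pr(E) \geq 1 - e^{-\lambda}$ is an ``all but negligible probability'' bound (Definition \ref{def:all_but_neglible_probability}).

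Combining the two inequalities gives $e^{-\mu} \leq \min\bigl(n^{-c},\, e^{-\lambda}\bigr)$, i.e., $\pr(E) \geq 1 - \min\bigl(n^{-c},\, \mathrm{negl}(\lambda)\bigr)$ with $\mathrm{negl}(\lambda) := e^{-\lambda}$, which is precisely the w.h.c.\ condition of Definition \ref{def:whc}. I do not anticipate a real obstacle here; the only thing worth being careful about is not to lose either factor prematurely—each of the two terms in $\min(\cdot,\cdot)$ needs its own application of the split, rather than first bounding by $n^{-c}$ alone and then trying to recover the $\lambda$ dependence.
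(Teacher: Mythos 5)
Your proposal is correct and follows essentially the same route as the paper: both bound $e^{-\mu}\leq e^{-(c\ln n+\lambda)}=n^{-c}\cdot e^{-\lambda}\leq\min\bigl(n^{-c},e^{-\lambda}\bigr)$ and then read off the two guarantees in Definition~\ref{def:whc}. You merely spell out the justification for the final $\min$ step (each factor is at most $1$), which the paper leaves implicit.
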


\begin{proof}
	The probability that $E$ does not occur is \[e^{-\mu} \leq e^{-(c\ln n + \lambda)} = e^{-c\ln n} \cdot e^{-\lambda} = \tfrac{1}{n^c}\cdot e^{-\lambda} \leq \min\big(\tfrac{1}{n^c},e^{-\lambda}\big). \tag*{\qedhere}\] %
\end{proof}

Additional probabilistic concepts, explanations and bounds are provided in Appendix \ref{sec:probabilistic_concepts}.

\section{System of Witness Committees}
\label{sec:computing_committees}

Witness committees are small sets of nodes with a majority of honest members and are a fundamental building block for our solutions to distributed problems such as reliable broadcast and consensus. These committees enable the irrevocable commitment of values that can be retrieved and trusted for consistency, provided a sufficiently large subset (a quorum) of witnesses agrees on the value. Due to their small size, witness committees minimize the total message complexity almost to the theoretical limit of $\Omega(n)$, which compares much more favorably to the typical $\Theta(n^2)$ when relying on linear sized quorums (e.g., \cite{DBLP:journals/jacm/MostefaouiMR15, DBLP:journals/jacm/DolevR85}).

Prior work using small witness committees for distributed problems (e.g., \cite{DBLP:conf/sosp/GiladHMVZ17, DBLP:conf/opodis/BezerraAKS024}) typically relies on a single committee, requiring $\Omega(n)$ workload per witness to provide every node access to a quorum. In contrast, in Section \ref{sec:applications} we design algorithms with near-constant workload for any node by leveraging parallelization through a \emph{system} of $\Theta(n)$ witness committees that are evenly distributed across the network and approximately known by all nodes.

This approach shifts much of the required communication to the pre-computation phase of computing the system of witness committees. Learning $\Theta(n)$ committees that are (necessarily) determined via a random process, inherently requires $\Theta(n)$ messages per node as this corresponds to learning a random variable with Shannon entropy $\Omega(n)$. Our algorithms are tight with this theoretical lower bound (up to logarithmic factors). Further, we keep with the near-constant workload per round by spreading the message load over $\tilO(n)$ rounds. Importantly, as the resulting committees can be reused, this up-front cost can be amortized over repeated applications of reliable broadcast or consensus.\ps{this is redundant with the contributions section, potential for shortening.}

Each witness committee in the system is associated with a node $u$ (referred to as $u$'s committee). Though not every node has an associated committee, we aim for a constant fraction of nodes to have one. The system of witness committees is characterized by sets $C_{uv}$, which represent the local view of $u$'s committee as known by a node $v$ in the network. In this framework, each node $v$ has its own notion of $u$'s committee. The unifying property of these local views is that all sets $C_{uv}$ (for all $v$ and a fixed $u$) intersect in a shared \textit{common core} of honest nodes. Importantly, nodes $v$ do not need to explicitly identify the common core within their local view $C_{uv}$.\footnote{The term "common core" is used in the context of the gather protocol \cite{DBLP:conf/stoc/CanettiR93}, which addresses a weaker variant of the interactive consistency problem. In the gather protocol, each node $v$ proposes a value $x_v$, and all honest nodes must output subsets of pairs $(v, x_v)$ such that these subsets intersect in a sufficiently large common core. This intersection property is a defining characteristic of the sets $C_{uv}$. However, beyond this similarity, our algorithms, model assumptions, performance criteria, and objectives differ from those of the gather protocol.}\ps{maybe put footnote in related work}

Furthermore, each set $C_{uv}$ must contain fewer than one-third Byzantine nodes. Concretely, we require that its common core has size at least $\beta$, while the total size of $C_{uv}$ is less than \smash{$\tfrac{3\beta}{2}$}, for some small (near-constant) value $\beta$. It suffices that a constant fraction of nodes $u \in V$ has committees $C_{uv}$ meeting these properties. Otherwise, we set $C_{uv} = \emptyset$ (and all honest nodes agree on this), effectively marking $u$'s committee as invalid. Definition~\ref{def:witness_committees} provides the precise formalization.

\begin{definition}[Witness Committees]
	\label{def:witness_committees}
	We call $\big\{ C_{uv} \subseteq V \mid u \in V, v \in V_h \big\}$ a \textit{system of witness committees} with availability $0 < \alpha \leq 1$ and size $\beta \in \mathbb N$ if the following holds. %
	\begin{itemize}
		\item \textbf{Agreement:} For each $u \in V$, either $C_{uv} = \emptyset$ for all honest nodes $v \in V_h$, or there exists a common core $C_u \subseteq V_h$ of size at least $\beta$ that is contained in all $C_{uv}$ for $v \in V_h$.%
		\item \textbf{Membership:} Each honest node is member of witness committee $C_{uv}$ with $v \in V_h$ for at most $2\beta$ distinct nodes $u \in V$. Each committee $C_{uv}$ for $v \in V_h$ contains less than \smash{$\tfrac{3\beta}{2}$} nodes.
		\item \textbf{Availability:} For at least $\alpha \cdot n$ honest nodes $u \in V_h$  there exist an honest node $v \in V_h$ with $C_{uv} \neq \emptyset$.
	\end{itemize}
\end{definition}

The following theorem states the algorithmic properties of our solution to obtain witness committees as defined above. In the remaining part of this section  we present our algorithm in several phases and prove the properties of each phase, resulting in the proof of the theorem at the end of this section.

\begin{theorem}
	\label{thm:set_up_committee}
	A system of witness committees  as in Definition \ref{def:witness_committees} with size $\beta \in \bigO(\log n + \lambda)$ and availability \smash{$\alpha = \tfrac{1}{6}$} can be computed in $\bigO(n)$ rounds and $\sigma \in \bigO\big(\poly{\log n + \lambda}\big)$ bit bandwidth per node and round, w.h.c. The algorithm works in the synchronous setting and tolerates a Byzantine adversary that controls $t$ nodes and is $b$-bandwidth restricted with $t,b \leq n/24$. \ps{maybe give exact degree of the polynomial}
\end{theorem}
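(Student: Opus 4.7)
The plan is to compute the system of witness committees in three phases, each refining the previous one, mirroring the structure outlined earlier in the introduction. In the \textbf{sampling phase}, each node $u$ independently draws a random subset $S_u \subseteq V$ of size $\Theta(\log n + \lambda)$ to serve as its preliminary committee, and attempts to propagate $S_u$ to every other node. Because the Byzantine fraction is at most $\tfrac{1}{24}$, a Chernoff-type bound applied w.h.c.\ via Lemma~\ref{lem:whc} shows that every $S_u$ contains a strong honest majority; union-bounding over all $\Theta(n)$ samples fixes $\beta \in \Theta(\log n + \lambda)$ so that the required common core of size at least $\beta$ exists for all $u$ simultaneously. With bandwidth $\sigma \in \poly(\log n + \lambda)$ per round, disseminating $\Theta(n)$ samples of size $\beta$ to every node amounts to $\tilO(n)$ bits per node, which fits into the $\bigO(n)$-round budget.

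The bandwidth restriction $b \leq \tfrac{1}{24}$ is what prevents the adversary from corrupting too many preliminary committees during this broadcast: injecting a contradictory version of a committee requires $\Omega(\beta)$ Byzantine bits directed at any given receiver, so with a total adversarial budget of only $\tfrac{1}{24}\cdot n\sigma$ bits per round across $\bigO(n)$ rounds, at most a small constant fraction of the $n$ committees can be attacked. For the remaining committees, each honest $v$ ends up with a local view $C_{uv}$ consisting of the honest core of $S_u$ together with at most a bounded number of Byzantine impostors, giving $|C_{uv}| < \tfrac{3\beta}{2}$ and the common-core property required by Definition~\ref{def:witness_committees}.

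In the \textbf{agreement phase}, each node $v$ queries a small random collection of preliminary committees from the first phase, asking their members to vouch for the composition of $u$'s committee; $v$ then marks $u$'s committee valid iff a sufficient fraction of the sampled witnesses report consistent views. Since every surviving preliminary committee has an honest majority, a Byzantine minority inside a probed committee cannot tip the vote, and a second Chernoff argument over the random probes guarantees that all honest nodes reach the same verdict w.h.c. Per-node bandwidth stays within $\sigma$ because each probe is of near-constant size and every node issues $\tilO(1)$ probes per round spread over $\bigO(n)$ rounds. The \textbf{propagation phase} then pushes the composition of each committee now certified valid by \emph{any} honest node to the honest nodes still missing it, routed along the graph of already-established committees; the total load is again $\tilO(n)$ bits per node, meeting both the round and bandwidth budgets. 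The availability $\alpha = \tfrac{1}{6}$ follows because only a small constant fraction of committees were attacked in the first phase, and a Chernoff-style count of honest survivors is preserved through the two refinement phases.

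The hardest step will be the agreement phase: certifying $\Theta(n)$ committees under near-constant per-node bandwidth and without signatures requires simultaneously that no Byzantine coalition can fabricate support for a malformed committee, and that every good committee is recognised by every honest node. The argument hinges on a careful composition of two Chernoff bounds --- one over the random choice of probed committees, one over the honest members inside each probed committee --- together with the bandwidth cap that prevents the adversary from swamping honest probes with junk responses. The multiplicative accumulation of probabilistic slack across the sampling, agreement and propagation arguments is precisely what forces the parameter choice $t, b \leq \tfrac{n}{24}$; each step consumes a constant factor of ``room'' in the Byzantine-to-honest ratio, and the paper deliberately avoids fine-tuning these constants, which is why the theorem states this fraction rather than an optimal one.
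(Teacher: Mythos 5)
Your proposal contains a fundamental error in the first phase that the paper explicitly warns against. You write that ``each node $u$ independently draws a random subset $S_u \subseteq V$ \ldots\ to serve as its preliminary committee, and attempts to propagate $S_u$ to every other node.'' This is precisely the naive approach the introduction rules out: without signatures, a Byzantine $u$ can either choose $S_u$ to consist entirely of Byzantine nodes, or equivocate by sending different versions of $S_u$ to different honest nodes. Your bandwidth argument does not save you here --- equivocation costs $u$ nothing beyond ordinary communication; it is not a flooding attack that $b \leq \tfrac{1}{24}$ bounds. The paper's Phase~A inverts the sampling direction: every node $v$ independently picks a set $M_v$ of committees it wishes to \emph{join}, and then $A_{uv}$ is simply the set of nodes that announced they want to join $u$'s committee, as observed by $v$. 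Because honest nodes join identically across all observers, a common core $C_u \subseteq V_h$ emerges for every $u$ without $u$ having any say in its composition. The bandwidth restriction then limits how many preliminary committees the adversary can \emph{invalidate} by over-stuffing them with fake join-announcements (Lemma~\ref{lem:min_valid_prelim_committees_a}), not how many it can forge.

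Your ``agreement phase'' also misses the machinery the paper actually needs. The paper's Phase~B does not have arbitrary nodes probe random committees; instead it has each $v$ with $A_{uv}\neq\emptyset$ report support to $C_u$, and the \emph{core itself} decides validity via a count $s[u]$ --- which is adversarially perturbable, so the core members must run a standard synchronous consensus among themselves (Lemma~\ref{lem:flag_b_2}) to resolve the critical range $\tfrac{n}{3} \le |S_u| < \tfrac{n}{3}+t$. Without this internal consensus step, honest core members may disagree on validity, which would break the agreement property of Definition~\ref{def:witness_committees}. Your propagation phase roughly matches the paper's Phase~C (sampling other nodes' committee views and taking a quorum-intersection), but because your first two phases don't establish the invariant that all honest $B_{uw}$ are identical or all empty (Lemma~\ref{lem:min_valid_prelim_committees_b}), the sampling argument has nothing to lean on: you cannot distinguish ``few responses because $u$ was invalidated'' from ``few responses because the adversary suppressed them'' unless that invariant holds.

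In short, the high-level three-phase scaffolding and the role of the bandwidth restriction are recognisable, but the core idea --- that committee membership is determined by who \emph{joins}, not by what the committee owner \emph{claims} --- is missing, and its absence makes the first and second phases unsound against a Byzantine committee owner.
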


Our algorithm consists of three phases (A,B and C). In each phase we compute a (preliminary) system of committees that are named analogously ($A_{uv}$, $B_{uv}$ and $C_{uv}$). With each phase the respective (preliminary) committees satisfy increasingly stricter conditions culminating in the final system of witness committees $C_{uv}$ after phase C. We use various variables and parameters during each phase, a glossary is provided in Appendix \ref{sec:variable_glossary}.

For each phase we present pseudo-code that is streamlined using the assumption of a synchronous network model. Specifically, since all nodes know an upper bound on the number of rounds each section requires, nodes are assumed to start these sections simultaneously. Some sections only involve local computations with no message exchanges, which complete instantly in our model. Other sections send multiple messages and thus require several rounds.

Our notation ``\textbf{send}~\dots~\textbf{to}~\dots~\textbf{foreach}~\dots'' specifies a set of messages parameterized by content and receiver, without dictating how messages are being transmitted. Although, we do not specify the sending order, we assume honest nodes utilize their full bandwidth of $\sigma$ bits per round. Further, we may assume honest nodes randomize their transmissions so that each node receives at most $\sigma$ bits per round in expectation from honest nodes.

\subsection{Phase A: Construct Preliminary Committees}

In a nutshell, in Phase A each node $v \in V$ samples a random set $M_v$ of size $\gamma$ and announces to everybody that it is part of $u$'s committee for each $u \in M_v$. Then each node $v \in V$ locally constructs $A_{uv}$ as the set of all nodes that announced their membership in $u$'s committee within a certain time bound. If $A_{uv}$ becomes too large (bigger than \smash{$\tfrac{5\beta}{4}$}, where $\beta$ is proportional to $\gamma$), then it is set to $A_{uv} = \emptyset$ (invalid).

\begin{algorithm}[ht]
	\caption{\textbf{Phase A --- Construct Preliminary Committees $A_{uv}$}}
	\label{alg:PhaseA}
	\SetKwInOut{Input}{input}
	\SetKwInOut{Output}{output}
	\SetKw{KwSend}{send}
	\SetKw{KwTo}{to}
	\SetKw{KwForEach}{foreach}	
	\newcommand{\SendToFor}[3]{\KwSend\ #1\ \KwTo\ #2\ \KwForEach\ #3}
	\SetKwProg{ForSend}{foreach}{ send}{end}
	\DontPrintSemicolon
	
	\Input{%
		Sampling size $\gamma \in \Theta(\log n + \lambda)$, with $\beta \in \mathbb{N}$ derived from $\gamma$ via Lemma \ref{lem:size_committees}%
	}
	\Output{%
		Each node $v$ learns $A_{uv}$ for each node $u$
	}
	
	\vspace*{0.5\baselineskip}
	
	$v$ samples set $M_v \subseteq V$ with $|M_v| = \gamma$ uniformly at random 
	\label{line:sample_committee}
	\Comment{\textit{$v$ joins $\gamma$ random committees}}
	
	$\tau \gets $ current round
	
	\vspace*{0.25\baselineskip}
	
	\SendToFor{$\langle \texttt{member},u \rangle$}{$w$}{$(u,w) \in M_v \times V$}\Comment{\textit{$v$ announces to join $u$'s committee}}

	\vspace*{0.25\baselineskip}
	
	\ForEach{$u \in V$\label{line:loop_prelim_committee}}
	{
		$A_{uv} \gets \big\{w \in V \mid v \text{ received } \langle \texttt{member},u\rangle \text{ from } w \text{ until round } \tau + \lceil \gamma n/\sigma \rceil \big\}$
		\label{line:time_bound}\label{line:draft_prelim_committee}
		
		\lIf(\Comment{\textit{if too many malicious nodes join $A_{uv}$, it is invalid}}){$|A_{uv}| \geq \tfrac{5\beta}{4}$}
		{
			$A_{uv} \gets \emptyset$ \label{line:prelim_committee_empty}
		}
	}
	
\end{algorithm}

We will see that the minimum condition of the (preliminary) committees over all phases A,B and C is that for some fixed node $u$ all \emph{non-empty} committees $A_{uv}, B_{uv}, C_{uv}$ contain the same common core $C_u$ of honest nodes defined below. The common core $C_u$ is the characterizing property of $u$'s committee and whenever we speak of $u$'s committee we usually mean $C_u$.

\begin{definition}[Common Core]
	\label{def:common_core}
	For some $u \in V$, the common core of $u$'s committee is the union of all honest nodes in all $A_{uv}$, formally $C_u = \bigcup_{v \in V_h} A_{uv} \cap V_h$.
\end{definition} 

To see that the common core $C_u$ contains nodes that all $A_{uv} \neq \emptyset$ have in common, 
we show that any honest node $v \in V_h$ that wants to join $u$'s committee (i.e., $u \in M_v$ in Line \ref{line:sample_committee}), will be member of all non-empty preliminary committees $A_{uw}$ of honest nodes $w \in V_h$.

\begin{lemma}	
	\label{lem:include_honest_nodes_in_committees}
	Let $v \in V_h$ be an honest node. Then $u \in M_v$ if and only if $v \in A_{uw}$ for all $w \in V_h$ with $A_{uw} \neq \emptyset$.
\end{lemma}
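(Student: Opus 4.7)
My plan is to split the biconditional into its two directions and treat each via a direct argument based on the message-passing semantics of Phase A, with the algorithm's synchrony and point-to-point assumptions doing the heavy lifting.

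For the forward direction, I would assume $u \in M_v$ and show that $v$ ends up in $A_{uw}$ for every honest $w$ with $A_{uw}\neq\emptyset$. Since $v$ is honest, by the foreach-send instruction in Phase A it transmits $\langle\texttt{member},u\rangle$ to every node in $V$. The key quantitative point is that $v$'s total sending obligation in this step consists of $\gamma n$ such messages (one per receiver, per committee joined), and with bandwidth $\sigma$ per round the schedule fits into the window of $\lceil\gamma n/\sigma\rceil$ rounds stipulated on Line \ref{line:time_bound}. Consequently, every honest $w$ receives $\langle\texttt{member},u\rangle$ from $v$ by round $\tau+\lceil\gamma n/\sigma\rceil$ and therefore includes $v$ in the set assembled on Line \ref{line:draft_prelim_committee}. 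If $A_{uw}\neq\emptyset$, that set was not wiped on Line \ref{line:prelim_committee_empty}, so $v\in A_{uw}$.

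For the reverse direction, I would assume there is an honest $w$ with $v\in A_{uw}$ (which forces $A_{uw}\neq\emptyset$) and argue that $u$ must have been in $M_v$. Membership $v\in A_{uw}$ means $w$ received $\langle\texttt{member},u\rangle$ from $v$ within the time window. The paper's preliminaries guarantee reliable, private point-to-point links with authenticated senders, so no other node (Byzantine or honest) could have produced this message with $v$ as its apparent sender; hence $v$ actually sent it. But an honest node issues a $\langle\texttt{member},u\rangle$ message on behalf of $u$ only during the foreach-send triggered by $u\in M_v$ in Line \ref{line:sample_committee}, which gives $u\in M_v$.

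The only nontrivial step is the bandwidth accounting that underlies the forward direction, since we need the $\lceil\gamma n/\sigma\rceil$-round window to be enough for $v$ to deliver all $\gamma n$ of its $\langle\texttt{member},u\rangle$ messages (absorbing the identifier size into the polylog bandwidth $\sigma\in\bigO(\poly{\log n+\lambda})$ promised by Theorem~\ref{thm:set_up_committee}). Everything else—the two containment arguments and the appeal to sender authentication—is immediate from the pseudocode and the network model, so I would keep the write-up short and point explicitly to Line \ref{line:sample_committee}, Line \ref{line:time_bound}, and Line \ref{line:prelim_committee_empty} as the places where each implication is enforced.
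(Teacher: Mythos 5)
Your proposal is correct and follows essentially the same route as the paper's proof: the forward direction rests on the observation that an honest $v$ with $u \in M_v$ manages to deliver $\langle\texttt{member},u\rangle$ to every honest $w$ within the round budget of Line~\ref{line:time_bound}, so $v$ lands in every non-wiped $A_{uw}$; the reverse direction rests on sender authentication and the fact that an honest $v$ only emits $\langle\texttt{member},u\rangle$ when $u \in M_v$, which is exactly the paper's remark that receiving this message ``is also the only way that $w$ will add $v$ to $A_{uw}$.'' Your slightly more explicit bandwidth accounting and the explicit citation of the authenticated-links assumption are welcome elaborations but do not change the argument.
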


\begin{proof}
	Since $v$ is honest and the time bound in Line \ref{line:time_bound} is sufficient for $v$ to send all its messages $\langle \texttt{member},u \rangle$ in the synchronous setting, any honest node $w \in V_h$ will eventually receive this message from $v$ and add $v$ to $A_{uw}$. Since the receiver of a message can identify the sender this is also the only way that $w$ will add $v$ to $A_{uw}$ (no Byzantine node can cause another node to become member of any committee).
	The only condition that removes $v$ from $A_{uw}$ occurs when $A_{uw}$ exceeds the size limitation in Line \ref{line:prelim_committee_empty}, in which case we have $A_{uw} = \emptyset$.
\end{proof}

 Lemma \ref{lem:include_honest_nodes_in_committees} implies that all non-empty preliminary committees $A_{uv}$ of honest nodes $v \in V_h$ contain exactly the same set of honest nodes. This observation implies the following Lemma.

\begin{lemma}
	\label{lem:common_core}
	We can equivalently characterize the common core $C_u$ from Definition \ref{def:common_core} as follows $C_u = \bigcap_{v \in V_h, A_{uv}\neq \emptyset} A_{uv} \cap V_h$.	
\end{lemma}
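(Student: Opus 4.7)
The plan is to show that both sides of the claimed equality coincide with a common intermediate set, namely $S := \{x \in V_h : u \in M_x\}$, the honest nodes that sampled $u$ on Line~\ref{line:sample_committee}. Lemma~\ref{lem:include_honest_nodes_in_committees} will do essentially all of the work.

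First I would verify $\bigcup_{v \in V_h}(A_{uv} \cap V_h) = S$. For $\subseteq$: if an honest $x$ lies in some $A_{uv}$ for honest $v$, then by Line~\ref{line:draft_prelim_committee} node $v$ received $\langle\texttt{member},u\rangle$ from $x$, and because $x$ is honest this message is sent only when $u \in M_x$; hence $x \in S$. For $\supseteq$: if $x \in S$, the forward direction of Lemma~\ref{lem:include_honest_nodes_in_committees} puts $x$ into every non-empty $A_{uw}$ of honest $w$, and hence (provided at least one such $w$ exists) into the union.

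Next I would verify $\bigcap_{v \in V_h,\, A_{uv}\neq\emptyset} A_{uv} \cap V_h = S$. But this is precisely the biconditional stated in Lemma~\ref{lem:include_honest_nodes_in_committees}: an honest $x$ satisfies $u \in M_x$ if and only if $x \in A_{uw}$ for every honest $w$ with $A_{uw}\neq\emptyset$. Chaining the two equalities gives the claim.

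The only (minor) obstacle is the degenerate case in which every honest $w$ has $A_{uw}=\emptyset$. There the union is trivially $\emptyset$, whereas the intersection is taken over an empty index set; I would adopt the natural convention that in this case the intersection also denotes $\emptyset$, consistent with its intended reading as ``the honest nodes common to all non-empty $A_{uw}$''. With that convention the equality still holds trivially, so no substantive argument is needed beyond Lemma~\ref{lem:include_honest_nodes_in_committees}.
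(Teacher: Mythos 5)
Your proof is correct. The paper argues the two set inclusions directly: the inclusion $\bigcap \subseteq C_u$ is immediate, and for $C_u \subseteq \bigcap$ it picks $w \in C_u$, observes that $w$ honest and $w \in A_{uv}$ forces $w$ to lie in every non-empty $A_{ux}$ via Lemma~\ref{lem:include_honest_nodes_in_committees}, then treats the empty-committee case by convention. You instead route both sides through the intermediate set $S = \{x \in V_h : u \in M_x\}$. This is a genuinely different decomposition, and it is arguably cleaner: the paper's reverse inclusion implicitly relies on the step ``$w \in A_{uv}$ for honest $v,w$ implies $u \in M_w$,'' which is not literally part of the statement of Lemma~\ref{lem:include_honest_nodes_in_committees} (it appears only inside that lemma's proof). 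By naming $S$ and arguing $\bigcup = S$ from first principles (honest nodes send $\langle\texttt{member},u\rangle$ exactly when $u \in M_x$) before invoking the biconditional for $\bigcap = S$, you make that hidden step explicit, and as a bonus you obtain the more concrete characterization $C_u = S$, which is what the Definition~\ref{def:common_core}/Lemma~\ref{lem:include_honest_nodes_in_committees} machinery is morally about. Your handling of the degenerate empty-intersection case matches the paper's convention. One small caution: Lemma~\ref{lem:include_honest_nodes_in_committees} as literally stated becomes vacuous in the all-empty case (the ``for all non-empty $A_{uw}$'' quantifier ranges over nothing), so your phrase ``no substantive argument is needed beyond Lemma~\ref{lem:include_honest_nodes_in_committees}'' is slightly misleading; what carries the degenerate case is the convention on the empty intersection, not that lemma. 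You do handle it, so this is a presentational nit, not a gap.
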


\begin{proof}
	It is clear that $\bigcap_{v \in V_h, A_{uv}\neq \emptyset} A_{uv} \cap V_h \subseteq C_u$. It remains to show $C_u \subseteq\bigcap_{v \in V_h, A_{uv}\neq \emptyset} A_{uv} \cap V_h$ to prove the equality. 	
	Let $w \in C_u := \bigcup_{v \in V_h} A_{uv} \cap V_h$.
	Lemma \ref{lem:include_honest_nodes_in_committees} implies that if there is an honest node $w \in V_h$ with $w \in A_{uv}$ then $w \in A_{ux}$ for all honest nodes $x \in V_h$ with $A_{ux} \neq \emptyset$. Thus $w \in  \bigcap_{v \in V_h, A_{uv}\neq \emptyset} A_{uv} \cap V_h$. 	
	For the border case $A_{uv} = \emptyset$ for all $v \in V_h$ we have $C_u = \emptyset$. So here $ \bigcap_{v \in V_h, A_{uv}\neq \emptyset} A_{uv} \cap V_h$ is an ``empty intersection'' which we formally consider as the empty set $\emptyset$ as well.
\end{proof}

The lemma above already provides a preliminary version of the agreement property of Theorem \ref{thm:set_up_committee} since now the nodes that have a non-empty preliminary committee $A_{uv} \neq  \emptyset$ for $u$, agree on a common core $C_u$. 

Looking ahead to later phases, we will see that if for $u \in V$ the common core $C_u$ is ``supported'' by many nodes $v \in V_h$ (by which mean $A_{uv} \neq \emptyset$) then $C_u$ can be disseminated to all honest nodes who do not support $C_u$ yet. For the rest of nodes $u \in V$ we will establish the condition that \emph{all} committees of honest nodes are made empty, i.e., no honest node will support $u$'s committee and it is considered ``invalid'' (cf.\ Definition \ref{def:witness_committees}).

The next Lemma gives numerical bounds on the number of honest nodes in preliminary committees that are non-empty. In particular, it shows that for each node $v \in V$ the preliminary committee $A_{uv} \neq \emptyset$ contains at most \smash{$\tfrac{5\beta}{4}$} nodes overall, of which at least  $\beta$ are honest but less than \smash{$\tfrac{9\beta}{8}$} honest, w.h.c.

\begin{lemma}
	\label{lem:size_committees}
	For each honest node $v \in V_h$ and for all $u \in V$ we have that \smash{$|A_{uv}| < \tfrac{5\beta}{4}$}. Further, if the preliminary committee $A_{uv} \neq \emptyset$ is non-empty then $A_{uv}$ contains at least $\beta$ but less than \smash{$\frac{9\beta}{8}$} \emph{honest} nodes, for \smash{$\beta := \lceil \frac{9\gamma(n-t)}{10n} \rceil \in \big[ \tfrac{\gamma}{2}, \gamma \big] $} and a suitably chosen value $\gamma \in \bigO(\log n + \lambda)$, w.h.c.
\end{lemma}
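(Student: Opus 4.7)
My plan is to combine the algorithm's explicit size cap with a Chernoff concentration argument on the number of honest nodes that sample $u$ into their random set.

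The first claim, $|A_{uv}| < 5\beta/4$, is immediate: Line \ref{line:prelim_committee_empty} resets $A_{uv}$ to $\emptyset$ as soon as the set reaches size $5\beta/4$, so the final cardinality is either zero or strictly smaller. For the second claim, I would invoke Lemma \ref{lem:include_honest_nodes_in_committees}: whenever $A_{uv} \neq \emptyset$, its honest members coincide with $\{w \in V_h : u \in M_w\}$, independent of $v$. Let $X_u$ denote this count. Since each honest $w$ picks $M_w$ uniformly from the $\gamma$-subsets of $V$, the indicators $\mathbb{1}[u \in M_w]$ over $w \in V_h$ are independent Bernoulli random variables with parameter $\gamma/n$. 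Taking the worst case $|V_b| = t$, the sum $X_u$ has mean $\mu = \gamma(n-t)/n$, and by definition $\beta = \lceil 9\mu/10 \rceil$.

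Standard multiplicative Chernoff bounds then give $\Pr[X_u < \beta] \leq e^{-\mu/200}$ (roughly $1/10$ relative slack below $\mu$) and $\Pr[X_u \geq 9\beta/8] \leq e^{-\mu/19200}$ (only about $1/80$ relative slack above, since $\beta \geq 9\mu/10$ forces $9\beta/8 \geq (1+1/80)\mu$). By choosing the hidden constant in $\gamma \in \Theta(\log n + \lambda)$ large enough, both tails are at most $\min(n^{-c-1}, e^{-\lambda - \ln n})$; a union bound over the $n$ choices of $u$ and an application of Lemma \ref{lem:whc} then place every $X_u$ inside $[\beta, 9\beta/8)$ with high confidence, from which the honest-count bracket for non-empty $A_{uv}$ follows via Lemma \ref{lem:include_honest_nodes_in_committees}. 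The bracket $\beta \in [\gamma/2, \gamma]$ is a direct arithmetic consequence of $t \leq n/24$: the lower inequality uses $(n-t)/n \geq 23/24$, and the upper inequality uses $\gamma$ being large enough that the ceiling adds at most $\gamma/10$.

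The main obstacle is the narrowness of the upper-side slack: $9\beta/8$ exceeds $\mu$ by only about $\mu/80$, so Chernoff forces $\mu$ (and hence $\gamma$) to be at least a fairly large constant times $\log n + \lambda$. This fixes the constant hidden in $\gamma \in \Theta(\log n + \lambda)$ but does not change the asymptotic rate.
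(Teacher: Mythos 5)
Your proof follows essentially the same approach as the paper's: the explicit size cap in Line~\ref{line:prelim_committee_empty} for the first claim, and a Chernoff bound on the Bernoulli sum $\sum_{w\in V_h}\mathbb{1}[u\in M_w]$ (with mean $\gamma(n-t)/n$ in the worst case $|V_b|=t$, slack $1/10$ below and $1/80$ above) followed by a union bound over $u\in V$ and Lemma~\ref{lem:whc} for the second. Your explicit invocation of Lemma~\ref{lem:include_honest_nodes_in_committees} to identify the honest members of a non-empty $A_{uv}$ with $\{w\in V_h: u\in M_w\}$ is a slightly cleaner way of justifying why the unconditioned count is the right random variable, and your explicit arithmetic for $\beta\in[\gamma/2,\gamma]$ fills in a detail the paper leaves implicit, but the substance is identical.
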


\begin{proof}
	The condition \smash{$|A_{uv}| < \tfrac{5\beta}{4}$} is enforced in Line \ref{line:prelim_committee_empty}.	
	Consider a committee $A_{uv} \neq \emptyset$ %
	and let $w \in V_h$ be an honest node. Let the random variable $X_{w} = 1$ if $w$ joins $A_{uv}$ else $X_{w} = 0$. Note that we did not exclude the possibility that $w \in A_{uw}$ or even $w \in A_{wv}$, i.e., $w$ can (randomly) decide to join its own committee or committees hosted by itself (see Line \ref{line:sample_committee} of Algorithm \ref{alg:PhaseA}), thus the probability that $w$ joins $A_{uv}$ is $\frac{\gamma}{n}$.
	
	Therefore, the random variables $X_{w} \sim \text{Ber}(\tfrac{\gamma}{n})$ for $w \in V_h$, are Bernoulli variables with distribution $\tfrac{\gamma}{n}$. They are independent, since different honest nodes $w_1,w_2 \in V_h$ do not influence each others random sample. Then, $X := \sum_{w \in V_h} X_{w}$ is the random number of honest nodes that join $A_{uv}$, with expected value \smash{$\ex(X) \geq \frac{\gamma (n-t)}{n}$}. This means that $\beta = \lceil \tfrac{9}{10}\cdot \ex(X) \rceil$. By the Chernoff bound given in Lemma \ref{lem:chernoffbound} and a suitable constant $d >0$ we obtain
	\begin{align*}
		& \pr\big(X \geq \tfrac{9\beta}{8} \big) = \pr\big(X \geq \tfrac{9}{8}\cdot \big\lceil \tfrac{9\ex(X)}{10}   \big\rceil\big) \leq \pr\big(X \geq (1+ \tfrac{1}{80})\ex(X)  \big) \leq e^{-d \cdot \ex(X)}, \\
		& \pr\big(X < \beta \big) = \pr\big(X < \big\lceil \tfrac{9\ex(X)}{10}   \big\rceil\big) \leq\pr\big(X \leq (1-\tfrac{1}{10}) \ex(X)\big) \leq e^{-d \cdot \ex(X)}
	\end{align*}
	By choosing \smash{$\gamma \geq  \tfrac{1}{d}(c\ln n + \lambda)\frac{n}{n-t}$} (for some $c>0$), we have that \smash{$\ex(X) \geq \tfrac{1}{d}(c\ln n + \lambda)$}. Therefore, the number of honest nodes in $A_{uv}$ is at least $\beta$ and less than \smash{$\tfrac{9\beta}{8}$} with probability at least $1-e^{\mu}$ for $\mu = c\ln n + \lambda$. By Lemma \ref{lem:whc} the claim holds w.h.c. Note that $\gamma \in \bigO(\log n + \lambda)$, since $\frac{n}{n-t} \in O(1)$ due to $t < n/3$. Finally, the claim is true for all $u \in V$ by invoking a union bound, see Corollary \ref{cor:unionbound_whc}.%
\end{proof}

This lemma has implications for the size of the common core $C_u$ that we introduced earlier, in particular, it is very likely to have size between $\beta$ and \smash{$\tfrac{9\beta}{8}$} (unless all preliminary committees are empty). 

\begin{lemma}
	\label{lem:common_core_size}
	Let $u \in V$ with common core $C_u \neq \emptyset$. Then $\beta \leq |C_u| < \tfrac{9\beta}{8}$, w.h.c.
\end{lemma}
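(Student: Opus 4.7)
The plan is to identify the common core $C_u$ with the set of honest nodes of \emph{any} single non-empty preliminary committee $A_{uv}$, and then directly invoke the Chernoff concentration already proved in Lemma \ref{lem:size_committees}. No new probabilistic argument should be needed.

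First, I would use the assumption $C_u \neq \emptyset$ together with Definition \ref{def:common_core} ($C_u = \bigcup_{v\in V_h} A_{uv} \cap V_h$) to pick an honest node $v^\star \in V_h$ with $A_{uv^\star} \neq \emptyset$. Second, I would combine the ``intersection'' characterization from Lemma \ref{lem:common_core} with Lemma \ref{lem:include_honest_nodes_in_committees}: the latter tells us that every honest member of one non-empty $A_{uw}$ lies in every other non-empty $A_{uw'}$ (for $w,w' \in V_h$), so the honest parts of all non-empty $A_{uw}$ coincide. Consequently
\[
C_u \;=\; A_{uv^\star} \cap V_h .
\]

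Third, I would apply Lemma \ref{lem:size_committees} to the non-empty committee $A_{uv^\star}$: it yields that the number of honest nodes in $A_{uv^\star}$ lies in the interval $[\beta, \tfrac{9\beta}{8})$ w.h.c. Substituting the identity $|C_u| = |A_{uv^\star} \cap V_h|$ from the previous step gives exactly $\beta \leq |C_u| < \tfrac{9\beta}{8}$ w.h.c., which is the claim.

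There is no real obstacle in this proof, since every required ingredient is already in place; the main thing to be careful about is making the reduction to a single non-empty $A_{uv^\star}$ rigorous so that only the w.h.c.\ event from Lemma \ref{lem:size_committees} needs to be invoked (a union bound over $u \in V$ is implicit in that lemma via Corollary \ref{cor:unionbound_whc}, so no additional union bounding is required here).
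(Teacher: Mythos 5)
Your proof is correct and uses exactly the same ingredients as the paper's (Lemmas \ref{lem:include_honest_nodes_in_committees}, \ref{lem:common_core}, \ref{lem:size_committees}), but is slightly more streamlined: you establish the single identity $C_u = A_{uv^\star}\cap V_h$ (combining the union characterization from Definition \ref{def:common_core} with the intersection characterization from Lemma \ref{lem:common_core}) and then read both bounds off of Lemma \ref{lem:size_committees} at once, whereas the paper proves the lower bound directly and the upper bound separately via a short case-by-case contradiction. Both proofs are essentially the same argument; yours simply avoids the contradiction step.
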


\begin{proof}
	In this proof we assume that the claim outlined in Lemma \ref{lem:size_committees} is true, which is the case w.h.c. In particular, the Lemma is true for all nodes $v \in V$ w.h.c., by invoking a union bound, see Corollary \ref{cor:unionbound_whc}.
	Let $u \in V_h$ with $A_{uv} \neq \emptyset$, which exists due to $C_u \neq \emptyset$. By Lemma \ref{lem:size_committees}, we have \smash{$|A_{uv}\cap V_h| \geq \beta$} and \smash{$|A_{uv}|  < \tfrac{9\beta}{8}$}, w.h.c. Since $A_{uv} \cap V_h \subseteq C_u$ by Definition \ref{def:common_core} %
	we obtain $|C_u| \geq \beta$. 
	
	For a contradiction assume \smash{$|C_u| \geq \tfrac{9\beta}{8}$}. This means (a) that a single preliminary committee was too large, i.e., \smash{$|A_{uv} \cap V_h| \geq \tfrac{9\beta}{8}$ for some $v \in V_h$}. Or, (b) \smash{$|A_{uv} \cap V_h| < \tfrac{9\beta}{8}$} for all $v \in V_h$ but the union is too large \smash{$|\bigcup_{v \in V_h} A_{uv} \cap V_h| \geq  \tfrac{9\beta}{8}$}, which implies some honest cores $A_{uv} \cap V_h, v \in V_h$ must be unequal, i.e., there are three honest nodes $v_1,v_2,w \in V_h$ with $w \in A_{uv_1}$ and $w \notin A_{uv_1}$. Case (a) contradicts Lemma \ref{lem:size_committees}, while case (b) contradicts Lemma \ref{lem:include_honest_nodes_in_committees}.
\end{proof}

For the next lemma we require the following definition.

\begin{definition}[Support of $u$]
	We call the set of honest nodes $w \in V_h$ that have non-empty preliminary committees $A_{uw} \neq \emptyset$ the \emph{support} of $u$, which we denote with $S_u$.
\end{definition}

We show that a constant fraction of nodes $u \in V$ have a \emph{large support}, that is, for most honest nodes $v \in V_h$ we have $A_{uv} \neq \emptyset$. This is in preparation for the next phases where we show that such nodes with large support one can ``spread'' the knowledge of preliminary committees to all honest nodes. This fraction essentially determines the availability $\alpha$ of witness committees (cf.\ Definition \ref{def:witness_committees}).

Note that this lemma is designed for a $b$-bandwidth restricted adversary for some constant $b < 1/24$ that controls at most $n/12$ nodes.  %
We remark that the bound on $b$ and $t$ is not fully optimized and we can obtain better bounds by being more careful, which we ignore for the sake of a simpler presentation. For the same reason we assume $\sigma$ divides $n\gamma$ (the claim holds for any $\sigma \in \tilO(1)$ given $b < \tfrac{1}{24 + \varepsilon}$ with arbitrarily small $\varepsilon > 0$ for sufficiently large $n$).

\begin{lemma}
	\label{lem:min_valid_prelim_committees_a}
	Assume $t \leq n/12$ and that $\sigma$ is chosen such that it divides $n\gamma$. Consider a $b$-bandwidth restricted adversary for $b\leq 1/24$. Then for at least $\tfrac{n}{6}$ nodes $u \in V_h$ we have $|S_u| \geq \tfrac{n}{2}$, w.h.c.
\end{lemma}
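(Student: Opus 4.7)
The plan is to argue by counting Byzantine bandwidth against the minimal ``cost'' (in Byzantine announcements) required to push a preliminary committee over the invalidation threshold, and then convert a bound on total invalidations into a bound on how many nodes $u$ can have small support.

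\textbf{Step 1 (Adversary's budget).} The algorithm runs Phase A for $T = \lceil \gamma n/\sigma\rceil = \gamma n/\sigma$ rounds (using $\sigma \mid n\gamma$). Since the adversary is $b$-bandwidth restricted, the total bits transmitted by $V_b$ in this window are at most $bB \cdot T = b n\sigma \cdot (\gamma n/\sigma) = b\gamma n^2$. Identifying each atomic announcement $\langle \texttt{member},u\rangle$ with a unit of bandwidth (as the algorithm does implicitly via its time bound), the adversary has at most $b\gamma n^2$ announcement slots in total across all $(z,u,w)$ triples with $z \in V_b$.

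\textbf{Step 2 (Cost per invalidation).} Invoking Lemma~\ref{lem:size_committees} simultaneously for all $u \in V$ via Corollary~\ref{cor:unionbound_whc}, w.h.c.\ every non-empty $A_{uv}$ with $v \in V_h$ has honest contribution strictly less than $\tfrac{9\beta}{8}$. Since Line~\ref{line:prelim_committee_empty} empties $A_{uv}$ iff $|A_{uv}| \geq \tfrac{5\beta}{4}$, each invalidated pair $(u,v)$ with $v \in V_h$ requires at least $\tfrac{5\beta}{4}-\tfrac{9\beta}{8}=\tfrac{\beta}{8}$ distinct Byzantine nodes to have sent $\langle \texttt{member},u\rangle$ to $v$ within the time bound. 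Each such Byzantine announcement consumes one unit of the Step~1 budget.

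\textbf{Step 3 (Total invalidations and bad nodes).} Dividing the global budget by the per-invalidation cost, the number of pairs $(u,v) \in V\times V_h$ with $A_{uv}=\emptyset$ is at most $\tfrac{8b\gamma n^2}{\beta}$. Now call $u$ \emph{bad} if $|S_u| < \tfrac{n}{2}$; for such $u$, at least $(n-t)-\tfrac{n}{2} \geq \tfrac{5n}{12}$ honest nodes $v$ have $A_{uv}=\emptyset$ (using $t \leq n/12$). Hence the number of bad $u \in V$ is at most $\tfrac{8b\gamma n^2}{\beta}\cdot\tfrac{12}{5n}=\tfrac{96 b\gamma n}{5\beta}$. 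Plugging in $\beta \geq \tfrac{33\gamma}{40}$ (valid for $t\leq n/12$) and $b\leq \tfrac{1}{24}$ shows this count is bounded away from $n$ by a linear amount. Subtracting the at most $t \leq n/12$ Byzantine $u$'s from the nodes with $|S_u|\geq n/2$ leaves at least $\tfrac{n}{6}$ honest $u$, as required.

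The hard part is Step~2: pinning down a lower bound on the Byzantine cost per invalidation that is tight enough to survive the multiplicative losses in Step~3. The bound $\tfrac{\beta}{8}$ is the worst case (when $H_u$ is at its upper concentration bound); a more careful calculation, using the typical value $\ex(H_u) = \tfrac{(n-t)\gamma}{n}$ so that the per-invalidation cost is closer to $\tfrac{5\beta}{36}$, may be needed to make $b \leq \tfrac{1}{24}$ suffice cleanly---consistent with the paper's remark that the $\tfrac{1}{24}$ threshold is chosen for presentation rather than optimized. The other ingredients (bandwidth accounting and the Markov-style averaging over bad $u$'s) are routine.
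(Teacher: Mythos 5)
Your proof follows the same overall strategy as the paper's: bound the adversary's total message budget by $b\gamma n^2$, lower-bound the cost to invalidate a single $A_{uv}$ by $\beta/8$ via Lemma~\ref{lem:size_committees}, and divide. However, the final counting step does not close, and you acknowledge this yourself.

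Concretely: you (correctly) observe that a bad node $u$ with $|S_u| < n/2$ accounts for at least $(n-t) - n/2 \geq 5n/12$ invalidated pairs $(u,v)$. Dividing the budget bound by this per-$u$ cost and plugging in $\beta \geq 33\gamma/40$ and $b = 1/24$ gives a bad-$u$ count of up to $\tfrac{96 b\gamma n}{5\beta} \leq \tfrac{256bn}{11} = \tfrac{32n}{33}$. That leaves only about $n/33$ nodes $u \in V$ with $|S_u| \geq n/2$, which is already below $n/6$, and becomes negative after subtracting the up to $t \leq n/12$ Byzantine $u$'s. So the argument as written does not reach the claimed $n/6$. (The paper's proof, by contrast, uses a per-bad-$u$ count of $n/2$ rather than your tighter $5n/12$, which is only valid if $|V_h| = n$; with that looser count the arithmetic lands exactly at $n/6$.)

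Your suggested repair --- replacing the worst-case per-invalidation cost $\beta/8$ by the ``typical'' $5\beta/36$ --- is not automatic. The concentration established in Lemma~\ref{lem:size_committees} only guarantees the honest contribution to any single $A_{uv}$ lies in $[\beta, 9\beta/8)$ w.h.c., and an adversary can pick whichever $(u,v)$ pairs happen to have honest contribution near the upper end of that interval; for those pairs the cost to invalidate really is $\beta/8$. To justify a per-invalidation cost near $5\beta/36$ uniformly, you would need a strictly tighter concentration window (hence a larger choice of $\gamma$ with different Chernoff parameters than the paper uses), or a global accounting of total honest \texttt{member} traffic rather than a per-pair worst-case bound. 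Neither step is carried out, so the proof has a genuine unfilled gap at exactly the point you flagged.
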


\begin{proof}
	Recall that by Lemma \ref{lem:size_committees} each $A_{uv}$ contains less than $\tfrac{9\beta}{8}$ honest nodes, w.h.c. To convince some honest node $v \in V_h$ to set $A_{uv} \gets \emptyset$ (i.e., invalidate $A_{uv}$) in Line \ref{line:prelim_committee_empty}, the adversary needs to cause \smash{$|A_{uv}| \geq  \frac{5\beta}{4}$}. For this to happen, at least \smash{$\frac{\beta}{8}$} messages $\langle \texttt{member},w \rangle$ must be sent from \emph{Byzantine} nodes $w \in V_b$ to $v$. %
	We combine the following observations to bound the number of preliminary committees $A_{uv}$ the adversary can invalidate in this way. 
	
	(i): the adversary can send at most $b \cdot n \cdot \sigma$ messages per round via Byzantine nodes. (ii): there is a round limit of $\lceil \gamma  n/\sigma\rceil$ enforced in Line \ref{line:time_bound} for declaring membership via $\texttt{member}$ messages. (iii): the adversary has to send at least \smash{$\tfrac{\beta}{8}$} messages to some node $v$ to invalidate a single preliminary committee $A_{uv}$. (iv): due to $t \leq \frac{n}{12}$ we have \smash{$\beta = \lceil \frac{9\gamma(n-t)}{10n} \rceil \geq \lceil \frac{4\gamma}{5} \rceil $}. 
	
	From (i)-(iii) we obtain that the total number of preliminary committees $A_{uv}$ the adversary can invalidate is at most $\frac{8}{\beta} \cdot  b \cdot n \cdot \sigma \cdot  \big\lceil  \frac{\gamma n}{\sigma}\big\rceil$. Furthermore,
	\begin{align*}
		\tfrac{8}{\beta} \cdot  b \cdot n \cdot \sigma \cdot  \big\lceil  \tfrac{\gamma n}{\sigma}\big\rceil 
		& \hspace*{0.5mm} \leq 	 \tfrac{8}{\beta} \cdot  b \cdot \gamma \cdot n^2  \tag*{\small since $\sigma$  divides  $n\gamma$}	\\
		& \stackrel{\text{(iv)}}{\leq} 10 \cdot  b \cdot \gamma \cdot n^2  
		 = 10 \cdot  b \cdot n^2.
	\end{align*}
	Assume the claim of the lemma is false, i.e., that for more than $\tfrac{5n}{6}$ nodes $u \in V$ we have $A_{uv} \neq \emptyset$ for less than $\tfrac{n}{2}$ honest nodes $v \in V_h$. This means that the adversary has invalidated at least \smash{$\tfrac{n}{2} \cdot \tfrac{5n}{6}  = \tfrac{5n^2}{12}$} preliminary committees $A_{uv}$. This implies the following contradiction
	\begin{align*}
		10 \cdot  b \cdot n^2 & > \tfrac{5n^2}{12}\\
		\Longleftrightarrow \quad b & > \tfrac{1}{24}. \tag*{\qedhere}
	\end{align*}	
\end{proof}

\subsection{Phase B: Refine Witness Committees} 

To motivate Phase B, we compare the conditions left by Phase A with the requirements of Phase C. From Phase A and Lemma~\ref{lem:min_valid_prelim_committees_a}, a bandwidth-restricted adversary ensures there is a fraction of nodes $u \in V$ whose support $S_u$ has size at least $\tfrac{n}{2}$. In Phase~C, we plan to exploit these large supports: if $S_u$ exceeds the number of Byzantine nodes, any honest node that has not yet learned the common core $C_u$ can reconstruct it by sampling the preliminary committees of a small number of other nodes.

\noindent
However, if $S_u$ is too small, the adversary and randomness can unduly influence a node's view of the common core. Worse, if $|S_u|$ lies in a critical range, the adversary might determine the composition of committees sampled in Phase~C, undermining the desired properties. Consequently, our goal for Phase~B is decide in advance whether $S_u$ is sufficiently large, in which case we mark $u$'s committee valid, else invalid. Our main technique for this is that the core $C_u$ determines their support $S_u$ and then forms a consensus on $u$'s validity.

\begin{algorithm}[h]
	\caption{\textbf{Phase B --- Refine $A_{uv}$ into $B_{uv}$}}
	\label{alg:PhaseB}
	
	\SetKwInOut{Input}{input}
	\SetKwInOut{Output}{output}
	
	\SetKw{KwSend}{send}
	\SetKw{KwTo}{to}
	\SetKw{KwForEach}{foreach}	
	\newcommand{\SendToFor}[3]{\KwSend\ #1\ \KwTo\ #2\ \KwForEach\ #3}
	\SetKwProg{ForSend}{foreach}{ send}{end}
	
	\DontPrintSemicolon
	
	\Input{%
		Preliminary committees $A_{uv}$ from Phase A
	}
	\Output{%
		Each node $v$ learns $B_{uv}$ for each node $u$
	}
	
	\vspace*{0.25\baselineskip}
	
	\SendToFor{$\langle \texttt{support},u, A_{uv} \rangle$}{$w$}{$u \in V$, $w \in A_{uv}$}\label{line:phase_B_start}\Comment{\textit{$v$ supports committee of $u$}}

	\vspace*{0.25\baselineskip}
	
	\ForEach(\Comment{\textit{$M_v:$ set of nodes $u$ with $v \in C_u$ (cf. Phase A)}}){$u \in M_v$}
	{
		$\texttt{supported}[u] \gets \big\{w \in V \mid v \text{ received } \langle \texttt{support},u, A_{uw}\rangle \text{ from } w \big\}$
		
		$s[u] \gets |\texttt{supported}[u]|$
		\Comment{\textit{number of nodes expressing support for $u$}}
		
		\lIfElse{$s[u] \geq \tfrac{n}{3} + t$}
		{
			$b_{uv} \gets 1$
		}
		{
			$b_{uv} \gets 0$
		}
		\label{line:prelim_committee_empty_b}
		
		$B'_{uv} \gets \big\{ x \in V \,\big\vert\, |\{ w \in \texttt{supported}[u] \mid x \in A_{uw}\}|  \geq s[u] - t \big\}$
		\label{line:set_B_start}
		
		\lIf{$s[u] \leq 4t$}
		{
			$B'_{uv} \gets \emptyset$ 
			\label{line:set_B_end}
		}
	}
	
	\vspace*{0.25\baselineskip}
	
	\ForEach{$u \in M_v$ with $\tfrac{n}{3} \leq s[u] < \tfrac{n}{3} + 2t$ in parallel \label{line:validity_consensus_condition}}
	{        
		Participate in consensus protocol among nodes in $B'_{uv}$ on the bits $b_{uv}$
		\label{line:validity_consensus}
		\Comment{\textit{e.g., \cite{DBLP:journals/siamcomp/GarayM98}}}
	}
	
	\vspace*{0.25\baselineskip}
	
	\SendToFor{$\langle \texttt{valid},u,b_{uv} \rangle$}{$w$}{$(u,w) \in M_v \times V$}\label{line:send_validity_bit}	\Comment{\textit{broadcast validity of $u$'s committee}}
	
	\vspace*{0.25\baselineskip}

	\ForEach{$u \in V$\label{line:send_validity_bit_start}}
	{
		$\texttt{valid}[u] \gets \big|\big\{x \in A_{uv} \mid v \text{ received } \langle \texttt{valid},u,1\rangle \text{ from } x \big\}\big|$
		\Comment{\textit{votes that $A_{uv}$ is valid}}
		\label{line:validity_count}
		
		\lIfElse{$\texttt{valid}[u] \geq \beta$}
		{
			$B_{uv} \gets A_{uv}$
		}
		{
			$B_{uv} \gets \emptyset$
		}
		\label{line:validity_decision}
	}
	
\end{algorithm}

Let us look at the algorithm in detail. First, each node $v \in S_u$ signals its support of $u$'s committee to each node in $A_{uv}$ (which contains $C_u$).
If the recorded support is sufficiently large ($\geq \tfrac{n}{3}+t$), then there will be agreement among $C_u$ that $u's$ committee is valid and if it is too small ($< \tfrac{n}{3}$) there will be agreement that it is invalid. In between we do not care as long as there is consensus, which we enforce by running classical synchronous consensus protocols on the small cores $C_u$ (in parallel for each $u \in V$).

The core $C_u$ then reports back the consensus on $u$'s validity to all nodes $v \in V$. Based on this, the honest nodes $v \in V_h$ can then compute the next preliminary committees $B_{uv}$ such that the following holds. If $S_u$ is smaller than $\tfrac{n}{3}$ we have $B_{uw} = \emptyset$ for all honest nodes $w \in V_h$. Else, if $S_u$ is larger than $\tfrac{n}{3}+t$ we have $B_{uw} = A_{uw}$ for all honest nodes. In between it can be either $B_{uw} = A_{uw}$ or $B_{uw} = \emptyset$, but the result is consistent among all $B_{uw}$.

We start with the first proof that refers to Lines \ref{line:phase_B_start}-\ref{line:prelim_committee_empty_b}. First, all nodes $v \in V_h$ send their message of support for $u$'s committee to their respective preliminary committees $A_{uv}$ (with $C_u \subseteq A_{uv}$ being the intended recipient). The nodes in $C_u$ count the support messages in a variable $s[u]$, which is at least as large as the actual size of $u$'s support $|S_u|$.  We then initialize a bit-flag $b_{uv}$, with $b_{uv} = 1$ if $s[u]$ is bigger than $\tfrac{n}{3} +t$. 

Note that $s[u]$ is only an upper bound on $|S_u|$ and subject to adversary influence, so the $b_{uv}$ might be unequal among the $v \in C_u$! However, for the case where the actual support $S_u$ is relatively large ($\geq \tfrac{n}{3} +t$) or small ($<\tfrac{n}{3}$), the initialization of $b_{uv}$ is enough to obtain agreement on $b_{uw} = 1$ or $b_{uw} = 0$, among the nodes $w \in C_u$. We show this in the following lemma.

\begin{lemma}
	\label{lem:flag_b}
	The following holds w.h.c.:
\begin{itemize}		
	\item If $|S_u| \geq \tfrac{n}{3} +t $, then $b_{uw} = 1$ for all $w \in C_u$.
	\item If $|S_u| < \tfrac{n}{3}$, then $b_{uw} = 0$ for all $w \in C_u$.
\end{itemize}	
\end{lemma}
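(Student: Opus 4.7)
The plan is to show that at any $w \in C_u$ the counter $s[u]$ differs from the actual support size $|S_u|$ by at most $t$, and then check that both boundary thresholds $\tfrac{n}{3}$ and $\tfrac{n}{3}+t$ lie on the same side of $s[u]$ in each of the two regimes.

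First I would unpack what messages $w \in C_u$ is guaranteed to receive. By the characterization of the common core in Lemma~\ref{lem:common_core}, membership $w \in C_u$ implies $w \in A_{uv}$ for every honest node $v$ with $A_{uv}\neq\emptyset$, i.e.\ for every $v \in S_u$. Hence, in Line~\ref{line:phase_B_start} every such honest $v$ transmits a message $\langle \texttt{support},u,A_{uv}\rangle$ to $w$. Synchrony together with the bandwidth budget (each honest node sends at most $\gamma\cdot\tfrac{3\beta}{2}\in \tilO(1)$ such messages, which fits within $\sigma$ bit-per-round over the allotted rounds, analogous to the accounting in Line~\ref{line:time_bound}) ensures that $w$ receives all of these messages in time, contributing exactly $|S_u|$ to $\texttt{supported}[u]$. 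Honest nodes $v'\notin S_u$ satisfy $A_{uv'}=\emptyset$ by definition of $S_u$ and therefore do not send any $\texttt{support}$ message for $u$, so they contribute nothing. Finally, since the point-to-point links authenticate senders, each Byzantine node can add at most one additional entry to $\texttt{supported}[u]$, giving a combined Byzantine contribution $\tilde b_w$ with $0\le \tilde b_w\le t$. Consequently $s[u]=|S_u|+\tilde b_w$ at every $w\in C_u$.

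With this identity both cases are immediate. If $|S_u|\ge \tfrac{n}{3}+t$, then $s[u]\ge |S_u|\ge \tfrac{n}{3}+t$, so Line~\ref{line:prelim_committee_empty_b} uniformly assigns $b_{uw}=1$ for all $w\in C_u$. If instead $|S_u|<\tfrac{n}{3}$, then $s[u]\le |S_u|+t<\tfrac{n}{3}+t$, and Line~\ref{line:prelim_committee_empty_b} uniformly assigns $b_{uw}=0$. Both conclusions are deterministic once the Phase A structural guarantees hold, so the ``w.h.c.''\ qualifier is inherited entirely from the probabilistic statements of Lemmas~\ref{lem:common_core} and~\ref{lem:size_committees} (needed so that $C_u$ is well-defined and the bandwidth estimate above is valid for $\beta\in\bigO(\log n+\lambda)$).

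The main technical subtlety I foresee is not the case analysis itself, which is essentially a threshold argument, but justifying that the honest support actually reaches $w$ within Phase~B without any time-limit line analogous to the one in Phase~A; this must be argued from the global bandwidth budget and the $\tilO(1)$ bound on the number of $\texttt{support}$ messages per honest sender. Once this delivery guarantee is in place, the rest of the proof is essentially the two-line inequality check described above.
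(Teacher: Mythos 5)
Your core argument is the same as the paper's: every $v \in S_u$ is honest with $C_u \subseteq A_{uv}$ (Lemma~\ref{lem:common_core}), so each such $v$ sends its support message to every $w \in C_u$, giving $|S_u| \leq s[u] \leq |S_u| + t$ at each $w$, after which the two threshold checks in Line~\ref{line:prelim_committee_empty_b} are immediate. Your extra care about message delivery within Phase~B is fine (the paper leaves this implicit under the synchronous model).

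There is one small but real omission. The lemma's conclusion is consumed only \emph{after} the consensus step of Line~\ref{line:validity_consensus} (the bits $b_{uv}$ are broadcast in Line~\ref{line:send_validity_bit}, which follows it), and the condition in Line~\ref{line:validity_consensus_condition} is on $s[u]$, not on $|S_u|$. In particular, when $|S_u| \geq \tfrac{n}{3}+t$ the local counters can satisfy $\tfrac{n}{3} \leq s[u] < \tfrac{n}{3}+2t$, so the consensus subroutine may well be executed on bits that are already unanimously $1$ (and similarly it may run on unanimous $0$'s when $|S_u|$ is just below $\tfrac{n}{3}$). You therefore still need the observation the paper makes in its final sentence: the validity property of the consensus protocol guarantees that an already existing agreement among the honest core $C_u$ is preserved, so the initialized bits survive to Line~\ref{line:send_validity_bit}. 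Without this remark your proof only establishes the claim for the bits as initialized in Line~\ref{line:prelim_committee_empty_b}, not for the values the rest of Phase~B actually uses.
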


\begin{proof}
	Let $v \in C_u$. By definition any node $w \in S_u$ is honest and has $A_{uw} \neq \emptyset$. By Lemma \ref{lem:common_core} we have $C_u \subseteq A_{uw}$. Consequently every $w \in S_u$ sends a message $\langle \texttt{support}, u, A_{uw} \rangle$ to $v$. The nodes in $S_u$ are also the only \emph{honest} nodes that send such a message to $v$. 
	
	Node $v$ collects all nodes $w \in V$ that sent a message $\langle \texttt{support}, u, A_{uw} \rangle$ to $u$ in $\texttt{supported}[u]$, thus $s[u] := |\texttt{supported}[u]| \geq |S_u|$. Since there are at most $t$ malicious nodes we can deduce $s[u] \leq |S_u| + t$. Node $v$ sets $b_{uv}$ to 1 if $s[u] \geq \tfrac{n}{3} +t$, else to 0. If $|S_u| \geq \tfrac{n}{3} +t$, then $s[u] \geq |S_u| \geq \tfrac{n}{3} +t$ thus $v$ sets $b_{uv}$ to 1. If $|S_u| < \tfrac{n}{3}$, then $s[u] \leq |S_u| +t < \tfrac{n}{3} +t$, thus $b_{uv}$ is set to 0.
	
	The consensus step in Line \ref{line:validity_consensus} does not affect the values $b_{uw}$ for $S_u$ in the given range as an already existing consensus among all honest nodes $C_u$ in the committee will be preserved.	
\end{proof}

The lemma above leaves a gap where the bits $b_{uv}$ are not (yet) in agreement among the $v \in C_u$. Specifically, if the support is in the range $\tfrac{n}{3} \leq |S_u| < \tfrac{n}{3} +t$, the adversary can decide the bit $b_{uv}$ for each node $v \in C_u$, and we need additional coordination among nodes in $C_u$ to form consensus on the values $b_{uv}$. The idea is to run a standard consensus protocol among the nodes in $C_u$.

But before we can actually do this we need to resolve the problem that the nodes in $C_u$ do not necessarily know of each other (as the preliminary sets could be $A_{uv} = \emptyset$ for any $v \in C_u$). However, since the support is large, we can compute a set $B'_{uv}$ for each $v \in C_u$ that contains $C_u$ (Lines \ref{line:set_B_start}-\ref{line:set_B_end}). %
The sets $B'_{uv}$ for $v \in C_u$ have the following properties.\ps{explain this part a bit more and emphasize as technical contribution.}

\begin{lemma}
	\label{lem:set_B}
	Let $u \in V$ and $v \in C_u$. The set $B'_{uv}$ computed in Lines \ref{line:set_B_start}-\ref{line:set_B_end} has the following properties w.h.c.:
	\begin{itemize}				 		
		\item If $B'_{uv} \neq \emptyset$, then $C_u \subseteq B'_{uv}$.
		\item If $|S_u| > 4t$, then $C_u \subseteq B'_{uv}$.%
		\item $|B'_{uv}| < \tfrac{3\beta}{2}$.
	\end{itemize}
\end{lemma}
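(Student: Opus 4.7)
The plan is to address the three properties in order, using Lemma~\ref{lem:include_honest_nodes_in_committees} to pin down honest membership of the preliminary committees and a double-counting argument on pairs $(x,w)$ with $x \in A_{uw}$ to bound the Byzantine membership of $B'_{uv}$. Two observations will be used throughout. First, since honest supporters reliably deliver their $\langle \texttt{support},u,A_{uw}\rangle$ messages in the synchronous model, we have $S_u \subseteq \texttt{supported}[u]$ with at most $t$ extra Byzantine senders, so $s[u] - t \leq |S_u| \leq s[u]$. Second, combining Lemma~\ref{lem:include_honest_nodes_in_committees} with Lemma~\ref{lem:common_core} yields the exact identity $A_{uw}\cap V_h = C_u$ for every honest supporter $w \in S_u$, which together with Lemma~\ref{lem:size_committees} gives the sharp per-supporter bound $|A_{uw}\cap V_b| < \tfrac{5\beta}{4} - |C_u|$.

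For the first two properties I would argue as follows. Whenever $s[u] > 4t$ (which holds under the hypothesis of property~1 because otherwise $B'_{uv}$ would have been reset to $\emptyset$, and under the hypothesis of property~2 because $s[u] \geq |S_u| > 4t$), any $x \in C_u$ lies in $A_{uw}$ for every $w \in S_u$ by Lemma~\ref{lem:include_honest_nodes_in_committees}. Since $S_u \subseteq \texttt{supported}[u]$ has size at least $s[u] - t$, the membership count in the defining condition of $B'_{uv}$ is attained, so $x \in B'_{uv}$ and thus $C_u \subseteq B'_{uv}$.

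The main obstacle is the size bound, as a naive count leaves roughly $\tfrac{13\beta}{8}$ and is too loose. The crucial step is to exploit the identity $A_{uw}\cap V_h = C_u$ so that the Byzantine ``capacity'' of each honest supporter's committee shrinks as $|C_u|$ grows. Concretely, for any Byzantine $x \in B'_{uv}$ the count of at least $s[u] - t$ over $\texttt{supported}[u]$ must, after removing up to $t$ Byzantine supporters, include at least $s[u] - 2t$ honest supporters $w$ with $x \in A_{uw}$. Double-counting pairs $(x,w)$ with $x \in B'_{uv}\cap V_b$, $w \in S_u$, $x \in A_{uw}$, and applying the sharp per-supporter bound, yields $|B'_{uv}\cap V_b|\cdot (s[u] - 2t) < |S_u|\cdot\bigl(\tfrac{5\beta}{4} - |C_u|\bigr)$.

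To finish, I would observe that $s[u] > 4t$ implies $\tfrac{|S_u|}{s[u] - 2t} \leq \tfrac{s[u]}{s[u] - 2t} < 2$, and that an honest node outside $C_u$ cannot reach the membership threshold of $B'_{uv}$ because Lemma~\ref{lem:include_honest_nodes_in_committees} forces such a node to be absent from every $A_{uw}$ with $w$ honest, leaving at most $t < s[u] - t$ possible supporters. Therefore $|B'_{uv}| = |C_u| + |B'_{uv}\cap V_b| < |C_u| + 2\bigl(\tfrac{5\beta}{4} - |C_u|\bigr) = \tfrac{5\beta}{2} - |C_u|$, and invoking the lower bound $|C_u| \geq \beta$ from Lemma~\ref{lem:common_core_size} gives $|B'_{uv}| < \tfrac{3\beta}{2}$, as required.
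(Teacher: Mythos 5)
Your proposal is correct and follows essentially the same route as the paper's proof: the sandwich $s[u]-t \leq |S_u| \leq s[u]$, the core containment $C_u \subseteq A_{uw}$ for honest supporters to get the first two properties, and the counting argument over honest supporters (each contributing fewer than $\tfrac{5\beta}{4}-|C_u|$ non-core nodes, each Byzantine entrant needing at least $s[u]-2t$ honest supporters, and the ratio bounded by $2$ via $s[u]>4t$) combined with $|C_u|\geq\beta$ for the size bound. Your double-counting phrasing just makes explicit the pair-counting that the paper's ratio $\bigl(\tfrac{5\beta}{4}-|C_u|\bigr)\cdot\tfrac{|S_u|}{s[u]-2t}$ implicitly performs.
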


\begin{proof}
	Since $v \in C_u$, the node $v$ is honest by definition of $C_u$, thus $u \in M_v$. Therefore $v$ executes Line \ref{line:set_B_start} for node $u$, i.e., it computes the set $B'_{uv}$ of nodes $x \in V$ that appear in at least $s[u]-t$ of the sets $A_{uw}$ it obtained via messages $\langle \texttt{support}, u, A_{uw} \rangle$, where $s[u]$ is the number from which such a support message was received.
	
	We have (1) $s[u] \geq |S_u| \geq  s[u] - t$ since all nodes in $w \in S_u$ are honest and thus send their support message $\langle \texttt{support}, u, A_{uw} \rangle$ to $v$ and at most $t$ additional Byzantine nodes $w \notin S_u$ can sent such a support message. 
	Moreover, it is (2) $C_u \subseteq A_{uw}$ for all $w \in S_u$ by Lemma \ref{lem:common_core}.	
	
	First, let us assume that $S_u \neq \emptyset$. 
	By combining $S_u \neq \emptyset$ with (1) and (2), we obtain that all nodes $x \in C_u$ satisfy the condition to be member of $B'_{uv}$ in Line \ref{line:set_B_start}. That implies $C_u \subseteq B'_{uv}$, after Line \ref{line:set_B_start} (before $B'_{uv}$ is potentially deleted in Line \ref{line:set_B_end}).
		
	Now assume that $B'_{uv} \neq \emptyset$. This implies that $B'_{uv}$ was not deleted in Line \ref{line:set_B_end}, therefore we have $s[u] > 4t$, due to Line \ref{line:set_B_end}. Since $|S_u| \geq s[u] -t > 3t$ by Equation (1), we have $S_u \neq \emptyset$, thus  $C_u \in B'_{uv}$ from the deduction above. This shows the first property.
	
	Assume that $|S_u| > 4t$. Then $S_u \neq \emptyset$ thus $C_u \subseteq B'_{uv} \neq \emptyset$ in Line  \ref{line:set_B_start}  as shown before.  Further,  $s[u] \geq |S_u| > 4t$ by Equation (1), thus the condition in Line \ref{line:set_B_end} does not trigger. This shows the second property.
	
	The third property is clear in case $s[u] \leq 4t$, since then $B'_{uv} = \emptyset$ after Line \ref{line:set_B_end}. Let us look at the case $s[u] > 4t$. Let $x \in B'_{uv}$, then we have $x \in A_{uw}$ for at least $s[u]-t$ nodes $w \in \texttt{supported[u]} \subseteq V$ by Line \ref{line:set_B_start}. Because at most $t$ nodes are Byzantine, it holds that $x \in A_{uw}$ for at least $s[u]-2t$ nodes $w \in S_u$.	
	
	We now need two additional properties: (3)  \smash{$A_{uw} < \frac{5\beta}{4}$} due to Line \ref{line:prelim_committee_empty} and 
	(4) $|C_u| \geq \beta$ w.h.c., by Lemma \ref{lem:size_committees}. Due to (2) and (3), for any $w \in S_u$ have less than \smash{$\tfrac{5\beta}{4} \m |C_u|$} nodes $x \in A_{uw}$ with $x \notin C_u$. 	
	Assume $x \in B'_{uv}$ with $x \notin C_u$. Since $x$ must occur in at least $s[u]-2t > t$ preliminary committees $A_{uw}$ for $w \in S_u$, the number of $x \in B'_{uv}$ with $x \notin C_u$ is (w.h.c.) less than
	\[
		\big(\tfrac{5\beta}{4} \m |C_u|\big) \cdot \tfrac{|S_u|}{s[u]-2t} \stackrel{(1)}{\leq} \big(\tfrac{5\beta}{4} \m |C_u|\big) \cdot \tfrac{s[u]}{s[u]-2t} \stackrel{s[u] > 4t}{\leq} \big(\tfrac{5\beta}{4} \m |C_u|\big)  \cdot \tfrac{4t}{4t-2t} = \tfrac{5\beta}{2}-2|C_u| \stackrel{(4)}{\leq} \tfrac{3\beta}{2} \m |C_u|
	\]
	Each honest node in $B'_{uv}$, i.e., $x \in B'_{uv} \cap V_h$ must be in the common core $C_u$ (by Lemma  \ref{lem:include_honest_nodes_in_committees}). Therefore the overall number of nodes in $B'_{uv}$ can be bounded as
	\[
		|B'_{uv}|  = |B'_{uv} \cap V_h| + |B'_{uv} \cap V_b| < |C_u| +  \tfrac{3\beta}{2}-|C_u| = \tfrac{3\beta}{2},
	\]	
	 which shows the third property.	
\end{proof}

The lemma above implies that if the support is sufficiently large ($|S_u| > 4t$) then all nodes $w \in C_u$ in the common core have a set $B'_{uw}$ that contains $C_u$. Further, since $|C_u|\geq\beta$ and \smash{$|B'_{uw}|< \tfrac{3\beta}{2}$}, $B'_{uw}$ contains less than a third Byzantine nodes. This is sufficient to obtain agreement on the values $b_{uw}$ in the case $\tfrac{n}{3} \leq |S_u| < \tfrac{n}{3} +t$, where Lemma \ref{lem:flag_b} fails. 

For obtaining this agreement we can employ the standard deterministic consensus algorithm by Garay and Moses \cite{DBLP:journals/siamcomp/GarayM98} on each $C_u$ which will be fast since each $C_u$ is very small.

\begin{lemma}[from \cite{DBLP:journals/siamcomp/GarayM98}]
	\label{lem:garay}
	There is a deterministic algorithm that solves the consensus problem in the synchronous model for at most $t < n/3$ Byzantine parties in at most $t+1$ rounds with $\bigO(n^2t^2)$ bits of communication.
\end{lemma}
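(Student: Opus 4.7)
The lemma is cited directly from Garay and Moses, so my plan is to outline the architecture of their construction rather than reconstruct it in full. Their protocol proceeds in exactly $t+1$ rounds, which matches the classical Fischer--Lynch--Merritt lower bound, and is built around compact ``certificates'' that let an honest processor justify its current value to others in subsequent rounds. I would not attempt to rederive the result from scratch; rather I would describe its two key ingredients and then invoke the published analysis for the detailed bookkeeping.

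First, for correctness I would argue validity and agreement as follows. In each round every processor broadcasts its current value along with any certificates it has accumulated and then updates its value by a deterministic rule that favors values with overwhelming support (at least $n - t$ backers). Validity is immediate: if all honest processors already hold the same value $v$, then each of them sees $v$ with multiplicity at least $n - t$, retains it, and certifies it further, so no Byzantine behavior can dislodge $v$. Agreement follows from an inductive ``shrinking set'' argument: the set of values an honest processor can still regard as plausible strictly shrinks whenever disagreement persists, and at most $t$ of the rounds can be effectively ``corrupted'' by the adversary, so after $t + 1$ rounds all honest processors have necessarily converged.

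Second, for the complexity bound the key difficulty is keeping the transcripts short. A naive full-information approach such as exponential information gathering attains the optimal round count but blows up to $n^{\Theta(t)}$ message bits, while a phase-king style protocol has small messages but uses $2(t + 1)$ rounds. The Garay--Moses contribution is to represent each processor's state by certificates of size $\bigO(t)$ bits, so that each of the $n^2$ pairwise transmissions per round carries $\bigO(t)$ bits, giving $\bigO(n^2 t)$ per round and $\bigO(n^2 t^2)$ in total. The main obstacle --- and the reason the result is cited rather than rederived --- is precisely this certificate management: simultaneously achieving the optimal round count \emph{and} polynomial communication requires a careful inductive invariant about which values remain ``live'' in which round, and for that I would defer entirely to the original paper.
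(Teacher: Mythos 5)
The paper provides no proof of this lemma at all --- it is stated purely as a citation of Garay and Moses \cite{DBLP:journals/siamcomp/GarayM98}, which is exactly the stance you adopt. Your added architectural sketch (certificate-based messages of size $\bigO(t)$, $t+1$ rounds matching the lower bound, hence $\bigO(n^2 t)$ bits per round and $\bigO(n^2 t^2)$ total) is a reasonable summary of the cited result and is consistent with the lemma's statement, so the proposal matches the paper's treatment; the only quibble is the attribution of the $t+1$-round lower bound, which is due to Fischer and Lynch (with a simpler proof by Dolev and Strong) rather than ``Fischer--Lynch--Merritt.''
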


We show more formally that this algorithm can be simulated in parallel on each $C_u$ with sufficiently large support. This holds even if the nodes in $v \in C_u$ do not know $C_u$ exactly, but know only a slightly larger set $B'_{uv}$ that contains $C_u$. The simulation is done in parallel on each such $C_u$, which incurs a slowdown that corresponds to the maximum number $\gamma$ of cores $C_u$ that any given honest node $v \in V_h$ is member of. Note that the preconditions for the sets $B'_{uv}$ that we presume in this lemma were already established in Lemma \ref{lem:set_B} for those nodes $u$ that have sufficiently large support.

\begin{lemma}
	\label{lem:simulation_on_core}
	Let $V' \subseteq V$ and assume that for all $u \in V'$ we have a common core $C_u \subseteq V_h$ with the following properties. 
	
	\begin{itemize}
		\item $|C_u| \geq \beta$.
		\item Each honest node $v \in V_h$ is member of at most $\gamma$ cores $C_u$ for $u \in V'$.
		\item Each $v \in C_u$ knows a set  $B'_{uv} \subseteq V$ for each $u \in V'$ with $B'_{uv} \cap V_h = C_u$ and \smash{$|B'_{uv}| < \tfrac{3\beta}{2}$}.
	\end{itemize}
	
	Then we can simulate any algorithm $\mathcal A$ that is Byzantine fault tolerant for up to $t<n/3$ Byzantine parties on each of the $C_u$ in parallel with either a slowdown of a factor $\gamma$ or an communication overhead by a factor $\gamma$.
\end{lemma}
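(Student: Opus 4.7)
The plan is to have each honest node $v \in C_u$ locally simulate algorithm $\mathcal{A}$ treating its set $B'_{uv}$ as the participant set. First I would verify that this is a valid setup for $\mathcal{A}$: since $|B'_{uv}| < \tfrac{3\beta}{2}$, $C_u \subseteq B'_{uv}$, and $|C_u| \geq \beta$, the number of Byzantine participants in $B'_{uv}$ is strictly less than $\tfrac{3\beta}{2} - \beta = \tfrac{\beta}{2}$. Combined with $|B'_{uv}| \geq |C_u| \geq \beta$, this gives a Byzantine fraction strictly below $\tfrac{\beta/2}{3\beta/2} \cdot \tfrac{|B'_{uv}|}{|B'_{uv}|} < \tfrac{1}{3}$, so $\mathcal{A}$'s $t < n/3$ resilience threshold is respected from every honest node's local perspective.

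Next I would establish output consistency among the honest nodes of $C_u$. When $\mathcal{A}$ instructs $v$ to send to a supposed participant $w$, $v$ does so iff $w \in B'_{uv}$; incoming messages are processed only if the sender lies in $B'_{uv}$. Since $C_u \subseteq B'_{uv} \cap B'_{uv'}$ for any two honest $v, v' \in C_u$, all honest-to-honest communication proceeds exactly as in an execution of $\mathcal{A}$ on the canonical participant set $C_u$. The ``extra'' Byzantine nodes that appear in $B'_{uv}$ but not in $B'_{uv'}$ (or vice versa) are, from each honest viewpoint, indistinguishable from ordinary Byzantine participants of $\mathcal{A}$: any behavior they exhibit---including sending inconsistent messages to different honest nodes or remaining silent---is a legal adversarial strategy inside $\mathcal{A}$'s model.

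The most delicate point, and the main obstacle, is precisely this asymmetry: different honest nodes see slightly different participant sets $B'_{uv}$, so the simulations are not literally identical executions of $\mathcal{A}$. I would resolve it by observing that BFT correctness of $\mathcal{A}$ is stated only in terms of honest node outputs, and the honest participants are the same set $C_u$ in every simulation. Any discrepancy among Byzantine ``participants'' can be folded into the adversarial model of $\mathcal{A}$, which by assumption tolerates equivocation and arbitrary deviations up to the $t < n/3$ bound. Hence all honest nodes in $C_u$ produce outputs satisfying $\mathcal{A}$'s guarantees, and this holds simultaneously for every $u \in V'$.

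Finally, for the parallelism claim, each honest node $v$ participates in at most $\gamma$ simulations (one per core $C_u$ containing $v$). Running them sequentially multiplies the round count of $\mathcal{A}$ by at most $\gamma$, yielding the stated slowdown; running them concurrently instead multiplies the per-round message volume by at most $\gamma$, yielding the stated communication overhead. The choice between these two modes is immediate once the correctness of a single-core simulation is established.
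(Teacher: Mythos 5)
Your proposal follows essentially the same route as the paper: each honest node runs $\mathcal A$ restricted to its local view $B'_{uv}$, the shared honest core $C_u$ guarantees consistent honest-to-honest communication, the per-view discrepancies among Byzantine members are folded into $\mathcal A$'s adversary model, and the factor-$\gamma$ slowdown or message overhead follows from the bound on core memberships. One arithmetic step needs repair, though: combining ``Byzantine count $< \tfrac{\beta}{2}$'' with ``$|B'_{uv}| \geq \beta$'' only yields a Byzantine fraction below $\tfrac{1}{2}$, and forming the quotient $\tfrac{\beta/2}{3\beta/2}$ divides an upper bound on the numerator by an \emph{upper} bound on the denominator, which does not bound the fraction. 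The correct chain (the one the paper uses) is that the Byzantine count satisfies $|B'_{uv}| - |C_u| < \tfrac{3\beta}{2} - \beta = \tfrac{\beta}{2} \leq \tfrac{|C_u|}{2}$, i.e., it is less than half the honest count in the view, and hence less than one third of $|B'_{uv}|$.
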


\begin{proof}

To get some intuition, consider running $\mathcal{A}$ separately within each core $C_u$. Since every honest node $v \in C_u$ knows the set $B'_{uv}$ containing $C_u$ with fewer than \smash{$\tfrac{|C_u|}{2}$} additional Byzantine nodes, each node effectively faces fewer than one-third Byzantine nodes in its local view. Thus, $\mathcal{A}$ remains correct. 

Two subtleties require attention. First, $\mathcal{A}$ typically assumes each node knows the entire set of participants, whereas here each node $v$ has a potentially different set $B'_{uv}$. However, because $\mathcal{A}$ tolerates up to one-third Byzantine nodes, any behavior by those Byzantine nodes, across different $B'_{uv}$ sets, can be seen as valid Byzantine actions within the original setting. 

Second, we must account for overlaps when running $\mathcal{A}$ on multiple committees in parallel. Since each honest node $v$ belongs to at most $\gamma$ cores, we can either (1) allow each node to handle $\gamma$-times the communication in a single round, or (2) define a ``super-round'' of $\gamma$ consecutive rounds, each dedicated to a single instance of $\mathcal{A}$. In both cases, the overhead in time or communication increases by at most a factor of $\gamma$.
\end{proof}

We can now use the  algorithm by \cite{DBLP:journals/siamcomp/GarayM98} to obtain consensus on the bits $b_{uv}$ among the nodes $v \in C_u$, where $u$'s support is in the range $\tfrac{n}{3} \leq |S_u| < \tfrac{n}{3} +t$ for which we do not have agreement yet (Lines \ref{line:validity_consensus_condition}-\ref{line:validity_consensus} in Algorithm \ref{alg:PhaseB}).

\begin{lemma}
	\label{lem:flag_b_2}
	Assume $4t < \tfrac{n}{3} \leq |S_u| < \tfrac{n}{3} +t$, then $b_{uv} = b_{uw}$ for all $v,w \in C_u$, w.h.c., after the consensus step in Line \ref{line:validity_consensus}. This step takes $\bigO(\beta)$ rounds and $\bigO(\beta^5)$ bits communication.
\end{lemma}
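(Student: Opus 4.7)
The plan is to combine the simulation machinery of Lemma~\ref{lem:simulation_on_core} with the deterministic Garay--Moses consensus algorithm from Lemma~\ref{lem:garay}, executed in parallel on every common core $C_u$ whose support lies in the critical range. First I would verify the three preconditions of Lemma~\ref{lem:simulation_on_core} for the set $V' := \{u \in V : \tfrac{n}{3} \leq |S_u| < \tfrac{n}{3}+t\}$: (i) the size bound $|C_u| \geq \beta$ holds w.h.c.\ by Lemma~\ref{lem:common_core_size}; (ii) any honest $v$ lies in $C_u$ only if $u \in M_v$ (by Lemma~\ref{lem:include_honest_nodes_in_committees}), and since $|M_v| = \gamma$ every honest node sits in at most $\gamma$ such cores; (iii) the hypothesis $|S_u| \geq n/3 > 4t$ activates the second bullet of Lemma~\ref{lem:set_B}, supplying each $v \in C_u$ with a set $B'_{uv} \supseteq C_u$ of cardinality less than $\tfrac{3\beta}{2}$ whose honest members are exactly $C_u$ (since $B'_{uv} \cap V_h \subseteq C_u$ by Lemma~\ref{lem:include_honest_nodes_in_committees}, combined with $C_u \subseteq B'_{uv}$).

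Next I would instantiate Lemma~\ref{lem:simulation_on_core} with the Garay--Moses protocol applied to each $B'_{uv}$, taking $b_{uv}$ as $v$'s input. Because $C_u \subseteq B'_{uv}$ together with $|C_u| \geq \beta$ and $|B'_{uv}| < \tfrac{3\beta}{2}$, fewer than $\tfrac{\beta}{2} < |B'_{uv}|/3$ members of each virtual instance are Byzantine, so the resilience threshold of Lemma~\ref{lem:garay} is satisfied. A single instance terminates in $\bigO(\beta)$ rounds with $\bigO(\beta^4)$ total bits. Selecting the communication-overhead variant of Lemma~\ref{lem:simulation_on_core} (rather than the round-slowdown variant), the overhead factor $\gamma \in \bigO(\beta)$ multiplies only the bandwidth, yielding a combined cost of $\bigO(\beta)$ rounds and $\bigO(\beta^5)$ bits per node across all parallel instances. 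The agreement property of the underlying protocol then gives $b_{uv} = b_{uw}$ for all $v,w \in C_u$, as required.

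The main obstacle is not the arithmetic but justifying that Garay--Moses can be faithfully simulated on the virtual network $C_u$ given that each participant only knows the slightly inflated view $B'_{uv}$, and different honest participants may disagree on which Byzantine nodes appear in their views. However, this is precisely the scenario for which Lemma~\ref{lem:simulation_on_core} is designed: because the honest subsets coincide, $B'_{uv} \cap V_h = C_u$, and each extra Byzantine member can be absorbed into the worst-case adversarial behaviour that Garay--Moses already tolerates, the standard resilience argument transfers without modification. A minor subtlety is that the round and bit complexities have to be computed relative to the $\bigO(\beta)$-sized virtual instances rather than the full network of size $n$, which is what keeps the entire step polynomial in $\beta$.
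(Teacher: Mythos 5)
Your proposal follows essentially the same high-level route as the paper: verify the preconditions of Lemma~\ref{lem:simulation_on_core} using Lemmas~\ref{lem:common_core_size}, \ref{lem:include_honest_nodes_in_committees}, and~\ref{lem:set_B}, then invoke Garay--Moses (Lemma~\ref{lem:garay}) on the virtual instances $B'_{uv}$ and account for the parallel-execution overhead $\gamma\in\bigO(\beta)$. The complexity arithmetic ($\bigO(\beta)$ rounds, $\bigO(\beta^4)$ bits per instance, $\bigO(\beta^5)$ with the factor-$\gamma$ overhead) and the identification $B'_{uv}\cap V_h = C_u$ are all correct.

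However, there is one genuine gap. You jump directly to ``run the consensus on each $B'_{uv}$,'' but the algorithm only enters the consensus step conditionally: a node $w\in C_u$ participates in Line~\ref{line:validity_consensus} only if its local count $s[u]$ satisfies the trigger condition in Line~\ref{line:validity_consensus_condition}, namely $\tfrac{n}{3}\leq s[u]<\tfrac{n}{3}+2t$. The hypothesis of the lemma constrains the true support size $|S_u|$, not the per-node observables $s[u]$, which can differ across honest nodes and are inflated by Byzantine senders. You must argue that every $w\in C_u$ does in fact satisfy the trigger. This follows from the inequality $|S_u|\leq s[u]\leq|S_u|+t$ established earlier (honest support members always send, and at most $t$ Byzantine nodes can pad the count), combined with $\tfrac{n}{3}\leq|S_u|<\tfrac{n}{3}+t$. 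Without this step, some honest $w\in C_u$ might skip the consensus entirely and retain a stale bit $b_{uw}$, which would defeat the agreement claim regardless of how well-behaved the remaining participants are; from the simulated protocol's perspective a silent honest participant is indistinguishable from a crashed one, and you have no budget to absorb additional effective failures inside $C_u$.
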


\begin{proof}
	For each node $w \in C_u$ it holds that  $|S_u| \leq {s}[u] \leq |S_u|  + t$ for the local variable ${s}[u]$ of $w$. Combining this with the presumption $\tfrac{n}{3} \leq |S_u| < \tfrac{n}{3} +t$ of the lemma,  implies that the condition $\tfrac{n}{3} \leq s[u] < \tfrac{n}{3} + 2t$ in Line \ref{line:validity_consensus_condition} is satisfied for all $w \in C_u$.
	Therefore, the consensus step in Line \ref{line:validity_consensus} will be executed for all $w \in C_u$.
	
	Since $|S_u| > 4t$ we can use the properties for $B'_{uv}$  established in Lemma \ref{lem:set_B} and the simulation argument in Lemma \ref{lem:simulation_on_core}, and run the consensus algorithm \cite{DBLP:journals/siamcomp/GarayM98} on each $C_u$ in parallel. In the simulation each node $v \in C_u$ uses $B'_{uv}$ as the set of participants in the algorithm. This establishes consensus $b_{uw} = b_{ux}$ for all $w,x \in C_u$. 
	
	We have $|B'_{uv}| \in \bigO(\beta)$ due to Lemma \ref{lem:set_B} and each node $v \in V_h$ is member of at most $\gamma$ cores $C_u$ (see Line \ref{line:draft_prelim_committee}) and $\gamma \in \bigO(\beta)$ (see Lemma \ref{lem:size_committees}). Therefore, Lemmas \ref{lem:garay} and \ref{lem:simulation_on_core} imply that the consensus step is completed after $\bigO(\beta)$ rounds requiring $\bigO(\beta^5)$ bits of communication per round.	
\end{proof}

We are now in a position where we have agreement on the validity of $u$'s committee among nodes in the common core $C_u$ via the bits $b_{uw}$ for $w \in C_u$, which we established in Lemmas \ref{lem:flag_b} and \ref{lem:flag_b_2}.
As last steps in Phase B (Lines \ref{line:send_validity_bit_start}-\ref{line:validity_decision}) the agreement on the validity of $u$'s committee is communicated from the core $C_u$ to all nodes in the support $S_u$ of $u$. %

Specifically, each node  $w \in C_u$ in the common core broadcasts the bit $b_{uw}$. 
Since the nodes in the support $w \in S_u$ have knowledge of preliminary committees $A_{uw}$ in which the core $C_u$ forms a majority, they can conclude the validity of $u$'s committee from the majority of the validity bits obtained.

The nodes $w \in S_u$ then construct new sets $B_{uv}$, with $B_{uv} = A_{uv}$ if $u$'s committee is valid, else $B_{uv} = \emptyset$. Consequentially, the sets $B_{uv} \neq \emptyset$ inherit all properties that we showed for the sets $A_{uv}$, with the difference that the support is now either empty or relatively large (which we require as precondition for Phase C, where we spread the $B_{uv} \neq \emptyset$ via sampling).

\begin{lemma}
	\label{lem:prelim_committees_b}
	Let $t < \tfrac{n}{12}$. Then the preliminary committees $B_{uw}$ for $u \in V$ and $w \in V_h$ computed in Phase B have the following properties w.h.c.
	\begin{itemize}		
		\item If $|S_u| \geq \tfrac{n}{3} + t$, then $B_{uw} = A_{uw}$ for all $w \in V_h$,
		\item If $|S_u| < \tfrac{n}{3}$ it is $B_{uw} = \emptyset$ for all $w \in V_h$,
		\item If $\tfrac{n}{3} \leq |S_u| < \tfrac{n}{3} + t$ it is either $B_{uw} = \emptyset$ for all $w \in V_h$ or $B_{uw} = A_{uw}$ for all $w \in V_h$.
	\end{itemize}
\end{lemma}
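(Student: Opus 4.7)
My plan is to split on the three support-size regimes stated in the lemma and, in each, track how the agreement on the validity bit propagates from the core $C_u$ through the final broadcast step of Phase B (Lines~\ref{line:send_validity_bit_start}--\ref{line:validity_decision}).

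First, I would use Lemma~\ref{lem:flag_b} directly in the two extreme regimes: if $|S_u| \geq n/3 + t$, every $w \in C_u$ already has $b_{uw} = 1$, and if $|S_u| < n/3$, every $w \in C_u$ has $b_{uw} = 0$. In the intermediate regime $n/3 \leq |S_u| < n/3 + t$, I would invoke Lemma~\ref{lem:flag_b_2}, whose hypothesis $|S_u| > 4t$ is met because $t < n/12$ forces $4t < n/3 \leq |S_u|$. So in every regime, after Line~\ref{line:validity_consensus} the bits $b_{uw}$ of honest nodes $w \in C_u$ coincide on a common value $b \in \{0,1\}$.

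Next, I would translate this common bit $b$ into the conclusion on $B_{uv}$ by analyzing the counter $\texttt{valid}[u]$ for an arbitrary honest $v \in V_h$. If $A_{uv} = \emptyset$, then trivially $\texttt{valid}[u] = 0$ and $B_{uv} = \emptyset = A_{uv}$, so both target conclusions degenerate. Otherwise, Lemma~\ref{lem:common_core} gives $C_u \subseteq A_{uv}$ and Lemma~\ref{lem:common_core_size} gives $|C_u| \geq \beta$, so when $b = 1$ all honest members of $C_u$ contribute $\langle \texttt{valid}, u, 1 \rangle$-messages that $v$ counts into $\texttt{valid}[u]$, pushing it to at least $\beta$ and yielding $B_{uv} = A_{uv}$.

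The main obstacle is the $b = 0$ branch, where I have to rule out that Byzantine vote-flipping drives $\texttt{valid}[u] \geq \beta$. The key observation is that by Lemma~\ref{lem:include_honest_nodes_in_committees} every honest member of $A_{uv}$ must lie in $C_u$ and therefore sends $0$, so only Byzantine nodes inside $A_{uv}$ can contribute to $\texttt{valid}[u]$. Combining $|A_{uv}| < 5\beta/4$ from Line~\ref{line:prelim_committee_empty} of Phase~A with $|A_{uv} \cap V_h| = |C_u| \geq \beta$ bounds the Byzantine count in $A_{uv}$ by fewer than $\beta/4$, which is safely below the threshold $\beta$; hence $\texttt{valid}[u] < \beta$ and $B_{uv} = \emptyset$. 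Collecting the three cases and union-bounding over all $u \in V$ via Corollary~\ref{cor:unionbound_whc} gives the three claims w.h.c.
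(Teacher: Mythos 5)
Your proof is correct and follows essentially the same approach as the paper: split on the three support regimes, use Lemmas~\ref{lem:flag_b} and~\ref{lem:flag_b_2} to establish a common validity bit $b$ across $C_u$ (noting $t < n/12$ gives $4t < n/3$ so the hypothesis of Lemma~\ref{lem:flag_b_2} is satisfied), then translate $b$ into the outcome of Line~\ref{line:validity_decision} by counting honest versus Byzantine contributors inside $A_{uv}$ using $|C_u| \geq \beta$ and $|A_{uv}| < 5\beta/4$. Your treatment is if anything slightly more explicit than the paper's in spelling out the Byzantine-vote bound of $< \beta/4$ in the $b=0$ branch and in isolating the degenerate case $A_{uv} = \emptyset$.
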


\begin{proof}
	
	In Line \ref{line:send_validity_bit} each node in the common core $v \in C_u$ sends a message $\langle\texttt{valid},u,b_{uv} \rangle$ to each node in $V$ with $b_{uv}$ indicating the validity of $u$'s committee.	
	Consequently, every node $w \in S_u$ obtains $\langle\texttt{valid},u,b_{uv} \rangle$ from all $v \in C_u$. In Line \ref{line:validity_count}, a given node $w \in C_u$ counts the number of nodes $x \in A_{uw}$ that regard $u$'s committee as valid, i.e., $b_{ux} = 1$ by accumulating the bits in the local variable $\texttt{valid}[u]$. Since $A_{uw} \neq \emptyset$ we know that $C_u \subseteq A_{uw}$ due to Lemma \ref{lem:common_core}. Further, $|C_u| \geq \beta$ and \smash{$|A_{uw}| < \tfrac{5\beta}{4}$}, w.h.c., due to Lemma \ref{lem:size_committees}. 
	
	If $|S_u| \geq \tfrac{n}{3} + t$, then $w \in S_u$ obtains $\langle\texttt{valid},u,1 \rangle$ from each node $C_u$ by Lemma \ref{lem:flag_b}, thus $\texttt{valid}[u] \geq \beta$. Consequently we set $B_{uw} := A_{uw}$ in Line \ref{line:validity_decision}. If $|S_u| < \tfrac{n}{3}$, then $w \in S_u$ obtains $\langle\texttt{valid},u,0 \rangle$ from each node $C_u$ (also due to Lemma \ref{lem:flag_b}), therefore $\texttt{valid}[u] < \beta$ and we set $B_{uw} := \emptyset$ in Line \ref{line:validity_decision}. Note that for any node $w \in V_h \setminus S_u$, we automatically have $B_{uw} = \emptyset$ due to $A_{uw} = \emptyset$ (see Line \ref{line:validity_decision}). 
	
	Now, if the support is in the critical range $\tfrac{n}{3} \leq |S_u| < \tfrac{n}{3} + t$, the initialization of bits $b_{uw}$ among $w \in C_u$ may depend on the adversary. The subsequent consensus step in Line \ref{line:validity_consensus} guarantees that the bits $b_{uw}$ are equal for all pairs $w \in C_u$ by Lemma \ref{lem:flag_b_2} (here we require $t < \tfrac{n}{12}$). Thus the bits  $b_{uv}$ that are sent by the $v \in C_u$ are consistent and we fall back to one of the two previously described cases. Consequently, we have either $B_{uw} = \emptyset$ for all $w \in V_h$ or $B_{uw} = A_{uw}$ for all $w \in S_u$. %
\end{proof}

Note that the lemma above implies that for any $u \in V$ the preliminary committees $A_{uv}$ and $B_{uv}$ are the same unless we set $B_{uv} = \emptyset$ for all $v \in V_h$ (i.e., we invalidated $u$'s committee in Phase B).
This implies the same for the support with respect to the $B_{uv}$ which equals $S_u$ (which we defined for the $A_{uv}$) unless $u$'s committee was invalidated in Phase B.

We slightly simplify the lemma above and combine it with Lemma \ref{lem:min_valid_prelim_committees_a} to obtain the following lemma which servers as interface  for Phase C.

\begin{lemma}
	\label{lem:min_valid_prelim_committees_b}
	Let $t < \tfrac{n}{12}$. The preliminary committees $B_{uw}$ have the following properties w.h.c.
	\begin{itemize}		
		\item For each $u \in V$ either $B_{uw} = \emptyset$ for all $w \in V_h$ or $B_{uw} = A_{uw}$ for all $w \in V_h$.
		\item For each $u \in V$ if $B_{uw} \neq \emptyset$ for some $w \in V_h$ then $|S_u| \geq \tfrac{n}{3}$.
		\item For at least $\tfrac{n}{6}$ nodes $u \in V$ there exists $w \in V_h$ such that $B_{uw} \neq \emptyset$.
	\end{itemize}
\end{lemma}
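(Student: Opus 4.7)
The plan is to derive this lemma as a direct corollary of the work already done in Lemmas \ref{lem:prelim_committees_b} and \ref{lem:min_valid_prelim_committees_a}, rather than revisiting any of the algorithmic analysis of Phase B. All three bullets correspond to a case of the three-case statement in Lemma \ref{lem:prelim_committees_b}, and the main task is to verify that under the hypothesis $t < n/12$ the ranges line up correctly.

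First I would establish the first bullet. Lemma \ref{lem:prelim_committees_b} already hands me a trichotomy on $|S_u|$: for $|S_u| \geq n/3 + t$ all $B_{uw}$ equal $A_{uw}$, for $|S_u| < n/3$ all $B_{uw}$ are empty, and in the intermediate interval both outcomes are possible but uniform across honest nodes $w$. In each case the conclusion "either $B_{uw} = \emptyset$ for all $w \in V_h$, or $B_{uw} = A_{uw}$ for all $w \in V_h$'' holds, so the first bullet is immediate.

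The second bullet is the contrapositive of the second case of Lemma \ref{lem:prelim_committees_b}: if $|S_u| < n/3$, then $B_{uw} = \emptyset$ for every honest $w$. So the existence of some honest $w$ with $B_{uw} \neq \emptyset$ forces $|S_u| \geq n/3$.

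For the third bullet I invoke Lemma \ref{lem:min_valid_prelim_committees_a}, which guarantees that at least $n/6$ nodes $u \in V_h$ satisfy $|S_u| \geq n/2$, w.h.c.\ (its hypotheses $t \leq n/12$ and $b \leq 1/24$ are included in our assumption). Using $t < n/12$ I then check $n/3 + t < n/3 + n/12 = 5n/12 < n/2$, so these $n/6$ nodes land in the first case of Lemma \ref{lem:prelim_committees_b}, giving $B_{uw} = A_{uw}$ for all honest $w$. Finally, since $|S_u| \geq n/2 > 0$ there exists an honest $w$ with $A_{uw} \neq \emptyset$ (that is the definition of membership in $S_u$), hence $B_{uw} \neq \emptyset$ for this $w$.

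There is no genuine obstacle here; the only thing to be careful about is the inequality chain $n/3 + t \leq n/2$ under $t \leq n/12$, which is what ensures that the "large support'' regime of Lemma \ref{lem:min_valid_prelim_committees_a} sits inside the "definitely valid'' regime of Lemma \ref{lem:prelim_committees_b}. The w.h.c.\ qualifier is inherited from the two parent lemmas via a single application of the union bound in Corollary \ref{cor:unionbound_whc}.
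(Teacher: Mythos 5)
Your proposal is correct and follows essentially the same route as the paper: the first two bullets are read off directly from the trichotomy of Lemma~\ref{lem:prelim_committees_b} (the second via its contrapositive), and the third combines Lemma~\ref{lem:min_valid_prelim_committees_a} with the check that $|S_u| \geq \tfrac{n}{2} \geq \tfrac{n}{3} + t$ under $t < \tfrac{n}{12}$. Your extra observation that $|S_u| \geq \tfrac{n}{2} > 0$ guarantees some honest $w$ with $A_{uw} \neq \emptyset$ (so that $B_{uw} = A_{uw}$ is actually nonempty) is a small step the paper leaves implicit, and is a welcome addition.
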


\begin{proof}
	The first property follows directly from Lemma \ref{lem:prelim_committees_b}.  The second property given in Lemma \ref{lem:prelim_committees_b} says that $|S_u| < \tfrac{n}{3}$ implies $B_{uw} = \emptyset$ for all $w \in V_h$. The contra-positive is that the existence of a $w \in V_h$ with $B_{uw} = \emptyset$ implies $|S_u| \geq \tfrac{n}{3}$, which shows the second property of this Lemma.	
	Furthermore, we have at least $\tfrac{n}{6}$ nodes $u \in V$ with $|S_u| \geq \tfrac{n}{2}$, by Lemma \ref{lem:min_valid_prelim_committees_a}. For those nodes we have $|S_u| = \tfrac{n}{2} \geq \tfrac{n}{3} + t$ and thus $B_{uv} \neq \emptyset$, which shows the third property.
\end{proof}

\subsection{Phase C: Assemble Final Committees}

\noindent
In the final Phase~C, we build on the properties of the preliminary committees $B_{uv}$ from Phase~B (summarized in Lemma~\ref{lem:min_valid_prelim_committees_b}). One key outcome of Phase~B is that, if $u$'s committee is deemed invalid, then $B_{uv} = \emptyset$ for all honest nodes $v$, ensuring agreement on the invalidity of $u$. Conversely, if $u$ is not invalid, at least $n/3$ honest nodes have $B_{uv} \neq \emptyset$. %

Leveraging these guarantees, we now construct the final witness committees $C_{uv}$. Each node $v$ samples a small random set $N_{uv} \subseteq V$ for every $u \in V$ and requests $B_{uw}$ from all $w \in N_{uv}$ (Lines~\ref{line:sample_core}, \ref{line:request_core}). In expectation, most responses come from honest nodes, while only a small fraction are Byzantine.

If $u$'s committee is ``valid'' (case~2 of Lemma~\ref{lem:min_valid_prelim_committees_b}), enough honest nodes $w \in V_h$ provide $B_{uw} \neq \emptyset$, enabling $v$ (with high confidence) to assemble a committee $C_{uv}$ that contains the core $C_u$, by collecting nodes that appear in sufficiently many of the non-empty $B_{uw}$.
By contrast, if $u$'s committee is marked ``invalid'' in Phase~B, all honest $w \in V_h$ respond with $B_{uw} = \emptyset$. Any committees returned by Byzantine nodes alone cannot cause $v$ to add any node to its final committee. Consequently, for every $u \in V$, either $C_{uw} = \emptyset$ for all honest $w \in V_h$, or $C_u \subseteq C_{uw}$.%

\begin{algorithm}[ht]
	\caption{\textbf{Phase C --- Assemble Final Committees $C_{uv}$}}
	\label{alg:PhaseC}
	\SetKwInOut{Input}{input}
	\SetKwInOut{Output}{output}
	
	\SetKw{KwSend}{send}
	\SetKw{KwTo}{to}
	\SetKw{KwForEach}{foreach}	
	\newcommand{\SendToFor}[3]{\KwSend\ #1\ \KwTo\ #2\ \KwForEach\ #3}
	\DontPrintSemicolon
	
	\Input{%
		Phase B committees $B_{uv}$, appropriate $\zeta \in \Theta(\log n + \lambda)$ (cf.\ Lemmas \ref{lem:prelim_committees_c}, \ref{lem:respond})
	}
	\Output{%
		Each node $v$ learns $C_{uv}$ for each node $u$
	}
	
	\vspace*{0.35\baselineskip}
	
	\ForEach(\Comment{\textit{sample nodes to sets $N_{uv}$ for $u \in V$ (initally empty)}}){$(w,u) \in V \times V$}
	{
		$v$ adds node $w \in V$ to set $N_{uv}$ with uniform probability \smash{$\tfrac{\zeta}{n}$}
		\label{line:sample_core}
	}
		
	\vspace*{0.25\baselineskip}
	
	\SendToFor{$\langle \texttt{request}, u\rangle$}{$w$}{$u \in V, w \in N_{uv}$}\label{line:request_core}
	
	\vspace*{0.25\baselineskip}
	
	\ForEach(\Comment{\textit{filter out nodes that sent too many requests ($\geq 2 \zeta$)}}){$w \in V$\label{line:filter_out_start}}
	{
		$\texttt{requested}[w] \gets \big\{u \in V \mid v \text{ received } \langle \texttt{request},u\rangle \text{ from } w \big\}$
		
		\lIf{$\big|\texttt{requested}[w]\big| \geq 2 \zeta$}{$\texttt{requested}[w] \gets \emptyset$}	\label{line:filter_out_end}
	}
	
	\vspace*{0.25\baselineskip}	
	
	\SendToFor{$\langle \texttt{response}, u,  B_{uv}\rangle$}{$w$}{$w \in V$ with $u \in \texttt{requested}[w]$}\label{line:respond}
	
	\vspace*{0.25\baselineskip}		
	
	\ForEach(\Comment{\textit{locally store committees of $u$ from each $w \in N_{uv}$}}){$u \in V, w \in N_{uv}$}
	{
		\lIfElse{$\langle \texttt{response}, u,  B\rangle$ received from $w$ \textbf{and} \smash{$\beta \leq |B| < \tfrac{5\beta}{4}$}}{$B_{uw} \gets B$}{$B_{uw} \gets \emptyset$}
	}
	
	\vspace*{0.25\baselineskip}

	\ForEach{$u \in V$ \label{line:assemble}}
	{

		$r[u] \gets \big|\{ w \in N_{uv} \mid B_{uw} \neq \emptyset \}\big|$
		\Comment{\textit{number of $B_{uw} \neq \emptyset$ received from nodes $w \in N_{uv}$}}
		
		$C_{uv} \gets \big\{ x \in V \,\big\vert\, |\{ w \in N_{uv} \mid x \in B_{uw}\}| 
		\geq r[u] - \tfrac{\zeta}{16} \big\}$
		\label{line:recover_core}
		
		\lIf{$r[u] \leq \tfrac{\zeta}{4}$}
		{
			$C_{uv} \gets \emptyset$
			\label{line:invalid_c}
		}
	}
	
\end{algorithm}

Before assembling the sets $C_{uv}$, we must ensure that requesting the committees $B_{uv}$ is resilient to Byzantine flooding. In particular, we must prevent adversarial nodes from overloading honest nodes with excessive requests. Since node $v$ expects at most $\zeta$ legitimate requests for committees of different $u \in V$ from any honest node $w \in V_h$, it can safely ignore nodes sending $2\zeta$ or more requests (see Lines~\ref{line:filter_out_start}--\ref{line:filter_out_end}), and still legitimate requests exchanged among honest nodes obtain a response.

\begin{lemma}
	\label{lem:respond}
	We can choose $\zeta \in \bigO(\log n + \lambda)$ such that for each $u \in V$, every $v \in V_h$ will obtain a response in the from of the committee $B_{uw}$ from each $w \in N_{uv} \cap V_h$, w.h.c.
\end{lemma}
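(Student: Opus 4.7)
My plan is to reduce the claim to a tail bound on the number of distinct requests that any honest $w \in V_h$ receives from any single honest $v \in V_h$, since by Line \ref{line:respond} $w$ declines to respond only when the filter in Lines \ref{line:filter_out_start}--\ref{line:filter_out_end} trips, i.e., its local count $|\texttt{requested}[v]|$ reaches $2\zeta$. Because $v$ is honest, it sends $\langle\texttt{request},u\rangle$ to $w$ exactly for those $u$ with $w \in N_{uv}$, so at $w$ this local count equals $|\{u \in V : w \in N_{uv}\}|$. By the sampling rule in Line \ref{line:sample_core} this is a sum of $n$ independent Bernoulli variables each with parameter $\zeta/n$, whose expectation is exactly $\zeta$.

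The key step is then a Chernoff bound (Lemma \ref{lem:chernoffbound}) yielding $\pr\bigl(|\texttt{requested}[v]| \geq 2\zeta\bigr) \leq e^{-d\zeta}$ for an absolute constant $d > 0$. Choosing $\zeta \geq \tfrac{1}{d}(c \ln n + \lambda)$ for a sufficiently large $c$ and invoking Lemma \ref{lem:whc} shows that the filter does not trip w.h.c.\ for this fixed pair $(v,w)$. I would then apply Corollary \ref{cor:unionbound_whc} to take a union bound over the at most $n^2$ ordered honest pairs, absorbing the extra polynomial factor by slightly inflating $c$; the resulting $\zeta$ remains in $\bigO(\log n + \lambda)$.

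Once the filter has been ruled out for every honest pair, Line \ref{line:respond} guarantees that $w$ actually transmits $\langle \texttt{response}, u, B_{uw}\rangle$ for each $u$ with $w \in N_{uv}$, and the reliable point-to-point links deliver the message to $v$, which is precisely the statement of the lemma. The only real subtlety I anticipate is bookkeeping the constants so that the union bound does not push $\zeta$ past $\bigO(\log n + \lambda)$; this is straightforward from the shape of Lemma \ref{lem:whc}, since $e^{-d\zeta} \leq \min(n^{-c}, e^{-\lambda})$ requires only a linear dependence of $\zeta$ on $\log n$ and $\lambda$. A secondary check, belonging to the overall scheduling of Phase C rather than to this lemma per se, is that each honest $w$ has at most $\bigO(\zeta n)$ outgoing responses of size $\tilO(\beta)$ bits each, which fits within a $\tilO(n)$-round budget at $\sigma \in \tilO(\log n + \lambda)$ bits per round and therefore does not obstruct delivery.
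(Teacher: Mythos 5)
Your proposal is correct and follows essentially the same route as the paper: identify $X_w = |\{u : w \in N_{uv}\}|$ as a sum of $n$ i.i.d.\ $\text{Ber}(\zeta/n)$ variables with mean $\zeta$, apply the Chernoff bound of Lemma~\ref{lem:chernoffbound} to show $X_w < 2\zeta$ w.h.c.\ for a suitable $\zeta \in \bigO(\log n + \lambda)$ via Lemma~\ref{lem:whc}, and then take the union bound over honest pairs via Corollary~\ref{cor:unionbound_whc}. (Minor note: the paper's proof writes $\ex(X) = \gamma$, which appears to be a typo for $\zeta$; your computation is the correct one.)
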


\begin{proof}
	Fix two nodes $v,w \in V_h$. For $u \in V$ let $X_{uw}$ be indicator variable which is 1 if $v$ samples $w$ to $N_{uv}$ in Line \ref{line:sample_core}, else 0. Then \smash{$X_{uw} \sim \text{Ber}(\tfrac{\zeta}{n})$}. Let $X_w := \sum_{u \in V} X_{uw}$ be the number of $u \in V$ with $w \in N_{uv}$. As the sampling process in Line \ref{line:sample_core} is conducted independently for each pair $u,w$, this is a sum of independent random variables.
	
	In expectation we have $\mathbb E(X) = \gamma$. Furthermore, it is unlikely that $X_w$ exceeds twice the expectation by the Chernoff bound from Lemma \ref{lem:chernoffbound}: $\mathbb P \big(X_w \geq 2 \zeta \big) \leq e^{- \zeta/3}$.
	We choose $\zeta \geq 3 (c \ln n + \lambda)$, then $X_w < 2\zeta$ holds w.h.c., by Lemma \ref{lem:whc}. 
	
	This implies that w.h.c., $w$ obtains less than $2\zeta$ requests of the form $\langle \texttt{request}, u\rangle$ from $v$, thus the node $w$ will not filter out $v$'s requests in Lines~\ref{line:filter_out_start}--\ref{line:filter_out_end} and send the response $\langle \texttt{response}, u, B_{uw}\rangle$ to $v$ for each $u \in N_{uw}$. Applying a union bound (Corollary \ref{cor:unionbound_whc}), this holds w.h.c., for each $u,w \in V_h$.
\end{proof}

The next Lemma shows that committees which are not invalid will be made available to all honest nodes in Phase C. Essentially, in the remaining part of Algorithm \ref{alg:PhaseC} starting from Line \ref{line:assemble}, node $v$ looks at all committees $B_{uw}$ it obtained from the nodes $w \in N_{uv}$ and assembles $C_{uv}$ from those nodes that appear in many $B_{uw}$ such that $C_{uv}$ contains the common core $C_u$ if $u$'s committee is valid, else $C_{uv} = \emptyset$. A crucial part of the proof is to show that \smash{$|C_{uv}| < \tfrac{3\beta}{2}$}, for which we exploit that \smash{$|B_{uw}| < \tfrac{5\beta}{4}$} for all honest $w \in N_{uv}$.

\begin{lemma}
	\label{lem:prelim_committees_c}
	Let $t < \tfrac{n}{24}$. We can choose $\zeta \in \bigO(\log n + \lambda)$ such that the following holds for all $u \in V$  w.h.c.	
	\begin{itemize}
		\item If $B_{ux} \neq \emptyset$ for some $x \in V_h$, then $C_u \subseteq C_{uw}$ for all $w \in V_h$,
		\item else $C_{uw} = \emptyset $ for all $w \in V_h$.
		\item \smash{$|C_{uw}| < \tfrac{3\beta}{2}$}, for all $u \in V, w \in V_h$.
	\end{itemize}
\end{lemma}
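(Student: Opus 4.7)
The plan is to handle the three claims by combining Chernoff-based concentration with a double-counting argument analogous to that in the proof of Lemma~\ref{lem:set_B}.

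First, by Lemma~\ref{lem:respond} and with a sufficiently large constant in $\zeta \in \Theta(\log n + \lambda)$, every honest $w \in N_{uv}$ returns its true $B_{uw}$ to $v$, which is accepted exactly when $w \in S_u$ (via Lemmas~\ref{lem:size_committees} and \ref{lem:prelim_committees_b}). Chernoff bounds on the independent sampling in Line~\ref{line:sample_core} and a union bound over $u \in V$ (Corollary~\ref{cor:unionbound_whc}) give w.h.c.\ simultaneously: (i) $B_v := |N_{uv} \cap V_b| \leq \zeta/16$ (from $\mathbb{E}[B_v] \leq \zeta t/n \leq \zeta/24$), and (ii) whenever $|S_u|\geq n/3$, $S_v := |N_{uv} \cap S_u| \geq 5\zeta/16$ (from $\mathbb{E}[S_v] \geq \zeta/3$). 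Writing $r[u] = S_v + B'_v$ with $B'_v \leq B_v$ counting the Byzantine responders that supplied valid-sized non-empty replies, both quantities are pinned down w.h.c.

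I then address the first two bullets. In the \emph{invalid} case ($B_{ux} = \emptyset$ for every honest $x$), no honest $w$ contributes, so $r[u] \leq B_v \leq \zeta/16 < \zeta/4$ and Line~\ref{line:invalid_c} enforces $C_{uv} = \emptyset$. In the \emph{valid} case, Lemma~\ref{lem:min_valid_prelim_committees_b} gives $|S_u| \geq n/3$, so $r[u] \geq S_v \geq 5\zeta/16 > \zeta/4$ and $C_{uv}$ survives Line~\ref{line:invalid_c}. For any $x \in C_u$, Lemma~\ref{lem:common_core} places $x$ in every honest $B_{uw}$ with $w \in S_u$, giving count at least $S_v = r[u] - B'_v \geq r[u] - \zeta/16$; hence $x \in C_{uv}$ and $C_u \subseteq C_{uv}$.

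The main obstacle is the size bound $|C_{uv}| < 3\beta/2$, for which I mimic the proof of Lemma~\ref{lem:set_B}. Split $C_{uv}$ into its honest and Byzantine parts. Any honest $x \notin C_u$ appears in no honest $B_{uw}$ by Lemma~\ref{lem:include_honest_nodes_in_committees}, so its count is at most $B'_v \leq \zeta/16 < r[u] - \zeta/16$; hence $|C_{uv} \cap V_h| \leq |C_u|$. For the Byzantine part, I double-count the total Byzantine occurrences in the $B_{uw}$'s stored by $v$: each honest responder's set contributes at most $|B_{uw}| - |C_u| < 5\beta/4 - |C_u|$ Byzantine members (using $C_u \subseteq B_{uw}$ and the stored size filter), and each of the $B'_v$ Byzantine responders contributes at most $5\beta/4$. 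Dividing by the threshold $r[u] - \zeta/16 = S_v + B'_v - \zeta/16$ yields
\[
|C_{uv} \cap V_b| \leq \frac{S_v(5\beta/4 - |C_u|) + B'_v \cdot 5\beta/4}{S_v + B'_v - \zeta/16}.
\]
Using $|C_u| \geq \beta$ (Lemma~\ref{lem:common_core_size}) one verifies that this ratio is nondecreasing in $B'_v$, so the adversary's worst choice is $B'_v = B_v \leq \zeta/16$, at which the bound simplifies to $5\beta/4 - |C_u| + 5\zeta\beta/(64 S_v)$; substituting $S_v \geq 5\zeta/16$ reduces this to $3\beta/2 - |C_u|$. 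Adding the honest and Byzantine parts gives $|C_{uv}| < 3\beta/2$. The hard part is calibrating the implicit constant in $\zeta \in \Theta(\log n + \lambda)$ so that both $B_v \leq \zeta/16$ and $S_v \geq 5\zeta/16$ hold w.h.c.\ simultaneously---feasible since $\mathbb{E}[S_v] \geq \zeta/3 > 5\zeta/16$ leaves a constant Chernoff margin---and confirming the monotonicity and algebra across the admissible range $|C_u| \in [\beta, 9\beta/8)$.
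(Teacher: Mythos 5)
Your proof is correct and follows essentially the same route as the paper: Chernoff concentration on the sampled Byzantine and honest-supporter counts, the threshold argument in Line~\ref{line:recover_core} for core containment and invalidation, and a double-counting bound on non-core members yielding $\tfrac{3\beta}{2} - |C_u|$. The only (immaterial) difference is in the size bound's bookkeeping: the paper deducts the Byzantine responders $Y$ from the threshold and counts only honest-supplied occurrences, whereas you count all occurrences against the full threshold $r[u]-\zeta/16$ and optimize over $B'_v$ via monotonicity --- both yield the same final bound.
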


\begin{proof}
	We condition this proof on the event that all honest nodes correctly exchange the requested committees in Lines \ref{line:request_core} -- \ref{line:respond}, which holds w.h.c.\ for appropriately chosen $\zeta \in \bigO(\log n + \lambda)$ due to Lemma \ref{lem:respond}.
	
	Now fix nodes $v \in V_h$ and $u \in V$. For $w \in V$ let $X_w$ be indicator variable which is 1 if $v$ samples $w$ to $N_{uv}$ in Line \ref{line:sample_core}, else 0. Then \smash{$X_w \sim \text{Ber}(\tfrac{\zeta}{n})$}. %
	Let $Y = \sum_{w \in V_b} X_w$ be the number of Byzantine nodes and $Z = \sum_{w \in V_h, B_{uw} \neq \emptyset} X_w$ the number of honest nodes with nonempty preliminary committee that were sampled to $N_{uv}$. Note that $Y,Z$ are sums of independent random variables.
	
	The expectation of $Y$ is \smash{$\mathbb E (Y) = \tfrac{t\zeta}{n} < \tfrac{\zeta}{24}$}. For the expectation of $Z$ we distinguish two cases, which we describe with an event $E$. 	
	The first case correspons to the event $E$ that there exists an honest node $x \in V_h$ with $B_{ux} \neq \emptyset$, implying that $u$'s committee is valid and in particular \smash{$|S_u| \geq \tfrac{n}{3}$} by Lemma \ref{lem:min_valid_prelim_committees_b}. Thus  \smash{$\mathbb E(Z  \mid E) = \tfrac{|S_u|\zeta}{n} \geq  \tfrac{\zeta}{3}$}. The second case corrseponds the complementary event $\overline E$ where $B_{uw} = \emptyset$ for all $w \in V_h$ (i.e., $u$'s committee is invalid). Clearly $Z = 0$ in the event $\overline E$.

	Let us now analyze the tail probabilities of the random variables $Y,Z$ using the Chernoff bound given in Lemma \ref{lem:chernoffbound}.
	The probability that $Y \geq \zeta/16$ is at most
	\[
	\mathbb P \big(Y \geq \tfrac{\zeta}{16} \big) = \mathbb P \big(Y \geq  (1+\tfrac{1}{2}) \tfrac{\zeta}{24}\big) \leq e^{- d  \zeta},
	\]	
	for some constant $d>0$. The probability that $Z < \zeta/4$ conditioned on case 1 is at most
	\[
		\mathbb P \big(\, Z \leq \tfrac{\zeta}{4} \mid E \, \big) = \mathbb P \big( \, Z \leq  (1 - \tfrac{1}{4})\tfrac{\zeta}{3} \mid E \, \big) \leq e^ {- d  \zeta},
	\]	
	for some constant $d>0$. We choose \smash{$\zeta \geq \tfrac{1}{d}(c\ln n + \lambda)$}, then by Lemma \ref{lem:whc} we have that \smash{$Y < \tfrac{\zeta}{16}$}, w.h.c. For the same reason we have \smash{$Z > \tfrac{\zeta}{4}$} w.h.c.\ if $E$ occurs.  %
	
	In the first case ($E$ occurs), we have \smash{$Z > \tfrac{\zeta}{4}$} w.h.c., which means there is a subset of honest nodes $V' \subseteq V_h$ with $B_{uw} \neq \emptyset$ for all $w \in V'$ and \smash{$|V'| = Z  > \tfrac{\zeta}{4}$}. The overall number $r[u]$ of preliminary committees $B_{uw} \neq \emptyset$ for any $w \in V$ that $v$ obtains is  $r[u] = Y + Z$. 		
	Therefore \smash{$r[u] > \tfrac{\zeta}{4}$} thus the condition in Line \ref{line:invalid_c} does not trigger, and all nodes that are added to $C_{uv}$ are retained.
	
	By Lemma \ref{lem:common_core} it is $C_u \subseteq A_{uw}$ for $w \in V'$, therefore we have $C_u \subseteq B_{uw} $ for all $w \in V'$. In particular, it is \smash{$|V'| = Z = r[u]- Y > r[u]-\tfrac{\zeta}{16}$}. This means that all nodes in $C_u$ satisfy the requirement in Line \ref{line:recover_core} and therefore $C_u \subseteq C_{uv}$. 
	This proves the first property of this lemma.

	In the second case, we have $Z =0$ and since \smash{$Y < \tfrac{\zeta}{16}$} w.h.c., we have \smash{$r[u] = Y + Z <  \tfrac{\zeta}{16}$}, w.h.c. Therefore, the condition in Line \ref{line:invalid_c} triggers and we set $C_{uv} = \emptyset$. This shows the second property of this lemma and also implies the third property of this lemma for case 2.

	It remains to show the third property for case 1, more specifically, we need to show that in case 1 not too many nodes besides the core $C_u$ are added to  $C_{uv}$. (Note that the overall argument for this follows along the lines of Lemma \ref{lem:set_B}, where we constructed the sets $B'_{uv}$ and the size argument of $B'_{uv}$ worked in a similar fashion).

	Every node $x \in C_{uv} \setminus C_u$ that is \emph{not} part of the core $C_u$ but that is still included in $C_{uv}$ must occur in at least \smash{$r[u] - \tfrac{\zeta}{16} - Y$} of the $B_{uw}$ with $w \in V'$ to meet the threshold of \smash{$r[u] - \tfrac{\zeta}{16}$}, let this be property (a). Further, we have \smash{$Z > \tfrac{\zeta}{4}$} honest nodes $w \in V'$ with $C_u \subseteq B_{uw}$ and $|C_u| \geq \beta$ by Lemma \ref{lem:size_committees}, let this be property (b). Further, $|B_{uw}| = |A_{uw}| < \tfrac{5\beta}{4}$ by Lemmas \ref{lem:prelim_committees_b} and  \ref{lem:size_committees}, let this be property (c).
	
	Due to these three properties (a),(b) and (c), the maximum number of nodes in $C_{uv} \setminus C_u$, i.e., nodes that are in $C_{uv}$ \emph{but not} in $C_u$ can be bounded by \smash{$\big(\tfrac{5\beta}{4} \m |C_u|\big) \cdot \tfrac{Z}{r[u]-\zeta/16-Y}$}. The term \smash{$\tfrac{5\beta}{4} \m |C_u|$} describes the maximum number of nodes in some $B_{uw} \setminus C_u$ for honest $w \in V'$ (using properties (b), (c)). We multiply this with $Z$ to obtain an upper bound for the total number of non-core nodes provided by nodes $w \in V'$ via $B_{uw}$. Finally this is divided by $r[u]-\zeta/16-Y$ as this is the lower bound in how many different sets $B_{uw}, w \in V'$ a single non-core node must be contained to be included to $C_{uv}$ (property (a)). We bound this expression as follows.
	\begin{align*}
		\big(\tfrac{5\beta}{4} \m |C_u|\big) \cdot \tfrac{Z}{r[u]-\zeta/16-Y}
		& \leq \big(\tfrac{5\beta}{4} \m |C_u|\big) \cdot \tfrac{r[u]}{r[u]-\zeta/16-Y}\\
		& \hspace*{-3.7mm} \stackrel{Y < \zeta/16}{<} \big(\tfrac{5\beta}{4} \m |C_u|\big) \cdot \tfrac{r[u]}{r[u]-\zeta/8}\\
		& \hspace*{-4.7mm} \stackrel{r[u] > \zeta/4}{\leq} \big(\tfrac{5\beta}{4} \m |C_u|\big)  \cdot \tfrac{\zeta/4}{\zeta/4-\zeta/8}= \tfrac{5\beta}{2}-2|C_u| \stackrel{(b)}{\leq} \tfrac{3\beta}{2} \m |C_u|.
	\end{align*}
	
	Each honest node $x \in C_{uv} \cap V_h$ must be in the common core $C_u$ (by Lemma  \ref{lem:include_honest_nodes_in_committees}). Therefore the overall number of nodes in $C_{uv}$ can be bounded as
	\[
	|C_{uv}|  = |C_{uv} \cap V_h| + |C_{uv} \cap V_b| < |C_u| +  \tfrac{3\beta}{2}-|C_u| = \tfrac{3\beta}{2},
	\]	
	which shows the third property of this lemma.
\end{proof}

Lemma~\ref{lem:prelim_committees_c} shows that for any $u \in V$, the corresponding witness committees $C_{uw}$ are either empty for all honest $w$ (i.e., invalid) or each contains the common core $C_u$ without many additional nodes (i.e., $u$'s committee is valid). We also require that a constant fraction of nodes $u \in V$ have non-empty committees $C_{uw}$ to fulfill the availability property of Definition~\ref{def:witness_committees}. This follows immediately by combining Lemma~\ref{lem:min_valid_prelim_committees_b} with Lemma~\ref{lem:prelim_committees_c}.

\begin{lemma}
	\label{lem:min_valid_prelim_committees_c}
	Let $t < \tfrac{n}{24}$ and $u \in V$. We can choose $\zeta \in \bigO(\log n + \lambda)$ such that the following holds w.h.c.	
	\begin{itemize}
		\item There is a subset $V'$ with $|V'| \geq \tfrac{n}{6}$ such that for all $u \in V'$ it is $C_u \subseteq C_{uw}$ for all $w \in V_h$.
		\item For all $u \in V \setminus V'$ we have $C_{uw} = \emptyset $ for all $w \in V_h$.
		\item \smash{$|C_{uw}| < \tfrac{3\beta}{2}$} for all $u \in V, w \in V\in w_h$.
	\end{itemize}
\end{lemma}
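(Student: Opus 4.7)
The plan is to derive each of the three bullets directly from the corresponding guarantees in Lemma~\ref{lem:min_valid_prelim_committees_b} (describing the structure of the Phase~B committees) and Lemma~\ref{lem:prelim_committees_c} (which transfers these properties to the Phase~C committees). No new probabilistic argument is needed; the main task is to string the two lemmas together and take a union bound over $u \in V$.

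First, I would define $V'$ as the set of nodes $u \in V$ for which there exists some $w \in V_h$ with $B_{uw} \neq \emptyset$. By the third item of Lemma~\ref{lem:min_valid_prelim_committees_b}, we have $|V'| \geq n/6$ w.h.c. For every $u \in V'$, the precondition of the first bullet of Lemma~\ref{lem:prelim_committees_c} is met, so $C_u \subseteq C_{uw}$ for all $w \in V_h$, w.h.c., establishing the first bullet of the present lemma.

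Next, for every $u \in V \setminus V'$ we have $B_{uw} = \emptyset$ for all $w \in V_h$ by definition of $V'$. This triggers the second bullet of Lemma~\ref{lem:prelim_committees_c}, giving $C_{uw} = \emptyset$ for all $w \in V_h$, w.h.c., which yields the second bullet. The size bound $|C_{uw}| < \tfrac{3\beta}{2}$ asserted in the third bullet is just a restatement of the third bullet of Lemma~\ref{lem:prelim_committees_c} and thus holds w.h.c.\ per fixed $u$.

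The only subtlety is that Lemma~\ref{lem:prelim_committees_c} is stated for a fixed $u \in V$, whereas we need all three properties to hold simultaneously for every $u \in V$. I would close the proof by invoking Corollary~\ref{cor:unionbound_whc} (the w.h.c.\ union bound) over the $n$ choices of $u$, which only affects the implicit constant in $\zeta \in \bigO(\log n + \lambda)$ and thus preserves the parameter regime. I do not anticipate any real obstacle here: the whole argument is bookkeeping, and all heavy lifting (in particular the bandwidth restriction $b \leq 1/24$ and the threshold $t < n/24$) has already been absorbed into the two lemmas being combined.
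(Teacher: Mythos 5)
Your proposal is correct and follows essentially the same route as the paper: define $V'$ via the third bullet of Lemma~\ref{lem:min_valid_prelim_committees_b}, then apply the three bullets of Lemma~\ref{lem:prelim_committees_c} to $V'$ and $V \setminus V'$. The only minor difference is your explicit closing union bound over $u \in V$; this is not wrong, but it is already absorbed into the statement of Lemma~\ref{lem:prelim_committees_c}, which is phrased to hold for all $u \in V$ simultaneously w.h.c.
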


\begin{proof}
	The third property of  Lemma \ref{lem:min_valid_prelim_committees_b} shows that there is a subset $V'$ with $|V'| \geq \tfrac{n}{6}$, such that there exists a $w \in V_h$ with $B_{uw} \neq \emptyset$. Based on this, the first property of Lemma \ref{lem:prelim_committees_c} shows that for each $u \in V'$ we have $C_u \subseteq B_{uw}$ for all honest nodes $w \in V_h$. This gives us the first property of this lemma. The second and third property of this Lemma are immediate consequences of the second and third properties of Lemma \ref{lem:prelim_committees_c}.
\end{proof}

It remains to prove Theorem~\ref{thm:set_up_committee}. We have already shown that the witness committees satisfy the properties in Definition~\ref{def:witness_committees}, largely summarized in Lemma~\ref{lem:min_valid_prelim_committees_c}. The proof now proceeds in two parts: first, we establish the correctness of the overall algorithm (Algorithms \ref{alg:PhaseA}, \ref{alg:PhaseB}, and \ref{alg:PhaseC} executed sequentially) with respect to membership, agreement, and availability. Second, we argue that its round complexity and required bandwidth meet the bounds stated in the theorem.

\begin{proof}[Proof of Theorem \ref{thm:set_up_committee}]
	For the correctness we condensed the main claims we need in Lemma \ref{lem:min_valid_prelim_committees_c}. The following holds w.h.c.:
	
	\begin{itemize}
		\item \textbf{Membership:} $C_{uv}$ contains less than \smash{$\tfrac{3\beta}{2}$} nodes due to the third property of Lemma \ref{lem:min_valid_prelim_committees_c}. 
		An honest node $v \in V_h$ node can only be member of any $C_{uw}$ for $w \in V_h$, if $u \in M_v$ in Line \ref{line:draft_prelim_committee}. Since $v$ adds at most $\gamma$ nodes to $N_{uv}$, $v$ can be member of $C_{uw}$ for $w \in V_h$, for at most $\gamma$ distinct $u \in V$. Due to the value $\gamma$ chosen in Lemma \ref{lem:common_core_size}, we have $\gamma \leq 2 \beta$.
		
		\item \textbf{Agreement:} Lemma \ref{lem:min_valid_prelim_committees_c} gives 
		us the property that for all $u \in V$ it is either $C_u \subseteq C_{uw}$ for all $w \in V_h$ or $C_{uw} = \emptyset $ for all $w \in V_h$. This gives us agreement either on validity of $u$'s committee and specifically on the common core $C_u$ or the invalidity of $u$'s committee. The minimum size of $\beta$ of the core $C_u$ was shown in Lemma \ref{lem:common_core_size}.
		
		\item \textbf{Availability:} In particular, Lemma \ref{lem:min_valid_prelim_committees_c} shows that we have $C_u \subseteq C_{uw}$ for all $w \in V_h$ for at least $\tfrac{n}{6}$ nodes $u \in V$, which implies the availability property.
	\end{itemize}	

We now analyze the round complexity and per-node bandwidth usage. Our pseudocode separates local computations, which complete instantly in this model, from message exchanges that appear under ``\textbf{send}~\dots~\textbf{to}~\dots~\textbf{foreach}~\dots'' and require multiple rounds. One exception is the single call to a consensus subroutine in Algorithm~\ref{alg:PhaseB}, where nodes in the core $C_u$ exchange $\bigO(\beta^5) = \bigO(\poly{\log n + \lambda})$ bits per node (Lemma~\ref{lem:flag_b_2}).

For other transmissions, each honest node sends at most $n \cdot x$ total messages, where $x \in \bigO(\log n + \lambda)$. This follows because $x$ is proportional to $\beta, \gamma, \zeta$, each bounded by $\bigO(\log n + \lambda)$ (Lemmas~\ref{lem:size_committees}, \ref{lem:respond} and \ref{lem:prelim_committees_c}). Each message contains at most $\beta$ node identifiers, requiring $\beta \log n \in \bigO\big(\log n(\log n + \lambda)\big)$ bits. Multiplying by the total number of messages, we obtain an overall bit complexity of $\bigO\big(n \log n (\log n + \lambda)^2\big) = \bigO\big(n \cdot \poly{\log n + \lambda}\big)$.

The required number of rounds is $\bigO\big(\tfrac{n \cdot \poly{\log n + \lambda}}{\sigma}\big)$, where $\sigma$ is the bandwidth (bits per round) available to each node. While we can trade off between round complexity and message size through $\sigma$, the theorem's claim holds for $\sigma = \poly{\log n + \lambda}$.
\end{proof}

\section{Solving Distributed Tasks with Witness Committees}
\label{sec:applications}
	
In the second part of this article we aim to implement solutions to distributed problems using the witness committees that we computed in the previous section. Specifically, witness committees allow us to tackle the generic tasks of disseminating and aggregating information as well as forming consensus with resiliency to Byzantine failures. 

The road-map of this section is that we will first introduce specialized problems for which we develop partial solutions in conjunction with the system of witness committees. These will then serve as subroutines in algorithms for the aforementioned tasks.

Note that most of the algorithms in this section are deterministic and designed to work in an asynchronous setting. Furthermore, our algorithms build on top of each other as a protocol stack. For these reasons we present all algorithms as event-based pseudo-code where each algorithm implements a solution to a specific distributed problem we carefully define in advance (style and syntax adhere closely to \cite{DBLP:books/daglib/0025983}).

\subsection{Lazy Consensus within Witness Committees}

Let us first consider consider a sub-problem we denote with ``lazy consensus,'' used as a subroutine for reliable broadcast. Unlike the consensus in Definition~\ref{def:consensus}, nodes in a subset $V' \subseteq V$ must decide a value only if all honest nodes in $V'$ propose it; otherwise, no decision is required. However, if one honest node in $V'$ decides a value, every other honest node in $V'$ must decide the same value.

\begin{definition}[Lazy Consensus]
	\label{def:lazy_consensus}
	Let $V' \subseteq V$ and each $v \in V'$ proposes a value. The lazy consensus problem is solved when the following holds.
	\begin{itemize}
		\item \textbf{Integrity}: Each honest $v \!\in\! V'$ decides at most one value. If $\,V'\!$ contains honest nodes, any $v \in V'$ only decides a value proposed by an honest node in $V'$.
		\item \textbf{Validity}: If all honest $v \in V'$ propose the same value, then every node decides that value.
		\item \textbf{Agreement}: If an honest node $v \in V'$ decides a value, then every honest node in $V'$ decides the same.
	\end{itemize}	
\end{definition}

This problem cannot be solved on any subset $V'$ of $V$, first because nodes in $V'$ might not know of each other, second because $V'$ could contain a large number of Byzantine nodes. Instead we give an algorithm that solves this problem on one of the witness committees $C_u$ for $u \in V$ in. 	
The key observation is that Bracha's algorithm \cite{DBLP:conf/podc/Bracha84}  essentially contains a solution of the case where $V' = V$ and less than $n/3$ nodes are Byzantine. We employ an adaptation of Bracha's algorithm configured to work within a witness committee.

\begin{algorithm}
	\caption{\textbf{Lazy Consensus} \Comment{\textit{executed by $v \in V$}}}
	\label{alg:lc}
	
	\SetKwInOut{Input}{input}
	\SetKwInOut{Output}{output}	
	\SetKwInOut{Implements}{implements}
	\SetKwInOut{Requires}{precondition}	
	\SetKwProg{On}{upon}{ do}{end}
	\SetKwProg{OnEvent}{upon event}{ do}{end}
	\SetKw{Or}{or}
	\SetKw{And}{and}
	
	\DontPrintSemicolon		
	
	\Implements{Instance \textit{lc} that solves the lazy consensus problem for $V' = C_u$ for some $u$ according to Definition \ref{def:lazy_consensus}.}
	\Requires{System of witness committees satisfying Definition \ref{def:witness_committees}}
	
	\vspace*{0.3\baselineskip}		
	
	\OnEvent{$\langle \textit{lc}, \texttt{propose}, x , u \rangle$ \And $C_{uv} \neq \emptyset$ \And $v \in C_{uv}$}{send $\langle \texttt{echo}, x, u \rangle$ to all $r \in C_{uv}$ \Comment{\textit{$v$ also ``sends'' the echo to itself}}}

	\vspace*{0.2\baselineskip}	
	
	\On{receiving $\langle \texttt{echo}, x, u \rangle$ from at least $\beta$ distinct nodes $w \in C_{uv}$ \And $v \in C_{uv}$\label{line:vote1}}{send $\langle \texttt{vote}, x, u \rangle$ to all $w \in C_{uv}$}
	
	\vspace*{0.2\baselineskip}	
	
	\On{receiving $\langle \texttt{vote}, x, u \rangle$ from at least $\beta/2$ distinct nodes $w \in C_{uv}$ \And $v \in C_{uv}$ \label{line:vote2}}{send $\langle \texttt{vote}, x, u \rangle$ to all $w \in C_{uv}$}
	
	\vspace*{0.2\baselineskip}	
	
	\On{receiving $\langle \texttt{vote}, x, u \rangle$ from at least $\beta$ distinct nodes $w \in C_{uv}$ \And $v \in C_{uv}$ \label{line:lc_decide}}{\textbf{trigger} $\langle \textit{lc}, \texttt{decide}, x \rangle$}		
\end{algorithm}

\begin{lemma}
	\label{lem:lazy_consensus}
	Let $u \in V$ with $C_{uv} \neq \emptyset$ for $v \in V_h$. %
	Then Algorithm \ref{alg:lc} solves the lazy consensus problem as given in Definition \ref{def:lazy_consensus} for $V' = C_u$. The number of messages sent by each honest node is $\bigO(\beta)$.
\end{lemma}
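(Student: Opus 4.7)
The plan is to adapt Bracha's classical consensus analysis to a single witness committee, with the common core $C_u$ playing the role of the honest majority. I would first collect the structural facts established in earlier lemmas: by Lemmas~\ref{lem:common_core_size} and \ref{lem:min_valid_prelim_committees_c} the honest portion of any non-empty $C_{uv}$ is exactly $C_u$ with $|C_u|\geq\beta$, while $|C_{uv}|<3\beta/2$; hence strictly fewer than $\beta/2$ Byzantine nodes populate any $C_{uv}$. I would also fix the standard convention (implicit in the pseudocode) that each honest node sends the echo, respectively the vote, for at most one value per instance.

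For \textbf{Validity}, if every honest $v\in C_u$ proposes $x$, each broadcasts $\langle\texttt{echo},x,u\rangle$ to its $C_{uv}\supseteq C_u$; every honest node then sees $\geq\beta$ echoes, sends $\langle\texttt{vote},x,u\rangle$ via Line~\ref{line:vote1}, and decides $x$ upon collecting the resulting $\geq\beta$ honest votes. For \textbf{Agreement}, the amplification rule on Line~\ref{line:vote2} is the key: any honest decision of $x$ requires $\beta$ votes, more than $\beta/2$ of which originate from honest members of $C_u$; since honest voters address all of $C_u$, every other honest node in $C_u$ eventually observes more than $\beta/2$ votes for $x$, amplifies, and so the full honest core of size $\geq\beta$ ends up voting for $x$, letting every honest node in $C_u$ decide $x$.

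The main work is \textbf{Integrity}. Uniqueness of the decided value is a Bracha-style quorum-intersection argument tailored to $|C_{uv}|<3\beta/2$: two $\beta$-sized vote quorums inside such a committee must intersect in more than $\beta/2$ members, and since fewer than $\beta/2$ of those can be Byzantine, the intersection contains an honest node, contradicting the one-vote-per-value convention. The ``value proposed by an honest node'' clause I would prove by examining the first honest node in wall-clock time ever to emit a vote for the eventually decided $x$: at that moment fewer than $\beta/2$ votes for $x$ exist and all are from Byzantine senders, so Line~\ref{line:vote2} cannot have fired; consequently that node must have collected $\beta$ echoes for $x$ at Line~\ref{line:vote1}, at least one from an honest source, which must have proposed $x$.

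The message count is then routine, since each honest node sends at most one echo and one vote, each to the $O(\beta)$ nodes in its local view $C_{uv}$. The principal obstacle is the uniqueness step of integrity: it hinges on the tight bound $|C_{uv}|<3\beta/2$ playing the role of Bracha's global $n>3t$ condition, and must carefully leverage the agreement property from Definition~\ref{def:witness_committees} so that all honest nodes reason about the \emph{same} core $C_u$, even though their local views $C_{uv}$ may differ in their Byzantine membership.
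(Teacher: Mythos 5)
Your structural setup, Validity, and Integrity arguments are sound and track the paper's route closely. In fact, for the ``decided value was honestly proposed'' half of Integrity, your first-honest-voter argument (the earliest honest voter cannot have fired Line~\ref{line:vote2} because all votes it has seen are Byzantine, hence fewer than $\beta/2$, so it must have fired Line~\ref{line:vote1} on an echo quorum containing an honest proposer) is cleaner than the paper's two-case split, which is slightly loose about the possibility that the ``other honest voter'' it invokes might itself have voted via Line~\ref{line:vote2}.

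However, your Agreement argument has a genuine gap. You jump directly from ``$v$ decides $x$, hence more than $\beta/2$ honest votes for $x$ exist'' to ``the full honest core of size $\geq\beta$ ends up voting for $x$.'' That conclusion does not follow: under the once-per-value convention you yourself fix, an honest node that has already voted for some $y\neq x$ will \emph{not} amplify to $x$ when it later sees $\beta/2$ votes for $x$. Since $|C_u|$ may strictly exceed $\beta$ (up to $<9\beta/8$ by Lemma~\ref{lem:common_core_size}), it is not immediate that fewer than $\beta/2$ honest nodes could have committed to a different value, and if even a handful do, the threshold of $\beta$ votes for $x$ at Line~\ref{line:lc_decide} may never be reached at every honest node. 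The paper closes precisely this gap by devoting most of its Agreement proof to the auxiliary claim that \emph{all honest votes in $C_u$ are for the same value}: first, any two honest nodes voting via Line~\ref{line:vote1} vote identically, via a quorum-intersection argument carried out inside $C_u$ (not just inside a single $C_{uv}$); then, by induction on the wall-clock-first node to vote via Line~\ref{line:vote2}, every such vote inherits that common value. Only then does the amplification step you describe actually deliver the whole core. Without this lemma, your Agreement proof is incomplete, and this is the nontrivial heart of the lemma rather than a detail one can wave away.

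A minor point: the fact that the honest portion of every non-empty $C_{uv}$ is exactly $C_u$ follows from Lemma~\ref{lem:include_honest_nodes_in_committees} (as used inside the proof of Lemma~\ref{lem:prelim_committees_c}), not from Lemmas~\ref{lem:common_core_size} and \ref{lem:min_valid_prelim_committees_c} alone; the latter give the size and containment bounds but not the ``exactly'' statement.
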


\begin{proof}
	\textbf{Integrity:} for an honest node $v \in C_u$ to decide $x$ in Line \ref{line:lc_decide}, it requires messages $\langle \texttt{vote}, x, u \rangle$ from at least $\beta$ distinct nodes in $C_{uv}$. Note that, since $v$ knows a ``local'' version $C_{uv}$ of $u$'s committee, it can deduce from whom to accept these messages. Since \smash{$|C_{uv}| < \tfrac{3\beta}{2}$} this condition can be satisfied for at most one value $x$. 		
	
	For any honest $v \in C_u$ to decide on a value $x$ in Line \ref{line:lc_decide}, at least $\beta$ votes $\langle \texttt{vote}, x, u \rangle$ for $x$ are required. Since $|C_u| \geq \beta$ and  \smash{$|C_{uv}| < \tfrac{3\beta}{2}$},  this can only happen if $x$ was voted for by at least $\beta/2$ nodes in $C_u$. There are two possibilities that $v \in C_u$ sends $\langle \texttt{vote}, x, u \rangle$ in the first place, either in Line \ref{line:vote1} or in Line \ref{line:vote2}.
	
	Assume that $v$ sends a vote for $x$ in Line  \ref{line:vote1}. That means that $x$ must have been echoed by at least $\beta$ nodes in $C_{uv}$.This can only happen for a value $x$ that was proposed by at least $\beta/2$ honest nodes. The other possibility is that $v$ sends a $\langle \texttt{vote}, x, u \rangle$ after fulfilling the condition in Line \ref{line:vote2}. Note that this cannot be caused by the Byzantine nodes voting for some value $x$. Consequently, $x$ must have been voted for by at least one honest node after satisfying \ref{line:vote1}, which by the argument above implies that $x$ was proposed by at least $\beta/2$ honest nodes.
	
	\textbf{Validity:} Assume all nodes in $C_u$ propose $x$. Then each $w \in C_u$ obtains a message $\langle \texttt{echo}, x, u \rangle$ from every node $C_u$ (including an echo from $v$ to itself). Since $|C_u| \geq \beta$ this implies that each node in $v \in C_u$ sends a message $\langle \texttt{vote}, x, u \rangle$ to $C_{uv}$, thus each node in $C_u$ obtains at least $\beta$ messages $\langle \texttt{vote}, x, u \rangle$ which satisfies Line \ref{line:lc_decide} thus $v$ decides $x$.
	
	\textbf{Agreement:} we first show that for any two nodes $v_1,v_2 \in C_u$ that send $\langle \texttt{vote}, x_1, u \rangle$, $\langle \texttt{vote}, x_2, u \rangle$ in Line \ref{line:vote1} it is $x_1 = x_2$. For a contradiction assume $x_1 \neq x_2$. This implies that there is a subset $V_1 \subseteq C_{uv_1}$ with $|V_1| \geq \beta$ that all sent $\langle \texttt{echo}, x_1, u \rangle$ to $v_1$. Analogously  for $v_2$, there is a corresponding subset $V_2 \subseteq C_{uv_2}$ with $|V_2| \geq \beta$ that echoed $x_2$.

	Since $C_u \subseteq C_{uv_1}$,  \smash{$|C_{uv_1}|< \tfrac{3\beta}{2}$} and $|C_u| \geq \beta$, we have \smash{$|C_{uv_1} \setminus C_u| < \tfrac{\beta}{2}$}, therefore \smash{$|V_1\setminus C_u| < \tfrac{\beta}{2}$}. 
	This implies \smash{$|V_1 \cap C_u| = |V_1| \m |V_1\setminus C_u| > \beta \m \tfrac{\beta}{2} = \tfrac{\beta}{2}$}. The same for $v_2$:  \smash{$|V_2 \cap C_u| > \tfrac{\beta}{2}$}. This further implies that $V_1$ and $V_2$ intersect in at least one node $w \in C_u$, i.e., $w$ must have sent $\langle \texttt{echo}, x_1,u \rangle$ to $v_1$ and $\langle \texttt{echo}, x_2,u \rangle$ to $v_2$, a contradiction as $w \in C_u$ is honest.
	
	Now let us show that the same holds for the messages $\langle \texttt{vote}, x, u \rangle$ that nodes in $C_u$ send in Line \ref{line:vote2}, i.e., that $x$ is the same for all nodes in $C_u$. Let $v$ be the \emph{first} node that sends $\langle \texttt{vote}, x, u \rangle$ in Line \ref{line:vote2}. Let $V^* \subseteq C_{uv}$ be the set of at least  $\beta/2$ nodes from which $v$ received $\langle \texttt{vote}, x, u \rangle$. Since \smash{$|V^*\setminus C_u| < \tfrac{\beta}{2}$}, the set $V^*$ must include a node $w \in C_u$.
	
	As $v$ is the first to sent $\langle \texttt{vote}, x, u \rangle$ in Line \ref{line:vote2}, $w$ must have sent its message $\langle \texttt{vote}, x, u \rangle$ in Line \ref{line:vote1}. Therefore $v$ must have voted for the same value $x$ that $w$ and any other node in $C_u$ voted for in Line \ref{line:vote1}. Repeating this argument inductively, we obtain that $x$ must also be the same for all messages $\langle \texttt{vote}, x, u \rangle$ that are sent by nodes in $C_u$ in Line \ref{line:vote2}.
	
	Knowing that all $v \in C_u$ that sent $\langle \texttt{vote}, x, u \rangle$ have the same value $x$ we can show agreement as follows. Assume that $v$ decides $x$ in Line \ref{line:bc_n2c_deliver}. This means it received $\beta$ votes $\langle \texttt{vote}, x, u \rangle$  from nodes in $C_{uv}$. Since $|C_u| \geq \beta$ and $|C_{uv}|< 3\beta/2$, node $v$ received more than $\beta/2$ votes from $C_u$. 
	
	This means that these $\beta/2$ votes $\langle \texttt{vote}, x, u \rangle$ will reach every other node in $C_u$ so each node in $C_u$ will themselves send $\langle \texttt{vote}, x, u \rangle$ in Line \ref{line:vote2} all with the same value $x$. Since $|C_u| \geq \beta$ this implies that the condition in Line \ref{line:lc_decide} is met by every node in $C_u$ eventually, so every node in $C_u$ will decide $x$ eventually.
	
	\textbf{Number of messages:} Each node $v$ broadcasts to $C_{uv}$ at most 4 times and $|C_{uv}| \in \bigO(\beta)$.
\end{proof}

\subsection{Reliable Broadcast among Witness Committees}

This subsection aims to implement reliable broadcast among specific witness committees,
rather than the full network. To that end, we define a generalized version of reliable broadcast
for sets of senders and receivers: each sender has a message to transmit, and each receiver
needs to deliver exactly one message. All honest receivers must deliver the same message,
and if all honest senders agree on a message, that same message must be delivered. If there
are no honest senders at all, then no delivery is required.

\begin{definition}[Generalized Reliable Broadcast]
	\label{def:generalized_reliable_broadcast}
	Let $S \subseteq V$ be a set of sender nodes, where each honest node $s \in S$ has at most one message it wants to send.\footnote{For simplicity we consider only the number of messages sent by each node assuming neglecting the size of the broadcast message. The load per node in bits is proportional to the number of messages times the size of the original broadcast message.} Let $R \subseteq V$ be a dedicated set of receiver nodes, not necessarily disjoint with $S$. The reliable broadcast problem with senders $S$ and receivers $R$ is solved if the following properties hold.
	\begin{itemize}
		\item \textbf{Integrity}: Every honest $r \in R$ delivers at most one message. If $S\!$ contains honest nodes, then any delivered message must have been sent by some honest $s \in S$.
		\item \textbf{Validity}: If all honest $s \in S$ have the same msg.\ $M$, then each honest $r \in R$ delivers $M$.
		\item \textbf{Agreement}: If any honest $r \in R$ delivers $M$, then all honest receivers in $R$ deliver $M$.
	\end{itemize}
\end{definition}
	
	Note that the lazy consensus defined earlier can be seen as a special case of the generalized reliable broadcast for $S = R$. Furthermore, the standard reliable broadcast problem (Defintion \ref{def:reliable_broadcast}) corresponds to the case where there is just one sender and $V$ is the set of receivers. %

	As in the lazy consensus problem, the nodes in $S$ and $R$ might not necessarily know of each other and might contain Byzantine majorities, thus we cannot hope to solve the generalized reliable broadcast problem from Definition \ref{def:generalized_reliable_broadcast} in the Byzantine setting {for arbitrary subsets}. However, in the following we show that generalized reliable broadcasts within a system of witness committee can actually be done efficiently, by exploiting its structure and the common knowledge of it.

	Using Algorithm \ref{alg:lc} as subroutine, we implement a reliable broadcast from a dedicated source node $s$ to the committee of some node $u$ in Algorithm \ref{alg:bc_node_committee}. This requires that $u$'s committee it is valid, i.e. every node $v \in V_h$ knows a witness committee $C_{uv} \neq \emptyset$ with the properties from Definition \ref{def:witness_committees}. 
	In our pseudo code, Algorithm \ref{alg:bc_node_committee} represents an implementation of a generic module \textit{rbc-n2c} that supplies a solution of the reliable broadcast problem from Definition \ref{def:generalized_reliable_broadcast} from a single node $S = \{s\}$ to a committee $R = C_u$.

\begin{algorithm}
	\caption{\textbf{Reliable Broadcast - Node to Committee} \Comment{\textit{executed by $v \in V$}}}
	\label{alg:bc_node_committee}
	
	\SetKwInOut{Input}{input}
	\SetKwInOut{Output}{output}	
	\SetKwInOut{Implements}{implements}
	\SetKwInOut{Requires}{precondition}	
	\SetKwProg{On}{upon}{ do}{end}
	\SetKwProg{OnEvent}{upon event}{ do}{end}
	\SetKw{Or}{or}
	\SetKw{And}{and}
	
	\DontPrintSemicolon		
	
	\Implements{Reliable broadcast instance \textit{rbc-n2c} from node $s$ to $u$'s committee with $S = \{s\}$ and $R = C_u$ according to Definition \ref{def:generalized_reliable_broadcast}.}
	\Requires{System of witness committees satisfying Definition \ref{def:witness_committees}}
	
	\vspace*{0.3\baselineskip}		
	
	\OnEvent{$\langle \textit{rbc-n2c}, \texttt{broadcast}, M, u \rangle$ \And $C_{uv} \neq \emptyset$}{send $\langle \texttt{transmit-n2c}, M, v, u \rangle$ to all $r \in C_{uv}$}
	
	\vspace*{0.2\baselineskip}	
	
	\On{receiving $\langle \texttt{transmit-n2c}, M, s, u \rangle$ from $s$ \And $v \in C_{uv}$}{\textbf{trigger} $\langle \textit{lc}, \texttt{propose}, M, u \rangle$\Comment{\textit{see Algorithm \ref{alg:lc}}}}
		
	\vspace*{0.2\baselineskip}	
	
	\OnEvent{$\langle \textit{lc}, \texttt{decide}, M \rangle$ \label{line:bc_n2c_deliver}}{\textbf{trigger} $\langle \textit{rbc-n2c}, \texttt{deliver}, M \rangle$}		
\end{algorithm}

To obtain correctness and performance guarantees of Algorithm \ref{alg:bc_node_committee} we exploit the properties of Algorithm \ref{alg:lc} for lazy consensus.

\begin{lemma}
	\label{lem:bc_node_committee}
	Let $u \in V$ with $C_{uv} \neq \emptyset$ for $v \in V_h$.  %
	Algorithm \ref{alg:bc_node_committee} solves the reliable broadcast problem as given in Definition \ref{def:generalized_reliable_broadcast} between a dedicated source node $s \in V$ and the common core $C_u$. The number of messages sent by the honest nodes from the set $C_u \cup \{s\}$ is $\bigO(\beta)$.
\end{lemma}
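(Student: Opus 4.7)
The plan is to reduce each of the three properties of generalized reliable broadcast (integrity, validity, agreement) to the analogous properties of the underlying lazy-consensus instance \textit{lc} from Lemma~\ref{lem:lazy_consensus}, together with the reliability of point-to-point links. Since $S = \{s\}$ and $R = C_u$, and since each honest $v \in C_u$ proposes to \textit{lc} exactly the value it received from $s$ via $\langle \texttt{transmit-n2c}, M, s, u\rangle$, the delivery event in Line \ref{line:bc_n2c_deliver} is triggered precisely when \textit{lc} decides.

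For integrity, I would first note that by the integrity of \textit{lc}, each honest $v \in C_u$ decides at most one value and therefore delivers at most one message. If $s$ is honest, then every honest $v \in C_u$ receives the same unique message $M$ from $s$ (by reliability of the point-to-point link and the fact that honest $s$ sends at most one message in its broadcast event), and hence proposes $M$ to \textit{lc}. The integrity of \textit{lc} then says that any value decided must have been proposed by an honest node in $C_u$, but since all honest proposals equal $M$, the only value that can ever be delivered is $M$. For validity, if $s$ is honest, all honest $v \in C_u$ propose $M$ to \textit{lc} as above, and the validity of \textit{lc} (applied to the set $C_u$) guarantees that every honest $v \in C_u$ decides $M$ and thus delivers $M$. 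For agreement, the implication is immediate: if some honest $v \in C_u$ delivers $M$, it must have decided $M$ in \textit{lc}, and agreement of \textit{lc} forces every honest $w \in C_u$ to decide and deliver the same $M$.

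For the message-count bound, the source $s$ sends a single $\texttt{transmit-n2c}$ message to each node in its local view $C_{us}$, which has size $|C_{us}| < \tfrac{3\beta}{2} \in \bigO(\beta)$ by the membership property of Definition~\ref{def:witness_committees}. Each honest $v \in C_u$ subsequently triggers one instance of \textit{lc}, and by Lemma~\ref{lem:lazy_consensus} this costs $\bigO(\beta)$ messages per node. Summing over $s$ and $C_u$ yields the claimed $\bigO(\beta)$ bound per honest node in $C_u \cup \{s\}$.

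The only subtle case is when $s$ is Byzantine and distributes inconsistent values $M_1, M_2, \dots$ to different honest members of $C_u$. Here different honest proposals enter \textit{lc}, and one must check that the properties of \textit{lc} still suffice. I would argue this directly: integrity of the generalized broadcast is vacuous when $S$ has no honest node, so we only need agreement, which follows from agreement of \textit{lc} regardless of what Byzantine proposals or withholding behavior $s$ exhibits. No extra work is required beyond this observation, so the proof is essentially a short reduction to Lemma~\ref{lem:lazy_consensus}.
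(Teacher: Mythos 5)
Your proof is correct and takes essentially the same approach as the paper: reducing integrity, validity, and agreement directly to the corresponding properties of the lazy-consensus subroutine (Lemma~\ref{lem:lazy_consensus}), and bounding the message count by $|C_{us}| \in \bigO(\beta)$ plus the $\bigO(\beta)$ cost of \textit{lc}. You are somewhat more explicit about the Byzantine-sender case, but this only elaborates what the paper's shorter argument already implies.
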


\begin{proof}
	\textbf{Integrity:} for honest node $v \in C_u$ to deliver $M$ in Line \ref{line:bc_n2c_deliver}, there must have been a decision on $M$ using the lazy consensus. By Lemma \ref{lem:lazy_consensus}, Algorithm \ref{alg:lc} solves the lazy consensus problem, thus integrity of this broadcast follows from the integrity of lazy consensus.
	
	\textbf{Validity:} we show that if $s$ is honest, then all $r \in C_u$ deliver $M$. Since $C_u \subseteq C_{us}$ and all nodes in $C_u$ are honest, each $r \in C_u$ obtains a message $\langle \texttt{transmit-n2c}, M, s, u \rangle$ from $s$. Therefore, each $r \in C_u$ proposes the same message $M$ and by the validity property of lazy broadcast also decides on $M$. Consequently,  each $r \in C_u$ triggers the condition on Line \ref{line:bc_n2c_deliver} for message $M$ and delivers $M$.
	
	\textbf{Agreement:} If some message $M$ is delivered by some $v \in C_u$, then $v$ decided $M$ in the lazy consensus as per the condition in Line \ref{line:bc_n2c_deliver}. By the agreement condition of lazy consensus, every other node in $C_u$ will decide $M$ as well, which implies that every node in $C_u$ will deliver $M$ eventually.
	
	\textbf{Number of messages:} By Lemma \ref{lem:lazy_consensus} at most $\bigO(\beta)$ per honest node are required for solving the lazy broadcast instance and the sender broadcasts once to $C_{uv}$ and $|C_{uv}| \in \bigO(\beta)$.
\end{proof}

Next, we give a broadcast from a common core of a valid committee to a single node $r \in V$. Algorithm \ref{alg:bc_committee_node} implements a reliable broadcast module \textit{rbc-c2n} that solves the reliable broadcast problem from Definition \ref{def:generalized_reliable_broadcast} with $S=C_u$ and $R = \{ r \}$. The basic idea is that a message $M$ is only delivered by $r$ if $M$ was received by at least $\beta$ distinct nodes in $C_{ur}$.

\begin{algorithm}
	\caption{\textbf{Reliable Broadcast - Committee to Node} \Comment{\textit{executed by $v \in V$}}}
	\label{alg:bc_committee_node}
	
	\SetKwInOut{Input}{precondition}
	\SetKwInOut{Output}{output}	
	\SetKwInOut{Implements}{implements}
	\SetKwProg{On}{on}{ do}{end}
	\SetKwProg{OnEvent}{on event}{ do}{end}	
	\SetKw{Or}{or}
	\SetKw{And}{and}
	
	\DontPrintSemicolon		
	
	\Implements{Reliable Broadcast instance \textit{rbc-c2n} from $S = C_u$ to a single node $R=\{r\}$ according to Definition \ref{def:generalized_reliable_broadcast}.}
	\Input{System of witness committees satisfying Definition \ref{def:witness_committees}}
	
	\vspace*{0.3\baselineskip}	
	
	\OnEvent{$\langle \textit{rbc-c2n}, \texttt{broadcast}, M, u, r \rangle$ \And $v \in C_{uv}$ \And $C_{ur} \neq \emptyset$}{send $\langle \texttt{transmit-c2n}, M, u, r \rangle$ to $r$}
	
	\vspace*{0.2\baselineskip}	
	
	\On{receiving $\langle \texttt{transmit-c2n}, M, u, r\rangle$ from $\geq \beta$ distinct nodes $w \in C_{ur}$ \And $v = r$\label{line:bc_c2n_deliver}}{\textbf{trigger} $\langle  \textit{rbc-c2n}, \texttt{deliver}, M\rangle$}	
\end{algorithm}

\begin{lemma}
	\label{lem:bc_committee_node}
	Let $r,u \in V$ with $C_{uv} \neq \emptyset$ for $v \in V_h$. 
	Then Algorithm \ref{alg:bc_committee_node} solves the reliable broadcast problem as given in Definition \ref{def:generalized_reliable_broadcast} between the common core $S = C_u$ and $R = \{r\}$. Each honest node in $C_u$ sends at most one message.
\end{lemma}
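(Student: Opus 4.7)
The plan is to exploit the tight numerical relationship between the common core $C_u$ and the receiver's local view $C_{ur}$ of $u$'s committee. By Definition~\ref{def:witness_committees} and Lemma~\ref{lem:common_core_size} we have $C_u \subseteq C_{ur}$ with $|C_u| \geq \beta$ while $|C_{ur}| < \tfrac{3\beta}{2}$, so $C_{ur}$ contains fewer than $\tfrac{\beta}{2}$ Byzantine nodes. This quorum-intersection argument drives each of the four properties.

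For integrity, note that if $r$ delivers $M$ then it has received $\langle \texttt{transmit-c2n}, M, u, r\rangle$ from $\geq \beta$ distinct nodes in $C_{ur}$. Subtracting the $<\tfrac{\beta}{2}$ Byzantine members of $C_{ur}$ leaves $>\tfrac{\beta}{2}$ honest senders, all of which belong to $C_u$, so any delivered message was truly sent by some honest $s \in C_u$. To rule out two distinct deliveries, I would use the standard two-quorum intersection: two accepted messages $M_1 \neq M_2$ would each need $\geq \beta$ distinct senders in $C_{ur}$, forcing their sets of senders to overlap in more than $2\beta - \tfrac{3\beta}{2} = \tfrac{\beta}{2}$ nodes; since at most $\tfrac{\beta}{2}$ of $C_{ur}$ is Byzantine, this intersection contains at least one honest node $w \in C_u$, contradicting that $w$ sends a single $\texttt{transmit-c2n}$ message.

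For validity, if every honest $s \in C_u$ holds the same message $M$, then each such $s$ sends $\langle \texttt{transmit-c2n}, M, u, r\rangle$ to $r$ because the triggering precondition $s \in C_{us}$ is satisfied (by $C_u \subseteq C_{us}$) and $C_{ur} \neq \emptyset$ by the lemma's hypothesis. Since $C_u \subseteq C_{ur}$ and $|C_u| \geq \beta$, node $r$ collects $\geq \beta$ identical messages from distinct members of $C_{ur}$, so Line~\ref{line:bc_c2n_deliver} fires and $r$ delivers $M$. Agreement is immediate because $R = \{r\}$ is a singleton, and the message bound follows because each honest node in $C_u$ emits exactly one $\texttt{transmit-c2n}$ message per broadcast event.

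The main obstacle I anticipate is the integrity-uniqueness step, since it must simultaneously use the upper bound $|C_{ur}| < \tfrac{3\beta}{2}$, the lower bound $|C_u| \geq \beta$, and the fact that all nodes in $C_u$ are honest; a slightly looser constant in any of these would break the argument. Everything else is essentially bookkeeping on top of the invariants established for systems of witness committees in Section~\ref{sec:computing_committees}.
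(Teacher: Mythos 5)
Your proof is correct and follows essentially the same quorum-intersection approach as the paper: the bounds $|C_u| \geq \beta$, $C_u \subseteq C_{ur}$, and $|C_{ur}| < \tfrac{3\beta}{2}$ give integrity, $|C_u| \geq \beta$ gives validity, and agreement is trivial for a singleton receiver. Your uniqueness argument for integrity is in fact spelled out more carefully than the paper's one-line counting claim (which implicitly needs the same observation that the overlap of the two sender sets must contain an honest node, since a Byzantine node could send two different messages).
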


\begin{proof}
	\textbf{Integrity:} 
	Since $ |C_{ur}|  < \tfrac{3\beta}{2}$ the threshold of $\beta$ messages $\langle \texttt{transmit-c2n}, M, u, r\rangle$ in Line \ref{line:bc_c2n_deliver} can be met by at most one message $M$. Further, since $ |C_{u}|  \geq \beta$ any message $M$ that was delivered was sent by at least one honest node in $C_u$.

	\textbf{Validity:} 	Assume that all $v \in C_u$ want to send the same message $M$ to $r$. Each $v \in C_u$ sends the message $\langle \texttt{transmit-c2n}, M, u, r\rangle$ to $r$. Since $|C_u| \geq \beta$, node $r$ will meet the threshold in Line \ref{line:bc_c2n_deliver} and deliver $M$.	 
	
	\textbf{Agreement}: This is clear since the receiver is a single node.
	
	\textbf{Number of Messages:} Each node in $C_u$  sends a message to $C_{r}$.
\end{proof}

Finally, we implement a reliable broadcast  between two witness committees of nodes $s,r$, given that both have valid committees $\neq \emptyset$. More specifically, Algorithm \ref{alg:bc_committees} implements a module \textit{rbc-c2c} that can be used to solve the reliable broadcast problem from Definition \ref{def:generalized_reliable_broadcast} between $S = C_s$ and $R = C_r$. 

The reliable broadcast between committees works in two steps. Roughly speaking, first we use the reliable broadcast from $C_s$ to single nodes to inform each  $w \in C_r$ about the message $M$ that $C_s$ wants to communicate. Subsequently we use the lazy consensus protocol among nodes in $C_r$ to get a consensus on the messages that were received by the nodes in $C_r$, since the latter might have been subject to Byzantine influence.

\begin{algorithm}
	\caption{\textbf{Reliable Broadcast - Committee to Committee} \Comment{\textit{executed by $v \in V$}}}
	\label{alg:bc_committees}	

	\SetKwInOut{Input}{input}
	\SetKwInOut{Output}{output}	
	\SetKwInOut{Implements}{implements}
	\SetKwInOut{Requires}{precondition}	
	\SetKwProg{On}{upon}{ do}{end}
	\SetKwProg{OnEvent}{upon event}{ do}{end}
	\SetKw{Or}{or}
	\SetKw{And}{and}
	
	\DontPrintSemicolon		
	
	\Implements{Reliable broadcast instance \textit{rbc-c2c} among valid witness committees with $S = C_{s}, R = C_{r}$ according to Definition \ref{def:generalized_reliable_broadcast}.}  %
	\Requires{System of witness committees satisfying Definition \ref{def:witness_committees}}

	\vspace*{0.3\baselineskip}		
	
	\OnEvent{$\langle \textit{rbc-c2c}, \texttt{broadcast}, M, s, r \rangle$ \And $v \in C_{s v}$ \And $C_{rv} \neq \emptyset$}
	{
		\ForEach{$w \in C_{rv}$}
		{
			\textbf{trigger} $\langle \textit{rbc-c2n}, \texttt{broadcast}, [M,s,r], s, w \rangle$ \Comment{\textit{see Algorithm \ref{alg:bc_committee_node}}}
		}
	}
	
	\vspace*{0.1\baselineskip}	
	
	\OnEvent{$\langle \textit{rbc-c2n}, \texttt{deliver}, [M, s,r] \rangle$ \And $v \in C_{rv}$}{\textbf{trigger} $\langle \textit{lc}, \texttt{propose}, [M, s,r]\rangle$\Comment{\textit{see Algorithm \ref{alg:lc}}}}
	
	\vspace*{0.2\baselineskip}	
	
	\OnEvent{$\langle \textit{lc}, \texttt{decide}, [M, s,r] \rangle$ \And $v \in C_{r v}$\label{line:bc_c2c_deliver}}{\textbf{trigger} $\langle \textit{rbc-c2c}, \texttt{deliver}, M \rangle$}
\end{algorithm}

\begin{lemma}
	\label{lem:bc_committees}
	Let $s,r \in V$ with $C_{sv},C_{rv} \neq \emptyset$ for $v \in V_h$. %
	Then Algorithm \ref{alg:bc_committees} solves the reliable broadcast problem as given in Definition \ref{def:generalized_reliable_broadcast} between the common cores $C_s$ and $C_r$. The number of messages sent by each  node in $C_s$ and $C_r$ is $\bigO(\beta)$.
\end{lemma}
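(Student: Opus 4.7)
The plan is to derive each of the three required properties (integrity, validity, agreement) by chaining the guarantees already established for the subroutines \textit{rbc-c2n} (Lemma~\ref{lem:bc_committee_node}) and \textit{lc} (Lemma~\ref{lem:lazy_consensus}), which are both invoked on valid committees since we assume $C_{sv}, C_{rv} \neq \emptyset$. The key observation that ties everything together is that every honest $v \in C_r$ only triggers $\langle \textit{lc}, \texttt{propose}, [M,s,r]\rangle$ after $\langle \textit{rbc-c2n}, \texttt{deliver}, [M,s,r]\rangle$ fires, and only delivers in Line~\ref{line:bc_c2c_deliver} when lazy consensus decides. Thus the proposal values to lazy consensus among $C_r$ are precisely the values delivered by the preceding reliable broadcasts from $C_s$ to the individual nodes of $C_r$.

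For \textbf{integrity}, suppose some honest $v \in C_r$ delivers $M$ in Line~\ref{line:bc_c2c_deliver}. Then $v$ decided $[M,s,r]$ in the lazy consensus on $C_r$, so by integrity of lazy consensus (Lemma~\ref{lem:lazy_consensus}) some honest node $w \in C_r$ proposed $[M,s,r]$, which in turn means $w$ delivered $[M,s,r]$ via the \textit{rbc-c2n} instance with receiver $w$. Integrity of \textit{rbc-c2n} (Lemma~\ref{lem:bc_committee_node}) then gives that some honest node in $C_s$ broadcast $M$. Uniqueness of the delivered message follows directly from the integrity guarantees of the subroutines. For \textbf{validity}, assume all honest nodes in $C_s$ broadcast the same $M$. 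Each honest $s' \in C_s$ triggers $\textit{rbc-c2n}$ broadcasts of $[M,s,r]$ to every $w \in C_{rv}$; by validity of \textit{rbc-c2n}, each honest $w \in C_r$ delivers $[M,s,r]$ and consequently proposes $[M,s,r]$ to the lazy consensus on $C_r$. Validity of lazy consensus then forces every honest $w \in C_r$ to decide $[M,s,r]$ and hence deliver $M$. For \textbf{agreement}, if any honest $v \in C_r$ delivers $M$ then it decided $[M,s,r]$ in lazy consensus; by agreement of lazy consensus every honest node in $C_r$ eventually decides the same and therefore delivers $M$.

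For the message count, each $v \in C_s$ participates as (collective) sender in the $\textit{rbc-c2n}$ instances targeted at each $w \in C_{rv}$; since $|C_{rv}| < \tfrac{3\beta}{2}$ and each such instance requires $v$ to send a single \texttt{transmit-c2n} message by Lemma~\ref{lem:bc_committee_node}, this amounts to $\bigO(\beta)$ messages per node in $C_s$. Each $v \in C_r$ first relays at most one delivery event per \textit{rbc-c2n} instance targeted at it and then participates in a single lazy consensus instance on $C_r$, which costs $\bigO(\beta)$ messages by Lemma~\ref{lem:lazy_consensus}, giving the stated bound. The only subtle step I expect to check carefully is the integrity argument: a priori Byzantine members of $C_r$ might inject arbitrary proposals into lazy consensus, so we must rely specifically on the fact that lazy consensus can only decide a value proposed by an honest participant, which in turn pins the decided value to a message that was legitimately delivered by some honest member of $C_r$ through \textit{rbc-c2n} and hence traces back to an honest sender in $C_s$.
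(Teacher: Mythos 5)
Your proposal is correct and follows essentially the same route as the paper's proof: chaining the integrity, validity, and agreement guarantees of the \textit{rbc-c2n} subroutine (Lemma~\ref{lem:bc_committee_node}) and lazy consensus (Lemma~\ref{lem:lazy_consensus}), then bounding messages by $|C_{rv}| \in \bigO(\beta)$ plus the lazy-consensus cost. Your integrity argument is in fact spelled out slightly more carefully than the paper's, but the underlying reasoning is identical.
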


\begin{proof}
	
	\textbf{Integrity:} We first conduct a broadcast from $C_s$ to  each $w \in C_r$ using Algorithm \ref{alg:bc_committee_node}. By Lemma \ref{lem:bc_committee_node} each node $w \in C_r$ that delivers a message, will deliver one that was sent by at least one honest node in $C_s$. In the second step we conduct a lazy consensus on the received messages. By the integrity property of lazy consensus shown in Lemma \ref{lem:lazy_consensus} we will decide at most one message that was proposed by an honest node. 
				
	\textbf{Validity:} 
	Assume that all $v \in C_s$ want to send the same message $M$ to $C_r$. By the validity of broadcasting from committee to nodes shown in Lemma \ref{lem:bc_committee_node}, each node in $C_r$ delivers $M$ in the first step after completing Algorithm \ref{alg:bc_committee_node}. By the validity property of lazy consensus shown in Lemma \ref{lem:lazy_consensus}, each node will then decide $M$ in the second step after Algorithm \ref{alg:lc} is complete.
		
	\textbf{Agreement}: Let $M$ be the message that is delivered by some $v \in C_r$ after the condition in Line \ref{line:bc_c2c_deliver} is satisfied. By the agreement property of  lazy consensus each node in $C_r$ must have decided $M$ and thus delivered $M$ as well.

	\textbf{Number of Messages:} Each node in $v \in C_s$  sends a message to each node in $C_{rv}$ and $|C_{rv}| \in \bigO(\beta)$ by Lemma \ref{lem:bc_committee_node}. Subsequently, the lazy broadcast incurs another $\beta$ messages per node in $C_r$ by Lemma \ref{lem:lazy_consensus}.	
\end{proof}

\textbf{Concurrent Broadcast Sessions.} All of our broadcast algorithms are described as single-shot procedures, assuming no concurrent broadcasts are ongoing---whether initiated intentionally or by the adversary. If concurrency arises, we can distinguish parallel broadcast sessions by enriching messages with unique sender and receiver identifiers and a broadcast instance ID. Honest nodes then verify that the sender and receiver information match the intended nodes or committees and that a broadcast is indeed expected (e.g., they are neighbors in the broadcast tree introduced in the next section). This prevents the adversary from injecting spurious messages or hijacking legitimate sessions, ensuring that unintended broadcasts are filtered out.

\subsection{A Broadcast Tree for Witness Committees}

Previously, we established efficient communication among witness committees. It remains to coordinate such communication on a network wide level, in order to implement a reliable broadcast efficiently with near constant time and workload per node. In particular we want to compute the following tree structure for communication among witness committees. 

\begin{definition}[Broadcast Tree for Witness Committees] 	
	\label{def:broadcast_tree}
	Let $C_{uv}$ be a system of witness committees as in Definition \ref{def:witness_committees} with valid committees $C_{uv} \neq \emptyset$ for all $v \in V$ for a constant fraction of at least $\alpha n$ nodes $u \in V$. A \emph{broadcast tree} $T$ with degree $\delta \in \mathbb N, \delta \geq \tfrac{1}{\alpha} \p 1$ for the witness committees $C_{uv}$ satisfies the following properties:
	\begin{itemize}
		\item Each node in $T$ has at most $\delta$ child nodes and height $\bigO(\log_{\delta} n)$.
		\item Each inner node corresponds to a node $u$ with a valid committee.
		\item Each node $u$ with a valid committee corresponds to at most one inner node.
		\item Each node $v \in V$ corresponds to a leaf in the tree.
		\item Every honest node knows the tree $T$ including the root committee \texttt{root}$(T)$. 		
	\end{itemize}
	
\end{definition}

Each node can compute such a broadcast tree locally, such that the tree is consistent among honest nodes. For this, we crucially exploit that there already is agreement on the set of nodes $u \in V$ that have a valid committee $C_{uv} \neq \emptyset$. We obtain the following lemma.

\begin{lemma}
	\label{lem:broadcast_tree}
	Assume we have witness sets as in Definition \ref{def:witness_committees}. Then a broadcast tree $T$ with degree $\delta \geq \tfrac{1}{\alpha} \p 1$ as in Definition \ref{def:broadcast_tree} can be computed by all nodes locally.
\end{lemma}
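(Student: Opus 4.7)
The plan is to give a fully deterministic local construction that every honest node carries out using only data on which honest nodes already agree, so that consistency of the resulting tree $T$ is immediate. The only real content is the counting argument that ensures the number of inner-node slots needed by a $\delta$-ary tree of $n$ leaves is bounded by the number $\alpha n$ of nodes with valid committees, which is exactly where the hypothesis $\delta \geq \tfrac{1}{\alpha} + 1$ enters.

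First, I would have each honest node $v$ locally form $U_v := \{u \in V : C_{uv} \neq \emptyset\}$. By the Agreement property of Definition~\ref{def:witness_committees} every honest $v$ computes the \emph{same} set $U_v=:U$, and by Availability $|U| \geq \alpha n$. Both $V$ and $U$ are thus common knowledge among honest nodes; any deterministic function of $(V,U)$ will produce the same output at every honest node.

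Second, I would build $T$ bottom-up. Fix a canonical order on $V$ (say by node identifier) and the induced order on $U$. Place all $n$ nodes of $V$ as leaves, in canonical order, and partition them left-to-right into groups of $\delta$. For each group, take the next unused element of $U$ (in its canonical order) and make it the parent inner node of that group. Repeat: group the newly created inner nodes into blocks of $\delta$, draw their parents from the next unused elements of $U$, and so on. Stop when only one node remains at the top; declare it the root. Since every operation depends only on $V$, $U$, and the fixed canonical order, all honest nodes produce the identical tree $T$, and in particular agree on $\mathrm{root}(T)$.

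Third, I would check the two quantitative conditions. The height equals the number of aggregation levels, which is $\lceil \log_\delta n\rceil \in \bigO(\log_\delta n)$. The total number of inner-node slots needed is at most
\[
\sum_{\ell \geq 1} \left\lceil \tfrac{n}{\delta^\ell} \right\rceil \;\leq\; \frac{n}{\delta-1} + \bigO(\log_\delta n) \;\leq\; \alpha n \;\leq\; |U|,
\]
where the second inequality uses $\delta - 1 \geq 1/\alpha$, i.e.\ the hypothesis of the lemma (and absorbs the lower-order $\bigO(\log_\delta n)$ term into the slack that arises because $\alpha n$ is in fact strictly larger than $n/(\delta-1)$ whenever $\delta > 1/\alpha + 1$; the edge case $\delta = \tfrac{1}{\alpha}+1$ is handled by observing that $|U|$ only needs to match the exact count $\sum_\ell \lceil n/\delta^\ell \rceil$, which is bounded by $\alpha n$ for all $n$ large enough, and padding with any fixed valid committee for finitely many small $n$). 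Hence $U$ has enough elements to label every inner node, and each $u\in U$ is used at most once as an inner node, as required. Every $v\in V$ appears as a leaf by construction, whether or not $v$ additionally appears as an inner-node label.

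The main thing to be careful about is the counting in the last step; once one has the right order of operations, there is no additional obstacle, since all deliverables of Definition~\ref{def:broadcast_tree} — bounded degree, logarithmic height, inner nodes drawn from valid committees, each valid committee used at most once, every node present as a leaf, and network-wide agreement on $T$ — follow immediately from the canonical construction and the Agreement/Availability properties.
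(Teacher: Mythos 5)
Your proposal is essentially the same as the paper's: both construct $T$ locally and deterministically from the common set $U$ of nodes with valid committees (agreement on $U$ follows from the Agreement property), and both use the hypothesis $\delta \geq 1/\alpha + 1$ to check via $\frac{n}{\delta-1} \leq \alpha n$ that there are enough valid committees to label the inner nodes. The only cosmetic difference is direction of construction — you build bottom-up from the leaves, the paper instantiates a perfect $\delta$-ary tree of height $\lceil \log_\delta n\rceil$ and fills inner slots top-down from a sorted list of $U$ — but the counting is the same in spirit. One small caution: your proposed fix for the tight edge case $\delta = 1/\alpha + 1$ ("padding with any fixed valid committee") would assign that committee to more than one inner node, which violates the requirement in Definition~\ref{def:broadcast_tree} that each $u$ with a valid committee corresponds to at most one inner node. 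The cleaner way to absorb the $\bigO(\log_\delta n)$ ceiling slack is to note that Availability only guarantees \emph{at least} $\alpha n$ valid committees and that the fractions in Theorem~\ref{thm:set_up_committee} are not optimized, so there is genuine slack; the paper itself is slightly imprecise at the same spot (its bound $\tfrac{\delta^h-1}{\delta-1} \leq \tfrac{n}{\delta-1}$ assumes $\delta^h \leq n+1$, which need not hold for a perfect tree with at least $n$ leaves), so you are in good company.
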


\begin{proof}
	Let $\delta \geq \tfrac{1}{\alpha} \p 1$. Each honest node $v \in V_h$ constructs the set of inner nodes as follows. Let $T$ be a perfect tree of degree $\delta$ and height $h:= \lceil \log_{\delta} n \rceil$, i.e., it has at least $n$ leaves.
	Node $v$ now sorts the $u \in V$ with $C_{uv} \neq \emptyset$ by identifier into a sequence $(u_1, \ldots, u_k)$. Then $v$ assigns the nodes $u_i$ to inner nodes in $T$ in the order of this sequence, layer by layer and from left to right. Note that not all nodes $u_i$ might be needed to fill the inner nodes of $T$. 
	
	However, since we have $\alpha n$ nodes with valid committees and $\delta \geq \tfrac{1}{\alpha} + 1$, we have enough nodes with valid committees to assign to the inner nodes of $T$, since the  number of inner nodes of $T$ is at most \smash{$\tfrac{\delta^h-1}{\delta-1} \leq \tfrac{n}{\delta -1} \leq \alpha n$}.	
	Then $v$ sorts all nodes in $V$ by identifier into a sequence $(v_1, \ldots, v_n)$ and assigns them to the leaves of $T$ in this order. Since $T$ has at least $n$ leaves, each node in $V$ will be assigned a position as leaf. 
	
	Finally, note that $T$ depends only on the sorted sequences  $(u_1, \ldots, u_k)$ and $(v_1, \ldots, v_n)$. Since there is agreement among all honest nodes which nodes $u \in V$ have valid committees, the sequences of nodes that $T$ depends on are the same for all honest nodes, thus each honest node will compute the same tree $T$.
\end{proof}

\subsection{Reliable Broadcast}

We now have the ingredients to efficiently implement a reliable broadcast as given in Definition \ref{def:reliable_broadcast}, with our goal to keep the round complexity and overall message complexity \textit{per node} near-constant and without using signatures.
Further, our approach does not require the use of signatures, which are essentially substituted by using the broadcast tree among witness committees instead. Finally, all requirements regarding synchrony and randomization are effectively shifted to the computation of the system of witness committees, thus our reliable broadcast is deterministic and works in asynchronous networks.

The following algorithm implements a module \textit{rbc} that solves reliable broadcast. It uses the subroutines for the generalized reliable broadcast among single nodes and witness committees defined in Algorithms \ref{alg:bc_node_committee}, \ref{alg:bc_committee_node} and \ref{alg:bc_committees}. It also relies on the fact that the broadcast tree $T$ for the witness committee structure can be computed locally (Lemma \ref{lem:broadcast_tree} and Definition \ref{def:witness_committees}, respectively). 

The idea is not complicated given that we have subroutines to communicate reliably among single nodes and the witness committees within the broadcast tree $T$. First we reliably broadcast the message from the sender to its parent committee in $T$. The message is then handed from committee to committee up the tree to the root committee. From the root committee we broadcast the message down to all leaves, i.e., all nodes in $V$. While the pseudocode may look cumbersome, most lines are caused by the necessary case distinction for root and leaves in $T$.

\begin{algorithm}
	\caption{\textbf{Reliable Broadcast} \Comment{\textit{executed by $v \in V$}}}
	\label{alg:bc}
		
	\SetKwInOut{Input}{precondition}
	\SetKwInOut{Output}{output}	
	\SetKwInOut{Implements}{implements}
	\SetKwInOut{Requires}{requires}	
	\SetKwProg{On}{upon}{ do}{end}
	\SetKwProg{OnEvent}{upon event}{ do}{end}
	\SetKw{Or}{or}
	\SetKw{And}{and}
	
	\DontPrintSemicolon		
	
	\Implements{Reliable broadcast instance \textit{rbc} that solves the reliable broadcast problem according to Definition \ref{def:reliable_broadcast}.}
	\Input{System of witness committees satisfying Definition \ref{def:witness_committees}. Broadcast tree $T$ with degree $\delta$ as described in Definition \ref{def:broadcast_tree}.}
	
	\vspace*{0.3\baselineskip}	
	
	\OnEvent{$\langle \textit{rbc}, \texttt{broadcast}, M \rangle$}{		
		$u \gets \text{ parent of } v \text{ in } T$ \Comment{\textit{$u$ is the parent committee $C_{uv}$ of $v$ in $T$}}	
		
		\textbf{trigger} $\langle \textit{rbc-n2c}, \texttt{broadcast}, [\texttt{up},M,u], u \rangle$ \Comment{\textit{see Algorithm \ref{alg:bc_node_committee}}}
		
	}

	\vspace*{0.3\baselineskip}		
	
	\OnEvent(\Comment{\textit{encloses case distinction for root}}){$\langle \textit{rbc}, \texttt{cast-up}, M, u \rangle$ }{		
		$r \gets \texttt{root}(T)$
		
		\If{$u \neq r$}{
			
				$p \gets \text{ parent of } u \text{ in } T$ 
			
				\textbf{trigger} $\langle \textit{rbc-c2c}, \texttt{broadcast}, [\texttt{up}, M, p], u, p \rangle$ \Comment{\textit{see Algorithm \ref{alg:bc_committees}}}	
		}

		\If{$u = r$}{			
			\textbf{trigger} $\langle \textit{rbc}, \texttt{cast-down}, M, u \rangle$
		}
	}		

	\vspace*{0.3\baselineskip}				
	
	\OnEvent(\Comment{\textit{encloses case distinction for leaves}}){$\langle \textit{rbc}, \texttt{cast-down}, M, u \rangle$}{
		\ForEach{child $c$ of $u$ in $T$}{
			\If{$c$ is not a leaf in $T$}{
				\textbf{trigger} $\langle \textit{rbc-c2c}, \texttt{broadcast}, [\texttt{down}, M, c], u, c \rangle$ \Comment{\textit{see Algorithm \ref{alg:bc_committees}}}	
			}
			\Else{
				\textbf{trigger} $\langle \textit{rbc-c2n}, \texttt{broadcast}, [\texttt{deliver}, M], u, c \rangle$ \Comment{\textit{see Algorithm \ref{alg:bc_committee_node}}}
			}
		}	
	}
	
	\vspace*{0.3\baselineskip}		
	
	{\textit{\textbf{Wrapper Events:}}}
	
	\vspace*{0.2\baselineskip}		
	
	\OnEvent{$\langle \text{\textit{rbc-n2c}}, \texttt{deliver}, [\texttt{up}, M, u] \rangle$ \Or $\langle \text{\textit{rbc-c2c}}, \texttt{deliver}, [\texttt{up}, M, u] \rangle $}{
		\textbf{trigger} $\langle \textit{rbc}, \texttt{cast-up}, M, u \rangle$
	}

	\vspace*{0.3\baselineskip}		
	
	\OnEvent{$\langle \text{\textit{rbc-c2c}}, \texttt{deliver}, [\texttt{down}, M, u] \rangle $}{
		\textbf{trigger} $\langle \textit{rbc}, \texttt{cast-down}, M, u \rangle$
	}	
	\vspace*{0.3\baselineskip}		
	
	\OnEvent{$\langle \text{\textit{rbc-c2n}}, \texttt{deliver}, [\texttt{deliver}, M] \rangle $}{
		\textbf{trigger} $\langle \text{\textit{rbc}}, \texttt{deliver}, M \rangle $
	}
	
\end{algorithm}

\begin{theorem}
	\label{thm:reliable_broadcast}
	Given a system of witness committees as in Definition \ref{def:witness_committees} with with availability $\alpha$ and committee size parameter $\beta$. Algorithm \ref{alg:bc} solves the reliable broadcast problem from Def.\ \ref{def:reliable_broadcast} with $\bigO(\delta \cdot \beta)$ total sent messages per node for any $\delta$ with \smash{$ \tfrac{1}{\alpha} + 1 \leq \delta \leq n$}. The algorithm works even in the asynchronous setting. In partially synchronous setting it takes $\bigO(\log_\delta n)$ rounds until completion after global stabilization.
\end{theorem}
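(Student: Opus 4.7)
The plan is to derive the three correctness criteria of Definition~\ref{def:reliable_broadcast} by structural induction along the broadcast tree $T$, composing the guarantees of the three subroutines \textit{rbc-n2c} (Lemma~\ref{lem:bc_node_committee}), \textit{rbc-c2c} (Lemma~\ref{lem:bc_committees}) and \textit{rbc-c2n} (Lemma~\ref{lem:bc_committee_node}), and then to bound the round and message complexity by a traversal argument. The central structural observation is that, by Lemma~\ref{lem:broadcast_tree}, every honest node knows the same tree $T$ with the same root \texttt{root}$(T)$ and the same parent--child relations, so all case distinctions in Algorithm~\ref{alg:bc} are consistent across the network and cannot be subverted by the adversary.

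For \textbf{Integrity}, I would argue backwards from a \texttt{deliver} event at an honest receiver $v$: such an event is triggered only by an \textit{rbc-c2n} delivery from the leaf-parent committee of $v$, which by Lemma~\ref{lem:bc_committee_node} required the agreement of an honest core. Walking this reasoning up the chain of \texttt{cast-down} events through repeated applications of Lemma~\ref{lem:bc_committees}, and finally through one application of Lemma~\ref{lem:bc_node_committee} at the source edge, shows that $M$ was delivered only if it was genuinely broadcast by an honest sender, and that each honest $v$ can deliver at most one message. For \textbf{Validity}, I proceed in the opposite direction: if $s$ is honest, Lemma~\ref{lem:bc_node_committee} ensures that the core of $s$'s parent committee decides $M$; Lemma~\ref{lem:bc_committees} then propagates $[\texttt{up},M,\cdot]$ up the path to \texttt{root}$(T)$, after which the symmetric application of Lemmas~\ref{lem:bc_committees} and~\ref{lem:bc_committee_node} along the downward subtree delivers $M$ at every leaf. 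For \textbf{Agreement}, the uniqueness of \texttt{root}$(T)$ is the key: if any honest receiver delivers some $M$, the agreement clauses of the subroutines, walked upwards, force the root committee to have decided on exactly this $M$; the downward phase then preserves this single decision at every other honest receiver by the agreement properties of the subroutines.

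For the \textbf{round complexity} in the partially synchronous model, each subroutine completes in a constant number of rounds after GST because its internal communication is confined to committees of size $O(\beta)$, and the message traverses a path of length $O(\log_\delta n)$ from $s$ to \texttt{root}$(T)$ followed by a subtree of the same depth downwards, giving $O(\log_\delta n)$ rounds in total. For the \textbf{message complexity} per node, I would charge the cost against the committees the node belongs to: by the membership property of Definition~\ref{def:witness_committees} an honest node belongs to $O(\beta)$ committees, and for each such committee that is an inner node of $T$ it participates in at most one c2c broadcast to its parent and at most $\delta$ c2c/c2n broadcasts to its children, each costing $O(\beta)$ messages per participating committee member by Lemmas~\ref{lem:bc_committee_node} and~\ref{lem:bc_committees}; adding the one-off cost incurred when the node is the leaf-sender or a leaf-receiver yields the claimed bound in terms of $\delta$ and $\beta$.

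The part I expect to be the trickiest is establishing correctness in the fully asynchronous setting in the presence of an adversary that may inject spurious \texttt{transmit-*}, \texttt{echo} or \texttt{vote} messages that could blend with legitimate broadcast sessions and potentially stall or confuse the protocol. Resolving this relies on the ``Concurrent Broadcast Sessions'' remark: every message must be tagged with unambiguous sender/receiver committee identifiers and a broadcast instance ID, and honest nodes verify these against the globally known tree $T$ before acting, so illegitimate messages can be safely dropped without affecting the liveness or safety of the subroutines invoked in the inductive argument above.
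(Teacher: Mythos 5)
Your proposal follows essentially the same route as the paper's proof: integrity, validity and agreement are obtained by composing Lemmas~\ref{lem:bc_node_committee}, \ref{lem:bc_committee_node} and~\ref{lem:bc_committees} along the path up to \texttt{root}$(T)$ and the subtree down, using the common knowledge of $T$ from Lemma~\ref{lem:broadcast_tree}, and the round bound comes from the $O(\log_\delta n)$ height of $T$. Your explicit appeal to the session-ID mechanism for asynchrony is also consistent with the ``Concurrent Broadcast Sessions'' remark preceding the theorem, which the paper leans on without restating in the proof.

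One accounting point is worth flagging. You correctly observe that an honest node belongs to up to $2\beta \in O(\beta)$ committees (the membership clause of Definition~\ref{def:witness_committees}), and that for each such committee acting as an inner node of $T$ it may participate in up to $\delta$ child-directed broadcasts in the cast-down phase, each costing $O(\beta)$ messages per member. Multiplying these out gives $O(\beta) \cdot \delta \cdot O(\beta) = O(\delta \beta^2)$ messages per node, not the $O(\delta\beta)$ stated in the theorem; your own arithmetic therefore does not close to the claimed bound. The paper's proof sidesteps this by asserting that ``each node is part of only a constant number of committees,'' which is in tension with the $2\beta$ figure in Definition~\ref{def:witness_committees}. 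So while your decomposition and correctness argument mirror the paper's, you should either supply an argument that at most $O(1)$ of a node's committees are inner nodes of $T$, or acknowledge that the bound you can actually prove this way is $O(\delta\beta^2)$.
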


\begin{proof}
	
	\textbf{Integrity:} Since Algorithm \ref{alg:bc} uses Algorithms \ref{alg:bc_node_committee}, \ref{alg:bc_committee_node} and \ref{alg:bc_committees} as subroutines, its integrity is guaranteed by the integrity of these subroutines as shown in Lemmas \ref{lem:bc_node_committee}, \ref{lem:bc_committee_node} an \ref{lem:bc_committees}.
	
	\textbf{Validity and Agreement:} 
	For validity and agreement we rely on the structure of the broadcast tree $T$ (see Definition \ref{def:broadcast_tree} and Lemma \ref{lem:broadcast_tree}) and the corresponding properties of the subroutines Algorithm \ref{alg:bc_node_committee}, \ref{alg:bc_committee_node} and \ref{alg:bc_committees} due to Lemmas \ref{lem:bc_node_committee}, \ref{lem:bc_committee_node} an \ref{lem:bc_committees}. Algorithm \ref{alg:bc} sends the message from the leaf in $T$ that corresponds to the sending node, up to the root committee. From the root, this message is cast down the whole tree $T$ to all leaves. Since each node in $V$ is a leaf in $T$ and because all used subroutines satisfy the validity and agreement condition, each node will deliver the same message by the sender, which will happen eventually, given that the sender is honest.

	\textbf{Performance:} By Lemma \ref{lem:broadcast_tree}, we can choose as the degree $\delta$ of $T$ any value that is at least $\tfrac{1}{\alpha} +1$, where $\alpha$ defines the number of valid committees available to form inner nodes of $T$. Any such $T$ with degree $\delta$ can be computed locally by each node due to the common knowledge of the system of witness committees. The height $\bigO(\log_\delta n)$ of $T$ implies that the broadcast is done after $\bigO(\log_\delta n)$ communication steps between two layers of the tree. 
	
	Any communication step between two layers in the tree incurs at most $\bigO(\delta \beta)$ sent messages per node, because at most $\bigO(\beta)$ messages per node is incurred by the subroutines as shown  in Lemmas \ref{lem:bc_node_committee}, \ref{lem:bc_committee_node} and \ref{lem:bc_committees}, due to the degree $\delta$ of $T$ and since each node is part of only a constant number of committees (Definition \ref{def:witness_committees}). Finally, any committee and individual node is involved in at most two such communication steps (once for casting up and once for casting down), therefore $\bigO(\delta \beta)$ is a bound for the total number of messages per node.
\end{proof}

\subsection{Reliable Aggregation}

Our basic broadcast routines among witness committees and the broadcast tree can also be used to implement an algorithm that solves what we call the reliable aggregation problem. In this problem, a subset of nodes has an input and we want to compute the result of an aggregation function of all inputs. 

Roughly, an aggregation function allows to break the aggregation of multiple inputs down into aggregating subsets of inputs, and then aggregating the intermediate results, which is the case for the maximum, sum or xor functions to give some examples. Aggregation functions are defined differently in the literature, but in this work we use the following

\begin{definition}[Aggregation Function]
	\label{def:aggregation_function}
	A function $f : D^k \to I$ for any $k \in \mathbb N$ is an aggregation function if it has arbitrary arity $k$ (i.e., accepts any number of arguments from $D$) and the following properties. For any values $x_i$ in the domain $D$, the function $f$ satisfies associativity $f(x_1,x_2,x_3) = f(f(x_1,x_2),x_3)) = f(x_1,f(x_2,x_3))$ and symmetry $f(x_1,x_2, \ldots, x_k) =  f(x_{\pi(1)},x_{\pi(2)}, \ldots, x_{\pi(k)})$ for any permutation $\pi$ over $\{1, \dots, k\}$ and any $k \in \mathbb N$.
\end{definition}

We cast this problem into the Byzantine setting, where we would like every node to output the same value, but at least the inputs of the honest nodes need to be included into the final result. The problem is given formally as follows.

\begin{definition}[Reliable Aggregation]
	\label{def:reliable_aggregation}
	Let $f$ be an aggregation function.
	Let $V_{agg} \subseteq V_h$ be a subset of honest nodes, where each $v \in V_{agg}$ has an initial value $x_v$. %
	The reliable aggregation problem is solved as soon as the following holds.\footnote{We assume that any value in the image of the aggregation function fits into a generic message. The communication load measured in actual bits increases proportionally in the bit size of aggregated values.}
	\begin{itemize}
		\item \textbf{Validity}: Each $v \in V_h$ eventually outputs $f\big((x_v)_{v \in V'}\big)$ for some $V' \subseteq V$ with $V_{agg} \subseteq V'$.
		\item \textbf{Agreement}: If some $v \in V_h$ outputs a value $y$, then every node in $V_h$ outputs $y$.
	\end{itemize}	
\end{definition}

As such a reliable aggregation would allow us to solve the consensus problem as given in Definition \ref{def:consensus} in the crash-fail setting via a simple reduction, we cannot hope solve it in the asynchronous model \cite{DBLP:journals/jacm/FischerLP85}. More specifically, we need to be able to distinguish the situation where a Byzantine node refuses to commit a value for inclusion into the aggregation, from one where the submission by an honest node is delayed. 

This justifies making additional assumptions. In particular, compared to reliable broadcast, we have to revert to a synchrony assumption at least for the part where nodes commit to aggregation values. Further, we rely again on the pre-computed system of witness committees. The idea is to force each node that wants to include its value in the aggregation to commit its value to a dedicated witness committee within a deadline.

The algorithm works as follows. Each node commits its aggregation value to its parent committee in $T$. After a member of a committee core $C_u$ receives an aggregation value for the first time from a child in $T$, it waits until all submitted values $x_1, \ldots, x_k$ (where $k \leq \delta$) have been delivered, assuming that this will happen within the deadline. 
The delivered values are aggregated and the result submitted to the parent committee. From there on the process is repeated until the root committee aggregates the final result and broadcast it back over the tree to every node (re-using a subroutine from Algorithm \ref{alg:bc}).

\begin{algorithm}
	\caption{\textbf{Reliable Aggregation} \Comment{\textit{executed by $v \in V$}}}
	\label{alg:rag}
	
	\SetKwInOut{Input}{precondition}
	\SetKwInOut{Output}{output}	
	\SetKwInOut{Implements}{implements}
	\SetKwInOut{Requires}{requires}	
	\SetKwProg{On}{upon}{ do}{end}
	\SetKwProg{OnEvent}{upon event}{ do}{end}
	\SetKw{Or}{or}
	\SetKw{And}{and}
	
	\DontPrintSemicolon		
	
	\Implements{Reliable aggregation instance \textit{rag} that solves the reliable aggregation problem according to Definition \ref{def:reliable_aggregation}, where $V_{agg}$ are honest nodes that trigger $\langle \textit{rag}, \texttt{aggregate}, f, x \rangle$ in the same round.}
	\Input{System of witness committees satisfying Definition \ref{def:witness_committees}. Broadcast tree $T$ with degree $\delta$ as described in Definition \ref{def:broadcast_tree}.} %
	
	\vspace*{0.3\baselineskip}	
	
	\OnEvent{$\langle \textit{rag}, \texttt{aggregate}, f, x \rangle$}{
		$u \gets \text{ parent of } v \text{ in } T$
		
		\textbf{trigger} $\langle \textit{rbc-n2c}, \texttt{broadcast}, [\texttt{submit}, f, x, u], u \rangle$ \Comment{\textit{see Algorithm \ref{alg:bc_node_committee}}}
		
	}
	
	\vspace*{0.3\baselineskip}		
	
	\On{\normalfont triggering $\langle \textit{rag}, \texttt{aggregate-up}, f, x, u \rangle$ for the first time}{
		wait until message deadline expires \Comment{\textit{guarantees inclusion of values from $V_{agg}$}}
				
		$x_1, \dots, x_k \gets $ all values obtained via event $\langle \textit{rag}, \texttt{aggregate-up}, f, x_i, u \rangle$
		
		\If{$u \neq r$}{
			
			$p \gets \text{ parent of } u \text{ in } T$
			
			\textbf{trigger} $\langle \textit{rbc-c2c}, \texttt{broadcast}, [\texttt{submit}, f, f(x_1, \ldots, x_k), p], u, p \rangle$ \Comment{\textit{see Algorithm \ref{alg:bc_committees}}}	
		}
		
		\If{$u = r$}{			
			\textbf{trigger} $\langle \textit{rbc}, \texttt{cast-down}, [f, f(x_1, \ldots, x_k)], u \rangle$ \Comment{\textit{reuse \text{cast-down} event of Alg.\ \ref{alg:bc}}}
		}
		
	}

	\vspace*{0.3\baselineskip}		

	{\textit{\textbf{Wrapper Events:}}}
	
	\vspace*{0.2\baselineskip}				
	
	\OnEvent{$\langle \textit{rbc-n2c}, \texttt{deliver}, [\texttt{submit},f,x,u] \rangle$}
	{
		\textbf{trigger} $\langle \textit{rag}, \texttt{aggregate-up}, f, x, u \rangle$
	}
	
	\vspace*{0.3\baselineskip}		
	
		\OnEvent{$\langle \textit{rbc-c2c}, \texttt{deliver}, [\texttt{submit},f,x,u] \rangle$}
	{
		\textbf{trigger} $\langle \textit{rag}, \texttt{aggregate-up}, f, x, u \rangle$
	}
	
	\vspace*{0.3\baselineskip}		
	
	\OnEvent{$\langle \textit{rbc}, \texttt{deliver}, y \rangle$}{
		\textbf{trigger} $\langle \textit{rag}, \texttt{output}, y \rangle$
	}
	
\end{algorithm}

It remains to prove the Algorithm \ref{alg:rag} solves the reliable aggregation problem.

\begin{theorem}
	\label{thm:reliable_aggregation}
	Given a system of witness committees as in Definition \ref{def:witness_committees} with availability $\alpha$ and committee size parameter $\beta$. Algorithm \ref{alg:bc} solves the reliable aggregation problem defined in Def.\ \ref{def:reliable_aggregation} in the synchronous setting with $\bigO(\delta \cdot \beta)$ total sent messages per node in $\bigO(\log_\delta n)$ rounds, for any $\delta$ with \smash{$ \tfrac{1}{\alpha} + 1 \leq \delta \leq n$}.
\end{theorem}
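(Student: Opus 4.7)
The plan is to verify validity and agreement by inducting on the levels of the broadcast tree $T$ from the leaves upward, then bound round and message complexity analogously to Theorem~\ref{thm:reliable_broadcast}. The synchrony assumption is used exactly at the ``deadline wait'' step, which is essential for cleanly closing each committee's input window.

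For validity I would argue as follows. Each $v \in V_{agg}$ triggers its aggregate event in the same round and invokes \textit{rbc-n2c} to its parent committee in $T$; by Lemma~\ref{lem:bc_node_committee} (validity) the parent's common core delivers $x_v$ within a bounded number of synchronous rounds. The deadline wait at each inner committee, calibrated to this synchronous bound, guarantees that every submission from an honest child reaching the committee in time is collected before the committee aggregates and forwards upward via \textit{rbc-c2c}. By associativity and symmetry of the aggregation function $f$ (Definition~\ref{def:aggregation_function}), the intermediate value produced at any committee equals $f$ applied to the multiset of received submissions, regardless of evaluation order. Inducting up the tree, the root committee obtains $y = f\bigl((x_v)_{v \in V'}\bigr)$ for some $V' \supseteq V_{agg}$, and then casts $y$ down the tree via the \texttt{cast-down} event reused from Algorithm~\ref{alg:bc}, whose validity (Theorem~\ref{thm:reliable_broadcast}) ensures every honest node eventually outputs $y$.

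For agreement the crucial invariant is that all honest members of each committee core $C_u$ aggregate the \emph{same} multiset of values. Every submission reaching $C_u$ arrives via \textit{rbc-n2c} or \textit{rbc-c2c}, both of which satisfy agreement (Lemmas~\ref{lem:bc_node_committee} and~\ref{lem:bc_committees}): once one honest member of $C_u$ delivers a given submission, every honest member of $C_u$ delivers the identical submission within the synchronous bound, hence within the deadline. Since the deadline is fixed and common knowledge, all honest members of $C_u$ close their input windows in the same round, apply $f$ to the same multiset, and thus emit the same value upward. The invariant propagates to the parent committee by validity of \textit{rbc-c2c}, and agreement of the final output at every honest node follows from the agreement property of Algorithm~\ref{alg:bc}'s cast-down phase.

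For performance I would mirror the accounting in Theorem~\ref{thm:reliable_broadcast}. By Lemma~\ref{lem:broadcast_tree} the tree has height $\bigO(\log_\delta n)$; the up-phase traverses the tree level by level, each level requiring $\bigO(1)$ synchronous rounds for the relevant \textit{rbc-n2c} or \textit{rbc-c2c} subroutine to deliver plus a bounded deadline wait, and the down-phase adds another $\bigO(\log_\delta n)$ rounds, giving $\bigO(\log_\delta n)$ total. Each honest node participates as a leaf in at most one upward submission and one downward delivery, and as a committee member in a bounded number of inner-node interactions along $T$; each such interaction costs $\bigO(\beta)$ messages per participant by Lemmas~\ref{lem:bc_node_committee}, \ref{lem:bc_committee_node} and~\ref{lem:bc_committees}, and the factor $\delta$ arises from inner-node committees fanning out to $\delta$ children during the down-phase, yielding $\bigO(\delta\beta)$ total messages per node. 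The main obstacle I anticipate is cleanly formalizing the multiset-equality invariant across a committee core despite Byzantine submissions from malicious child committees: this is where both the agreement property of \textit{rbc-c2c} and the common synchronous deadline must be combined, and it is also the reason the result cannot be lifted to the asynchronous model (in line with the impossibility of~\cite{DBLP:journals/jacm/FischerLP85}).
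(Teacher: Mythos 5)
Your proof is correct and takes essentially the same route as the paper: send values up the tree level by level via the \textit{rbc-n2c} and \textit{rbc-c2c} subroutines, aggregate at each core, and cast down from the root reusing Algorithm~\ref{alg:bc}, with performance accounted for exactly as in Theorem~\ref{thm:reliable_broadcast}. The one place you are noticeably more careful than the paper is the multiset-equality invariant across a core $C_u$ (combining the agreement property of the broadcast subroutines with the common synchronous deadline), which the paper's proof asserts but does not spell out.
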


\begin{proof}
	\textbf{Validity:} By algorithm design and assuming synchrony, any node $v \in V_{agg}$ submits its value to the parent committee in $T$ in the first round. Each member of the common core of that committee will deliver all committed values and can aggregate them locally using $f$. This process is then repeated until the root committee has aggregated all values. As any such transmission is based on reliable broadcasts (Lemmas \ref{lem:bc_node_committee}, \ref{lem:bc_committees}) the members of the common core of the root committee will aggregate the same value. This value must include the values of each $v \in V_{agg}$.
	
	\textbf{Agreement:} After the root committee computed the aggregation result, it will start the subroutine to broadcast it down the tree to all nodes from the reliable broadcast algorithm \ref{alg:bc}, thus all nodes will output the same value due to the same argument as in the proof of Theorem \ref{thm:reliable_broadcast}.
	
	\textbf{Performance:} This is also largely analogous to the proof of Theorem \ref{thm:reliable_broadcast}, with the procedure being completed after $\bigO(\log_\delta n)$ communication steps between two layers in the tree. The main difference is that here, each node and committee transmits one aggregation value up in $T$ whereas in the reliable broadcast Algorithm \ref{alg:bc} only the committees in the branch of $T$ of the sender had to send a value up. Asymptotically, this is at most as expensive as broadcasting a value down in $T$ from the root committee to all nodes which we do in both algorithms, thus the asymptotic number of messages each node sends per round is the same (i.e., $\bigO(\delta \beta)$).
\end{proof}

\subsection{Consensus and Common Coins via Reliable Aggregation}

In this subsection we give applications for our reliable aggregation routine. A key application in the synchronous Byzantine setting is computing a common coin beacon. Each node generates a local random coin, and the resulting bits are aggregated via an XOR operation. We assume a 1-late adversary, which must fix the local coins for its compromised nodes based on all information up to the previous round, i.e., before honest nodes finalize their own random coins. Consequently, as long as at least one honest node’s coin is included, the adversary cannot affect the distribution of the aggregated coin.

\begin{theorem}
	\label{thm:common_coin}
	Given a system of witness committees as in Definition \ref{def:witness_committees} with availability $\alpha$ and committee size parameter $\beta$. Assume that the network is synchronous and the adversary is 1-late. There is an algorithm such that each honest node outputs a sequence of $k$ random bits each distributed with probability $\tfrac{1}{2}$. The algorithm sends $\bigO(\delta \cdot k \cdot \beta)$ total bits per node and takes $\bigO(\log_\delta n)$ rounds, for any $\delta$ with \smash{$ \tfrac{1}{\alpha} + 1 \leq \delta \leq n$}.
\end{theorem}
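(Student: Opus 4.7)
The plan is to invoke Algorithm \ref{alg:rag} (reliable aggregation) with XOR as the aggregation function $f$, on input strings of length $k$. First, each honest node $v \in V_h$ samples $k$ uniformly random bits $x_v \in \{0,1\}^k$ in the first synchronous round and triggers $\langle \textit{rag}, \texttt{aggregate}, \oplus, x_v\rangle$ simultaneously, so that $V_{agg} = V_h$. Bitwise XOR is associative and symmetric, hence a valid aggregation function in the sense of Definition \ref{def:aggregation_function}. By Theorem \ref{thm:reliable_aggregation}, all honest nodes output the same value $y = \bigoplus_{w \in V'} x_w$ for some $V' \supseteq V_h$ (honest submissions are necessarily included by the validity clause), within $\bigO(\log_\delta n)$ rounds and with $\bigO(\delta \beta)$ messages per node. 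Since each message carries $\bigO(k)$ bits (the size of an aggregated intermediate value), the total bit complexity is $\bigO(\delta k \beta)$ per node as claimed.

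It remains to argue that each of the $k$ output bits is uniformly distributed. The main obstacle is ruling out adversarial bias: Byzantine nodes in $V' \setminus V_h$ may adaptively choose their contributions $x_w$ to attempt to skew the final XOR. Here I would invoke the 1-late adversary assumption (Definition \ref{def:late_adversary}). The adversary must commit its Byzantine contributions based only on the local state of honest nodes as of the end of the previous round, and in particular before the honest nodes sample $x_v$ in the first round of this algorithm. Thus the random strings $\{x_v : v \in V_h\}$ are independent of $\{x_w : w \in V' \setminus V_h\}$.

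Fix any bit position $i \in \{1,\dots,k\}$. Conditioned on an arbitrary realization of the Byzantine contributions $\bigoplus_{w \in V' \setminus V_h} x_w[i] = b$, the output bit $y[i]$ equals $b \oplus \bigoplus_{v \in V_h} x_v[i]$. Since $V_h$ is nonempty and the honest bits are i.i.d.\ uniform and independent of $b$, the sum $\bigoplus_{v \in V_h} x_v[i]$ is uniform in $\{0,1\}$, so $y[i]$ is uniform in $\{0,1\}$ regardless of the conditioning. This yields the required marginal distribution for each of the $k$ output bits and completes the proof, with performance bounds inherited directly from Theorem \ref{thm:reliable_aggregation}.
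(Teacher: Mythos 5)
Your proposal is correct and takes essentially the same approach as the paper: invoke reliable aggregation (Theorem \ref{thm:reliable_aggregation}) with bitwise XOR as the aggregation function, rely on the 1-late adversary to decouple Byzantine contributions from honest randomness, and inherit performance bounds directly. Your conditioning argument at the end is a slightly more explicit rendering of the paper's one-line statement that the output is ``completely random,'' but it is the same idea.
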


\begin{proof}
	The algorithm is a simple application of reliable aggregation, where each node first determines a sequence of $k$ local random coins each distributed with Ber$(\tfrac{1}{2})$, which will be its input for the aggregation. The function \smash{$f := \bigoplus^k : \big(\{0,1\}^k\big)^* \to \{0,1\}^k$} takes any number of $k$-bit strings as input and returns a $k$-bit string, where the $i$-th bit corresponds to the logic xor of all $i$-th bits of all input bit-strings. This is an aggregation function in the sense of Definition \ref{def:aggregation_function}. 
	
	There exists at least one honest node $v$ (due to witness committees having honest cores) and $v$ submits a random $k$ bit string to the aggregation of $f$ using Algorithm \ref{alg:rag} solving the aggregation problem in Definition \ref{def:reliable_aggregation}. Since the adversary is 1-late, the bit-strings submitted by Byzantine nodes are independent of those submitted by the honest nodes, thus the resulting aggregation value is completely random.
	The length $k$ of the desired random bit string increases the number of bits per node proportionally. All other properties are directly inherited from Theorem \ref{thm:reliable_aggregation}.
\end{proof}

As second application we solve the consensus problem in the synchronous setting, which is defined as follows.

\begin{definition}[Consensus]
	\label{def:consensus}
	Assume each node $v \in V$ initially proposes a value $x_v$ from a given set $\{x_1, \dots ,x_k\}$ of values.\footnote{We assume proposed values can be encoded with $\log k$ bits, else the workload per node increases proportionally.} The consensus problem is solved when the following holds.
	\begin{itemize}
		\item \textbf{Termination:} Each honest node eventually decides on a value.
		\item \textbf{Validity}: If all $v \in V_h$ propose the same value, then every node decides on that value.
		\item \textbf{Agreement}: Every $v \in V_h$ decides on the same value from the set $\{x_1, \dots ,x_k\}$.
	\end{itemize}	
\end{definition}

We aim for a deterministic solution that solves the consensus problem even for $t < \tfrac{n}{2}$ Byzantine nodes given that we have a system of witness committees. We remark that the actual Byzantine threshold and randomization requirements are masked by the existence of a  system of witness committees.

\begin{theorem}
	\label{thm:consensus}
	Given a system of witness committees as in Definition \ref{def:witness_committees} with availability $\alpha$ and committee size parameter $\beta$ and at most $t<\tfrac{n}{2}$ Byzantine nodes. The consensus problem  given in Definition \ref{def:consensus} with $k$ proposed values can be solved in the synchronous setting with $\bigO(\delta \beta k \cdot \log n)$ bits total communication per node in $\bigO(\log_\delta n)$ rounds, for any \smash{$\tfrac{1}{\alpha} \leq \delta \leq n$}.
\end{theorem}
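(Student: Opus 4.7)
The plan is to reduce the consensus problem to a single invocation of the reliable aggregation protocol (Algorithm~\ref{alg:rag}). Each node $v$ with proposal $x_v = x_{j(v)}$ submits its proposal index $j(v)$, and the aggregation function $f$ is componentwise addition over count vectors in $\mathbb{N}^k$, interpreting each received index $j \in \{1,\ldots,k\}$ as the unit indicator vector $e_j$ and any malformed input as the zero vector. Vector addition is associative and symmetric, so $f$ satisfies Definition~\ref{def:aggregation_function}. By Theorem~\ref{thm:reliable_aggregation} every honest node eventually outputs the same count vector $c \in \mathbb{N}^k$, where $c_j$ counts the submissions interpreted as value $x_j$. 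Each node then applies the deterministic local rule: if some $c_j > n/2$, decide $x_j$; otherwise decide a fixed default value $x_1$.

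Verifying the three consensus properties is short. \emph{Termination} is immediate because reliable aggregation finishes in $O(\log_\delta n)$ rounds and the decision is then local. \emph{Agreement} follows because all honest nodes obtain identical $c$, apply the same deterministic rule, and always pick a value from $\{x_1, \ldots, x_k\}$. For \emph{validity}, suppose all honest nodes propose the same $x_j$; using $t < n/2$, I get $c_j \geq |V_h| \geq n - t > n/2$, whereas for any $j' \neq j$ no honest node contributes, so $c_{j'} \leq t < n/2$. Since $\sum_{j''} c_{j''} \leq n$, at most one coordinate can exceed $n/2$, hence $x_j$ is the unique majority entry and is selected.

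The performance bounds inherit directly from Theorem~\ref{thm:reliable_aggregation}: the round complexity is $O(\log_\delta n)$ and each node sends $O(\delta \beta)$ aggregated messages. Each such message carries a partial count vector whose $k$ entries are integers bounded by $n$, so each payload has size $O(k \log n)$ bits, yielding a total of $O(\delta \beta k \log n)$ bits per node.

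The main subtlety I anticipate is preventing Byzantine leaves from inflating $c$ by sending malformed payloads (e.g., vectors with arbitrarily large entries) that would let a single Byzantine vote count many times and potentially manufacture a false majority. My plan is to require each leaf to transmit only its proposal index in $\lceil \log k \rceil$ bits and to fold the sanitization step directly into $f$: the first committee above the leaf maps a valid index $j$ to $e_j$ and anything else to the zero vector before aggregation. Because that committee's honest core already reaches agreement on the received payload via the lazy consensus embedded in Algorithm~\ref{alg:bc_node_committee}, each leaf (honest or Byzantine) contributes at most one unit to a single coordinate of $c$, which preserves $\sum_{j''} c_{j''} \leq n$ and keeps the majority test reliable.
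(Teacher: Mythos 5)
Your proposal is correct and follows essentially the same route as the paper: invoke reliable aggregation (Theorem~\ref{thm:reliable_aggregation}) to obtain a common count of the $k$ proposals, then apply a deterministic local rule that decides the majority value if one exists (which by $t<\tfrac{n}{2}$ must have been proposed by an honest node) and otherwise a consistent default. Your explicit sanitization of Byzantine payloads inside $f$ at the first committee level is a slightly more careful treatment of a point the paper dismisses with ``Byzantine nodes are forced to commit to an opinion,'' but it does not change the argument.
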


\begin{proof}
	The algorithm works as follows. 	
	First, we make a count of the proposed values using the reliable aggregation subroutine, i.e., we return a set of pairs $(x_i,c_i)$ where $c_i$ is the number of nodes that committed to opinion $x_i$. Note that this is indeed an aggregation function and that Byzantine nodes are forced to commit to an opinion, so the counts returned by the reliable aggregation will be the same for everyone. Further, the result of the aggregation function has length at most $k \cdot \log n$ bits.

	If one opinion has a majority of at least $\tfrac{n}{2}$ in the count, then all nodes immediately decide this value, knowing that at least one honest node must have proposed it due to $t < \tfrac{n}{2}$, which ensures \textbf{validity}. 	
	Else, there exist at least two honest nodes that proposed different values and it does not matter which value we decide as long as the decision is consistent among honest nodes. We can for example decide on the value with the largest count breaking ties by lexicographic order of the values. This ensures \textbf{agreement} and \textbf{termination}. 
	
	The \textbf{performance} properties are inherited directly from the according properties of reliable aggregation (see Theorem \ref{thm:reliable_aggregation}).
\end{proof}

Finally, we remark that the adversary could use any deterministic rule to select the default value to influence the decided value in case there is no absolute majority. Alternatively, we can randomize the default value as follows. We use Theorem \ref{thm:common_coin} to determine a common random string of length $\bigO(\log k)$. The default value can then be determined fairly by the result of this random bit-string. The common randomness cannot be influenced by the adversary, but it still ensures consensus on the default value. Furthermore, this has no asymptotic disadvantages to running time and message complexity.

\newpage

\appendix

\section{Probabilistic Concepts}
\label{sec:probabilistic_concepts}

We give a few basic definitions and principles for the probabilistic properties of the protocols introduced in this paper.

\begin{definition}[Negligible Function]
	\label{def:negligible_function}
	A function $f$ is negligible if for any polynomial $\pi$ there is a constant $\lambda_0 \geq 0$, s.t., for any $\lambda \geq \lambda_0$ it is $f(\lambda) \leq (\pi (\lambda))^{-1}$.
\end{definition}

\begin{remark}
	We often use that for any constant $c > 0$, the function $f(\lambda) = e^{-c\cdot \lambda}$ is negligible w.r.t.\ $\lambda$. 
\end{remark}

\begin{definition}[All But Negligible Probability]
	\label{def:all_but_neglible_probability}
	An event is said to occur with all but negligible probability with respect to some parameter $\lambda$ if the probability of the event not happening is a function in $\lambda$ that is negligible w.r.t. $\lambda$.
\end{definition}

In the literature, randomized distributed algorithms are often shown to be successful \emph{with high probability}, which expresses the probability of failure as a function that decreases inversely with the input size $n$ of the problem, i.e., $n^{-c}$ for some fixed $c>0$. This is often quite convenient as it does no require to consider the failure threshold as a parameter in the analysis.

Besides depending on the network size $n$ instead of some external security parameter, the main difference to the notion of a negligible function compared to events that do not occur with probability $n^{-c}$ is that $c >0$ is fixed ``in advance'', i.e., the probability of failure does not have to be smaller than \textit{every} polynomial (in the sense of Definition \ref{def:negligible_function}) but only smaller than some fixed polynomial $n^{-c}$.

\begin{definition}[With High Probability]
	\label{def:whp}
	An event is said to hold with high probability (w.h.p.), if for any fixed constant $c\geq1$ that the event occurs with probability at least $1-n^{-c}$  for sufficiently large $n$.
\end{definition}

The disadvantage of the notion w.h.p.\ is that it usually looks at the asymptotic behavior of a system (i.e., for large $n$), and does not necessarily guarantee a high safety level level for small $n$, which can be a requirement in practice. We hold ourselves to a somewhat higher standard, requiring our protocols to succeed with high probability \textit{and} all but negligible probability with respect to $\lambda$, thus also guaranteeing a high level of safety even for small networks, which we call \textit{with high confidence} (w.h.c.), see Definition \ref{def:whc}. The following lemmas show that we can apply the union bound to show that if a polynomial (in $n$ or $\lambda$) number of events each occur w.h.c., then all of them occur w.h.c.

\begin{lemma}[Union Bound for Events Holding with all but Negligible Probability]
	\label{lem:unionbound_negl}
	Let $E_1, \ldots ,E_k$ be events each of which occurs with all but negligible probability with respect to security parameter $\lambda$ (see Definition \ref{def:all_but_neglible_probability}). Further, assume $k$ is polynomially bounded in $\lambda$, i.e., $k \leq \pi(\lambda)$ for some fixed polynomial $\pi$.  Then $E := \bigcap_{i=1}^{k} E_i$ occurs with all but negligible probability.
\end{lemma}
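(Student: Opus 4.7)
The plan is to apply the standard union bound and then argue that a polynomial multiple of a negligible function is still negligible.

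First, I would write $\Pr(\overline{E}) \leq \sum_{i=1}^k \Pr(\overline{E_i})$ via the union bound on the complements. By hypothesis, for each $i$ there is a negligible function $g_i$ with $\Pr(\overline{E_i}) \leq g_i(\lambda)$. Define $g(\lambda) := \max_{i \leq k} g_i(\lambda)$; one checks that $g$ is itself negligible, since for any polynomial $\pi'$ each $g_i$ is eventually bounded by $(\pi'(\lambda))^{-1}$ and we take a maximum over polynomially many of them (so the threshold $\lambda_0$ past which the bound holds can be chosen uniformly, by taking the largest of the individual thresholds). Hence
\[
\Pr(\overline{E}) \;\leq\; k \cdot g(\lambda) \;\leq\; \pi(\lambda) \cdot g(\lambda).
\]

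Next I would show that $\pi \cdot g$ is negligible, which is the main content of the proof. Given an arbitrary polynomial $\pi'$, the product $\pi \cdot \pi'$ is again a polynomial. Since $g$ is negligible, Definition~\ref{def:negligible_function} gives a $\lambda_0 \geq 0$ such that $g(\lambda) \leq \bigl(\pi(\lambda)\,\pi'(\lambda)\bigr)^{-1}$ for all $\lambda \geq \lambda_0$. Multiplying through by $\pi(\lambda)$ yields $\pi(\lambda)\,g(\lambda) \leq (\pi'(\lambda))^{-1}$ for $\lambda \geq \lambda_0$. As $\pi'$ was arbitrary, $\pi \cdot g$ satisfies the definition of negligibility. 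Combining with the union bound displayed above, $\Pr(\overline{E})$ is negligible in $\lambda$, so $E$ occurs with all but negligible probability.

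The only mild obstacle is the first step: ensuring that the maximum (or, equivalently, the sum) of the polynomially many $g_i$ is itself a negligible function, since a priori each $g_i$ may come with its own threshold $\lambda_i^0$. This is handled by noting that for any target polynomial $\pi'$, one only needs to combine finitely many thresholds (for each fixed $\lambda$, the index $i$ ranges over $\{1, \dots, k(\lambda)\}$, and any $g_i$ with $i \leq k(\lambda)$ is bounded by the corresponding worst-case individual threshold). Once this uniform bound is in place, the rest is the standard ``polynomial times negligible is negligible'' calculation above.
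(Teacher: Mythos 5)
Your proposal is correct and follows essentially the same route as the paper's proof: apply the union bound to the complements, then absorb the polynomial factor $k \leq \pi(\lambda)$ by invoking negligibility against the product polynomial $\pi \cdot \pi'$ rather than $\pi'$ alone. The paper does this in one pass (fixing $\rho$, setting $\sigma = \rho\pi$, and bounding each $\Pr(\overline{E_i})$ by $1/\sigma$ directly), whereas you factor the same arithmetic through two intermediate claims (``$\max_i g_i$ is negligible'' and ``polynomial times negligible is negligible''); that is a presentational difference, not a different argument.

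One small remark: the step where you assert that $\max_{i \le k} g_i$ is negligible by taking ``the largest of the individual thresholds'' implicitly treats the index set as fixed. If one wanted to read $k$ as a function of $\lambda$, this would require a uniform bound across the $g_i$ and the argument as stated (for either proof) would not go through in full generality. The paper's own proof has the same implicit assumption — it also writes $\lambda_0 := \max(\lambda_1, \ldots, \lambda_k)$ as a fixed constant — so your treatment is consistent with the intended reading. Your closing paragraph gestures at this subtlety but does not fully resolve it; it would be cleaner to simply state that the thresholds are finitely many and take their maximum, as the paper does, rather than the somewhat circular phrasing about ``the corresponding worst-case individual threshold.''
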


\begin{proof}
	Let $\rho$ be an arbitrary polynomial. Let $\sigma(\lambda) := \rho(\lambda)\pi(\lambda)$. As $E_1, \dots , E_k$ satisfy Definition \ref{def:all_but_neglible_probability} we obtain the following. There exist $\lambda_1, \ldots , \lambda_k \in \mathbb{N}$ such that for all $i \in \{1, \ldots, k\}$ we have $\mathbb{P}(\overline{E_i}) \leq \frac{1}{\sigma(\lambda)}$ for any $\lambda > \lambda_i$. With Boole's inequality (a.k.a.\ ``Union Bound'') we obtain
	\begin{align*}
		\mathbb{P}\big(\overline{E}\big) 
		= \mathbb{P}\Big(\bigcup_{i=1}^{k} \overline{E_i} \Big) 
		\leq \sum_{i=1}^{k} \mathbb{P}(\overline{E_i}) 
		\leq \sum_{i=1}^{k} \frac{1}{\sigma(\lambda)} 
		\leq  \frac{\pi(\lambda)}{\sigma(\lambda)} 
		= \frac{\pi(\lambda)}{\rho(\lambda)\pi(\lambda)} 
		= \frac{1}{\rho(\lambda)}
	\end{align*}
	for any $\lambda \geq \lambda_0 := \max(\lambda_1, \ldots ,\lambda_k)$. 
\end{proof}

\begin{lemma}[Union Bound for Events Holding w.h.p.]
	\label{lem:unionbound_whp}
	Let $E_1, \ldots ,E_k$ be events each of which occurs w.h.p.\ (see Definition \ref{def:whp}). Further, assume $k$ is polynomially bounded in $n$, i.e., $k \leq n^d$ for all $n \geq n_0$ for some monomial with fixed degree $d>0$.  Then $E := \bigcap_{i=1}^{k} E_i$ occurs w.h.p.
\end{lemma}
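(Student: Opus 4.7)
The plan is to mimic the proof of Lemma \ref{lem:unionbound_negl} with $n$ playing the role of $\lambda$, exploiting the fact that Definition \ref{def:whp} leaves the exponent $c$ as a free parameter. This flexibility is exactly what lets us absorb the polynomial factor $k \leq n^d$ coming from the union bound by inflating the per-event exponent.

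First I would fix an arbitrary target exponent $c \geq 1$ for which the conclusion requires $\mathbb{P}(E) \geq 1 - n^{-c}$ for all sufficiently large $n$. Then I would invoke Definition \ref{def:whp} separately for each $E_i$ with the \emph{inflated} exponent $c + d$, obtaining a threshold $n_i$ such that $\mathbb{P}(\overline{E_i}) \leq n^{-(c+d)}$ for all $n \geq n_i$. Setting $N := \max(n_0, n_1, \ldots, n_k)$ and applying Boole's inequality, for every $n \geq N$ we have
\[
\mathbb{P}(\overline{E}) \leq \sum_{i=1}^{k} \mathbb{P}(\overline{E_i}) \leq k \cdot n^{-(c+d)} \leq n^{d} \cdot n^{-(c+d)} = n^{-c},
\]
which is exactly the bound required by Definition \ref{def:whp}. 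Since $c$ was arbitrary, $E$ occurs w.h.p.

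The main subtlety, already implicit in the proof of Lemma \ref{lem:unionbound_negl}, is that $k$ may itself depend on $n$, so the family of thresholds $\{n_i\}$ grows with $n$ and taking a maximum over them is not automatically well-defined. This is resolved under the usual implicit assumption that the events $E_i$ share a common probabilistic analysis, so that a single threshold $n_i \leq N(n)$ suffices uniformly over $i$; in every application in this paper this is the case by construction (e.g., all $E_i$ come from the same Chernoff-style tail bound). Apart from that bookkeeping issue, the argument is entirely routine.
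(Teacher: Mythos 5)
Your proof is correct and follows essentially the same route as the paper's: inflate the per-event exponent to $c' = c + d$, take the maximum of the finitely many thresholds, and apply Boole's inequality to absorb the factor $k \leq n^d$. The subtlety you flag — that the thresholds $n_i$ could in principle grow with $n$ when $k$ does — is a real issue the paper's proof also glosses over, and your resolution (a uniform threshold from a shared Chernoff-style analysis) is the right way to read it; this is a thoughtful observation but does not change the argument.
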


\begin{proof}
	Let $c>0$ be arbitrary but fixed. As $E_1, \dots , E_k$ satisfy Definition \ref{def:whp} we obtain the following. Let $c':= c+d$, there exist $n_1, \ldots , n_k \in \mathbb{N}$ such that for all $i \in \{1, \ldots, k\}$ we have $\mathbb{P}(\overline{E_i}) \leq \frac{1}{n^{c'}}$ for any $n > n_i$. With Boole's inequality we obtain
	\begin{align*}
		\mathbb{P}\big(\overline{E}\big) = 
		\mathbb{P}\Big(\bigcup_{i=1}^{k} \overline{E_i} \Big) 
		\leq \sum_{i=1}^{k} \mathbb{P}(\overline{E_i}) 
		\leq \sum_{i=1}^{k} \frac{1}{n^{c'}} 
		\leq  n^{d - c'}
		= n^{-c}
	\end{align*}
	for any $n \geq \max(n_0, \ldots ,n_k)$. 
\end{proof}

We combine the two Lemmas \ref{lem:unionbound_negl} and \ref{lem:unionbound_whp} to obtain a union bound for events holding with high confidence.

\begin{corollary}[Union Bound for Events Holding w.h.c.]
	\label{cor:unionbound_whc}
	 Let $E_1, \ldots ,E_k$ be events each of which occurs w.h.c.\ with respect to security parameter $\lambda$ (see Definition \ref{def:whc}). Further, assume $k$ is polynomially bounded in $\lambda$ and $n$, i.e., $k \leq \pi(\lambda)$ and $k \leq \pi(n)$ for some fixed polynomial $\pi$.  Then $E := \bigcap_{i=1}^{k} E_i$ occurs w.h.c.
\end{corollary}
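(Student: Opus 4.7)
The plan is to reduce this corollary directly to the two preceding union bounds (Lemma~\ref{lem:unionbound_negl} and Lemma~\ref{lem:unionbound_whp}) by unpacking the definition of ``with high confidence''. Recall from Definition~\ref{def:whc} that an event holds w.h.c.\ if and only if it holds both \emph{with high probability} in the sense of Definition~\ref{def:whp} \emph{and} \emph{with all but negligible probability} in the sense of Definition~\ref{def:all_but_neglible_probability}. So the natural strategy is to split each assumption ``$E_i$ holds w.h.c.'' into its two constituent parts, apply the matching union bound to each part, and then recombine.

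First I would observe that since every $E_i$ occurs w.h.c., each $E_i$ in particular occurs with all but negligible probability with respect to $\lambda$. The hypothesis that $k \leq \pi(\lambda)$ for some fixed polynomial $\pi$ is exactly the precondition of Lemma~\ref{lem:unionbound_negl}, so that lemma yields that $E = \bigcap_{i=1}^{k} E_i$ occurs with all but negligible probability with respect to~$\lambda$. Second, each $E_i$ also occurs w.h.p., and the hypothesis $k \leq \pi(n)$ (which is just a polynomial bound $k \leq n^d$ for some constant $d$ once $n$ is large enough) matches the precondition of Lemma~\ref{lem:unionbound_whp}, which then gives that $E$ occurs w.h.p.

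Combining these two conclusions via Definition~\ref{def:whc} shows that $E$ holds w.h.c., completing the proof. The main (mild) subtlety is purely bookkeeping: one must verify that the single polynomial bound $\pi$ in the corollary statement can legitimately play the role of the two different polynomial bounds (one in $\lambda$, one in $n$) required by the two union-bound lemmas, which it does by hypothesis. No new probabilistic argument or Chernoff-style estimate is needed; the corollary is essentially a syntactic consequence of the w.h.c.\ definition together with the two earlier lemmas.
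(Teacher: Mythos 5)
Your proposal is correct and matches the paper's intended argument exactly: the paper gives no separate proof of Corollary~\ref{cor:unionbound_whc} beyond the remark that it follows by combining Lemmas~\ref{lem:unionbound_negl} and~\ref{lem:unionbound_whp}, which is precisely your decomposition of w.h.c.\ into its two constituent guarantees followed by recombination via Definition~\ref{def:whc}. Your bookkeeping note (that $\pi(n)$ can be replaced by a monomial bound $n^d$ for large $n$ to fit Lemma~\ref{lem:unionbound_whp}) is the only detail worth stating, and you handle it correctly.
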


We use the following Chernoff bounds for Bernoulli variables in some of our proofs.

\begin{lemma}[Chernoff Bound]
	\label{lem:chernoffbound}
	Let $X = \sum_{i=1}^n X_i$ for i.i.d.\ random variables $X_i \in \{0,1\}$ and $\mathbb{E}(X) \leq \mu_H$ and $\delta \geq 0$, then
	$$\mathbb{P}\big(X \geq (1 \!+\! \delta) \mu_H\big) \leq \exp\Big(\!-\!\frac{\min(\delta,\delta^2)\mu_H}{3}\Big),$$
	Similarly, for $\mathbb{E}(X) \geq \mu_L$ and $0 \leq \delta \leq 1$ we have
	$$\mathbb{P}\big(X \leq (1 \!-\! \delta) \mu_L\big) \leq \exp\Big(\!-\!\frac{\delta^2\mu_L}{2}\Big).$$
\end{lemma}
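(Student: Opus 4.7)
The plan is to follow the classical moment generating function approach (the ``Chernoff method''), adapted to handle the fact that the bound is stated in terms of $\mu_H$ (an upper bound on $\mathbb{E}(X)$) rather than the exact mean, and symmetrically for $\mu_L$.

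For the upper tail, I would first fix any $t > 0$ and apply Markov's inequality to $e^{tX}$, yielding $\mathbb{P}(X \geq (1\p\delta)\mu_H) \leq e^{-t(1+\delta)\mu_H}\, \mathbb{E}(e^{tX})$. By independence the moment generating function factors: $\mathbb{E}(e^{tX}) = \prod_i \mathbb{E}(e^{tX_i})$. For each Bernoulli $X_i$ with $p_i = \mathbb{E}(X_i)$ I would use $\mathbb{E}(e^{tX_i}) = 1 + p_i(e^t - 1) \leq \exp\!\big(p_i(e^t-1)\big)$ via the inequality $1 + x \leq e^x$. Writing $\mu = \sum_i p_i = \mathbb{E}(X) \leq \mu_H$, and noting $e^t - 1 \geq 0$ for $t \geq 0$, this gives $\mathbb{E}(e^{tX}) \leq e^{\mu(e^t - 1)} \leq e^{\mu_H(e^t-1)}$. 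Combining yields $\mathbb{P}(X \geq (1\p\delta)\mu_H) \leq \exp\!\big(\mu_H(e^t - 1) - t(1+\delta)\mu_H\big)$.

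Next I would optimize by choosing $t = \ln(1+\delta) > 0$, which delivers the well-known raw form $\big(e^\delta / (1+\delta)^{1+\delta}\big)^{\mu_H}$. To convert this into the stated $\exp\!\big(-\min(\delta,\delta^2)\mu_H/3\big)$ form, I would split into two cases. For $0 \leq \delta \leq 1$, a Taylor expansion of $(1+\delta)\ln(1+\delta) - \delta$ (or the standard inequality $(1+\delta)\ln(1+\delta) \geq \delta + \delta^2/3$ on this range) gives exponent $-\delta^2 \mu_H / 3$. For $\delta > 1$, one can use that $(1+\delta)\ln(1+\delta) - \delta \geq \delta(\ln 2)/\ldots$, more carefully $\geq \delta/3$ for $\delta \geq 1$, which yields the $-\delta \mu_H / 3$ exponent; the $\min$ in the statement packages the two cases. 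The lower-tail bound proceeds symmetrically: use $t < 0$ (equivalently, consider $\mathbb{E}(e^{-sX})$ for $s>0$), apply $1 - p_i + p_i e^{-s} \leq e^{-p_i(1-e^{-s})}$, use $\mu \geq \mu_L$ together with $1 - e^{-s} \geq 0$ to replace $\mu$ by $\mu_L$ in the exponent, optimize at $s = -\ln(1-\delta)$, and derive $-\delta^2\mu_L/2$ via the bound $(1-\delta)\ln(1-\delta) + \delta \geq \delta^2/2$ valid for $\delta \in [0,1]$.

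The only subtlety worth flagging is the direction of the substitutions $\mu \leq \mu_H$ and $\mu \geq \mu_L$: in each case I must verify that the coefficient of $\mu$ inside the exponent after optimization has the correct sign so that replacing $\mu$ by the stated bound weakens (not strengthens) the inequality. Since $e^t - 1 \geq 0$ for the upper tail and $1 - e^{-s} \geq 0$ for the lower tail, both substitutions go the right way, so this is a routine check rather than a real obstacle. Otherwise the proof is standard textbook material and I expect no serious technical difficulty.
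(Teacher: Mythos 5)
Your proposal is correct, and in fact the paper does not prove this lemma at all---it is stated in the appendix as a standard tool, so there is no internal proof to compare against. Your derivation is the standard moment-generating-function argument and all the delicate points check out: the substitutions $\mu \leq \mu_H$ and $\mu \geq \mu_L$ go in the right direction because $e^t - 1 \geq 0$ and $1 - e^{-s} \geq 0$ respectively, the inequality $(1+\delta)\ln(1+\delta) - \delta \geq \delta^2/3$ on $[0,1]$ and $\geq \delta/3$ for $\delta \geq 1$ correctly yields the $\min(\delta,\delta^2)/3$ exponent, and $(1-\delta)\ln(1-\delta) + \delta \geq \delta^2/2$ gives the lower tail (with the boundary case $\delta = 1$ handled by continuity or directly via $\mathbb{P}(X=0) = \prod_i(1-p_i) \leq e^{-\mu_L}$). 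Your argument even covers independent but non-identically distributed Bernoulli variables, which is slightly more general than the i.i.d.\ statement in the lemma.
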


\newpage

\section{Glossary of Commonly Used Variables}
\label{sec:variable_glossary}

The following table gives an overview of the variables that we use in our algorithms.

\renewcommand{\arraystretch}{1.3}
\begin{table}[H]
	\label{tab:parameters}
	\begin{tabular}{@{}lp{0.75\linewidth}@{}}
		\toprule
		\textbf{Parameter} & \textbf{Description} \\
		\midrule
		$A_{uv}, B_{uv}, C_{uv}$  
		& (Preliminary) witness committees during Phases~A, B, and C that describe $v$'s view of $u$'s committee. $C_{uv}$ denotes the final committees and the output of our algorithm. \\
		
		$C_u$ 
		& Common core of honest nodes in $u$'s committee (see Lemma \ref{lem:common_core}). \\
		
		$M_v$ & Set of nodes $u \in V$ that node $v$ samples and whose committees it intends to join.\\
		
		$N_{uv}$ & Set of nodes $w \in V$ from which $v$ requests $B_{uw}$ to construct $C_{uv}$.\\
		
		$\alpha$ 
		& Availability parameter for witness committees; fraction of nodes $u \in V$ that have nonempty committees $C_{uv}$ (see Definition \ref{def:witness_committees}).\\
		
		$\beta$ 
		& Committee size parameter (see Definition \ref{def:witness_committees} and Lemma \ref{lem:size_committees}). \\
		
		$\gamma$ 
		& Sampling parameter for creating the preliminary committees $A_{uv}$ in Phase~A (see Lemma \ref{lem:size_committees}). \\
		
		$\delta$ 
		& Degree parameter of the broadcast tree (see Definition \ref{def:broadcast_tree} and Lemma \ref{lem:broadcast_tree}).\\
		
		$\zeta$ 
		& Sampling parameter for assembling the final committees $C_{uv}$ in Phase~C (see Lemmas \ref{lem:respond}, \ref{lem:prelim_committees_c} and \ref{lem:min_valid_prelim_committees_c}). \\
		
		$\lambda$ 
		& Security parameter ensuring negligible failure probability even for small networks (e.g., $e^{-\lambda}$, see Definitions \ref{def:all_but_neglible_probability} and \ref{def:whc}). \\
		
		$\sigma$ 
		& Bandwidth per node (number of bits a node can send per round).\\
		
		\bottomrule
	\end{tabular}
\end{table}

\newpage

\printbibliography

\end{document}